\keywords{Simplicial complexes, Spatial Logic, Model Checking, Geometric Logic}
\newif\iflong
\newcommand{\slreach}{\gamma}
\newcommand{\SLCS}{SLCS\xspace}
\newcommand{\VoxLogicA}{VoxLogicA\xspace}
\newcommand{\topochecker}{Topochecker\xspace} 
\newcommand{\PolyLogicA}{PolyLogicA\xspace}
\newcommand{\PolyVisualiser}{PolyVisualiser\xspace}
\newcommand{\R}{\mathbb{R}}
\newcommand{\powset}{\mathcal{P}}
\newcommand{\card}[1]{\# #1}
\newcommand{\Cl}{\mathcal{C}}
\newcommand{\interior}{\mathcal{I}}
\newcommand{\AP}{\mathrm{AP}}
\newcommand{\tuple}[1]{\left\langle #1 \right\rangle}
\newcommand{\toPL}{\stackrel{PL}{\to}}
\newcommand{\face}{\preceq}
\newcommand{\facer}{\succeq}
\newcommand{\model}[1]{\mathcal{#1}}
\newcommand{\sem}[1]{\left\llbracket #1 \right\rrbracket^{\model{X}}}
\newcommand{\semKM}[1]{\left\llbracket #1 \right\rrbracket^{\KM}}
\newcommand{\Ksc}{\mathrm{K}}
\newcommand{\Lsc}{\mathrm{L}}
\newcommand{\poly}[1]{\left| #1 \right|}
\newcommand{\relint}[1]{\widetilde{#1}}
\newcommand{\KM}{\mathcal{M}}
\newcommand{\toPM}{\stackrel{\pm}{\to}}
\newcommand{\OUT}{\texttt{out}}
\newcommand{\IN}{\texttt{in}}
\newcommand{\vnote}[1]{{\color{blue}VC: #1}}
\newcommand{\lequiv}{\equiv}
\newcommand{\subpath}[2]{{#1}^{#2}}
\newcommand{\m}{{\mathcal X}}
\newcommand{\GL}{\gamma}
\newcommand{\sat}{\mbox{Sat}}
\begin{document}

\title{Geometric Model Checking of Continuous Space}
\titlecomment{The research leading to this publication was partially supported
by the MUR Projects PRIN 2017FTXR7S, “IT- MaTTerS”, PRIN 2020TL3X8X
“T-LADIES”, and PNRR PRI ECS00000017 “THE - Tuscany Health Ecosystem”.
The authors are listed in alphabetical order, as they equally contributed to this work.}

\author[N.~Bezhanishvili]{Nick Bezhanishvili}[a]
\address{Institute for Logic, Language and Computation, University of Amsterdam, The Netherlands}	%
\email{n.bezhanishvili@uva.nl}  %

\author[V.~Ciancia]{Vincenzo Ciancia\lmcsorcid{0000-0003-1314-0574}}[b]	%
\address{Istituto di Scienza e Tecnologie dell'Informazione ``A. Faedo'', Consiglio Nazionale delle Ricerche, Pisa, Italy}	%
\email{\{vincenzo.ciancia,diego.latella,mieke.massink\}@cnr.it}  %

\author[D.~Gabelaia]{David Gabelaia\lmcsorcid{0000-0002-8317-7949}}[c]	%
\address{TSU Razmadze Mathematical Institute, Tbilisi, Georgia}	%
\email{gabelaia@gmail.com} %

\author[G.~Grilletti]{Gianluca Grilletti\lmcsorcid{0000-0002-1631-3648}}[d]	%
\address{Munich Center for Mathematical Philosophy, Ludwig-Maximilians-Universit\"{a}t M\"{u}nchen, Germany}	%
\email{grilletti.gianluca@gmail.com} %

\author[D.~Latella]{Diego Latella\lmcsorcid{0000-0002-3257-9059}}[b]	%

\author[M.~Massink]{Mieke Massink\lmcsorcid{0000-0001-5089-002X}}[b]	%

\begin{abstract}
	Topological Spatial Model Checking is a recent paradigm where model checking techniques are developed for the topological interpretation of Modal Logic. The Spatial Logic of Closure Spaces, \SLCS, extends Modal Logic with reachability connectives that, in turn, can be used for expressing interesting spatial properties, such as ``being near to'' or ``being surrounded by''.
	\SLCS constitutes the kernel of a solid logical framework for reasoning about \emph{discrete} space, such as graphs and digital images, interpreted as quasi discrete closure spaces. 
	Following a recently developed \emph{geometric} semantics of Modal Logic, we propose an interpretation of \SLCS in \emph{continuous space}, admitting a geometric spatial model checking procedure, by resorting to models based on polyhedra. Such representations of space are increasingly relevant in many domains of application, due to recent developments of 3D scanning and visualisation techniques that exploit mesh processing. We introduce \PolyLogicA, a geometric spatial model checker for \SLCS formulas on polyhedra and demonstrate feasibility of our approach on two 3D polyhedral models of realistic size. %
Finally, we introduce a geometric definition of bisimilarity, proving that it characterises logical equivalence.
\end{abstract}

\maketitle

\section{Introduction and Related Work}
\label{sec:intro}

Spatial reasoning and spatial properties are of interest in a vast number of domains, ranging from collective adaptive systems, concerned with the emergence of spatial patterns, and the mobility and distribution of devices in cyber-physical systems to medical imaging and interactive visualisation.
Recently, novel variants of model checking have been developed, moving the focus from checking \emph{temporal} properties to \emph{spatial} properties, see for example~\cite{Gr+09,CLLM14,CLLM16,CLLM16bertinoro,Ne+18,SPATEL,MBLSF21}, and, in fact, also to the combination of reasoning on time and space in spatio-temporal model checking~\cite{CGLLM14,CLMP15,CGLLM15,Ci+18,CLMPV16,Gri16,TsigkanosKG17}.

This so-called \emph{topological} approach to spatial logic and spatial model checking has its origin in the ideas by McKinsey and Tarski~\cite{McKT44}, who recognised the possibility of reasoning about space using topology as a mathematical framework for the interpretation of modal logic (see~\cite{vBB07} for a thorough introduction). The work by Ciancia et al. (see e.g.~\cite{CLLM14,CLLM16}) builds on these theoretical developments using \emph{Closure Spaces}, a generalisation of topological spaces encompassing also 
general \emph{discrete} spatial structures such as graphs~\cite{Gal99,Gal14}, as underlying model for the \emph{Spatial Logic for Closure Spaces} \SLCS.\@ The original version of this spatial logic included  two spatial operators, the \emph{near} operator and the \emph{surrounded} operator. The points in space satisfying `near $\phi$' are all those points close to any point satisfying $\phi$. In other words, the \emph{near} operator is interpreted as a closure operator on space. The points satisfying `$\phi$ surrounded by $\psi$', instead, are all those points satisfying $\phi$ from which no path can be found that passes by a point not satisfying $\phi$ without first passing by a point satisfying $\psi$. In other words, these are those points, satisfying $\phi$, that are surrounded by points satisfying $\psi$.

Two different spatial model-checkers for finite (quasi-discrete) closure spaces were developed based on this foundational work: \topochecker and \VoxLogicA.\footnote{\topochecker: a topological model checker, see
http://topochecker.isti.cnr.it, https://github.com/
vincenzoml/topochecker}$^,$\footnote{\VoxLogicA:\@ See \url{https://github.com/vincenzoml/VoxLogicA}.}
These tools have been used successfully in several applications. For example, for the analysis of the spatial and spatio-temporal aspects of excessive delays in returning bikes in the London bike sharing system~\cite{CLMPV16} and the spatio-temporal aspects of the emergence of Turing patterns~\cite{CLLM16bertinoro,Ne+18} and the robustness of their formation in the presence of noise.  Also in the field of medical imaging these spatial model checking techniques have shown to be very promising as a novel approach to the segmentation of malignant brain lesions~\cite{BCLM19,Ba+20} in 3D MRI scans as well as for the segmentation of normal brain tissue, such as white and grey matter~\cite{BelmonteCLM19}.  A similar approach has also been used for the segmentation of nevi\footnote{Benign and malignant lesions of the skin.} in 2D dermoscopic images~\cite{BBCLM21}. The application of spatial model checking techniques on publicly available datasets, consisting of hundreds of such medical images, showed that an accuracy and computational efficiency can be obtained that is in line with the state-of-the-art in such fields.

So far, spatial model checking approaches focused on \emph{discrete} spatial structures, i.e., discrete sets of  points related by an adjacency relation, namely graphs. 
However, in several domains, e.g., medical imaging and visualisation in virtual reality, reasoning on \emph{continuous} space is required. For example, computerised 3D visualisation of medical images can help physicians to make better diagnoses or treatment plans. Images used for visualisation often consist of \emph{continuous} spatial structures that are divided into suitable areas of different size using mesh techniques such as \emph{triangular surface meshes} or \emph{tetrahedral volume meshes} (see for example~\cite{LevinePRZ2012}).

In the present work we therefore focus on the development of the foundations to reason about and model-check properties of \emph{continuous} space. Interestingly, we will also show how our earlier results on discrete space are, in part, also relevant to model-checking continuous space. 
Unlike the topological semantics, where formulas are interpreted in the powerset algebra of a topological space, in the polyhedral semantics formulas are assigned polyhedral subsets of an $m$-dimensional Euclidean space. Polyhedral subsets can be thought of as finite unions of simplexes (i.e., $n$-dimensional triangles). Using piecewise linear geometry (triangulations, nerves), the work in~\cite{BMMP18}
gives a full characterization of the intuitionistic and modal logics of the class of all compact polyhedra and~\cite{A-DBGM20,ADay19} provide an infinite family of polyhedrally complete modal and intermediate logics.

The version of \SLCS{} we use in this paper is obtained by extending the polyhedral modal logic with a  \emph{spatial reachability} modality $\slreach$, which is a variant of the $\rho$ operator originally proposed in~\cite{BCLM19} that, in turn, can be seen as a spatial version of the Existential Until operator of temporal logics (for more examples of spatial reachability
operators, see also~\cite{NBBL22}).
Roughly speaking, $\slreach(\phi,\psi)$ (pronounced as $\psi$ is reachable through $\phi$) means that a point satisfying $\psi$ is reachable by a  path satisfying $\phi$ along the way. The reachability modality is quite expressive and other operators, relevant for the intended applications (such as ``surrounded'', or ``grow'', discussed in more detail throughout the paper), can be defined based on it. We show that the reachability modality can be defined for polyhedral models. 

One of the main conceptual results of the present paper is that, starting from a polyhedral model $\model{X}$, we can build a relational (Kripke) model $\KM(\model{X})$ satisfying the same formulas of the logic.
In particular, $\KM(\model{X})$ is isomorphic to the face poset of an underlying triangulation of $\model{X}$.
Triangulation is a standard technique of piecewise linear geometry in which each polyhedron is decomposed in simplexes. 
That triangulations play an important role in the logical analysis of polyhedra has already been observed in~\cite{BMMP18,ADay19,A-DBGM20}. However, here we show this also for the language enriched with the reachability modality $\slreach$.
In particular, $\KM(\model{X})$ captures all the properties expressible in this extended language.

The finite state, Kripke-style semantics that we define preserves all the information that can be discerned by \SLCS formulas. This is the key for introducing a novel \emph{geometric model checking} technique to analyse continuous space. We prove that the continuous model of the extended language can be turned into a finite relational model for the same language without losing any of the logical information. A model checking algorithm, along the lines of~\cite{BCLM19}, has been implemented in the free and open source geometric model checker \PolyLogicA, which brings  the core features of \VoxLogicA (global model checking, concurrent multi-core execution, ``memoization'' at the syntactic level) to the continuous space domain. 

Finally, we define a notion of \emph{bisimilarity} between two polyhedral models, and we prove that bisimilarity preserves and reflects logical equivalence. 

 Summarizing, the original contributions in this paper are:

\begin{itemize}
\item The development of a (continuous space) polyhedral semantics for the logic \SLCS building upon recent developments in polyhedral semantics for modal and intuitionistic logic~\cite{BMMP18,A-DBGM20,ADay19};
\item The introduction of a \emph{finite} state relational Kripke model $\KM(\model{X})$ 
for any given (continuous space) polyhedral model $\model{X}$, such that $\KM(\model{X})$ provides a full logical invariant for $\model{X}$ with respect to \SLCS;
\item A novel geometric spatial model checking algorithm for the verification of \SLCS properties on polyhedral models;
\item First proof-of-concept model checking of two realistic-size polyhedral models using \PolyLogicA, the first-in-its-kind prototype implementation of the geometric model checking algorithm;
\item A novel notion of \emph{simplicial bisimilarity} characterising logical equivalence for \SLCS\ formulas.
\end{itemize}
 
\paragraph{Further related work.}
The theoretical framework for spatial model checking of continuous space in the present paper   is based on spatial models involving polyhedra (see also the work in~\cite{CM21}, generalising the semantics of \SLCS to a categorical setting based on hyperdoctrines). Polyhedra also play an important role in development of model checking algorithms for the verification of behavioural properties of real-time and hybrid systems (see for example~\cite{HenzingerH94a,Alur11,BogomolovFGH17,Henzinger2000,AlurGHLM19} and references therein). In that context polyhedra, and their related notions such as template polyhedra~\cite{SankaranarayananDI08,BogomolovFGH17} and zonotopes~\cite{GirardG08}, are obtained from sets of linear inequalities involving real-time constraints on system behaviour and are a natural representation of sets of states of such systems. In the present paper we focus on \emph{spatial} properties of continuous space rather than behavioural properties.

Topology and simplicial complexes also play an important role in the domain of geographic information systems (GIS). In that domain of application simplicial complexes are used as an efficient data structure to store large geospatial data sets~\cite{Breunig2020} in 2D or 3D. They also form the core of several important tools in this domain such as the GeoToolKit~\cite{BalovnevBBCMPSSST04}. This could therefore be in principle another interesting domain of application for spatial model checking techniques, enriching the spatial query languages that are currently used in this database oriented domain.

Model Checking of simplicial complexes is also the topic of a recent, independently developed result by Loreti and Quadrini~\cite{LoQ21}. Therein, a logic  
is defined, sharing similar syntax, but \emph{not its semantics}, with the logic \SLCS\ used in the present paper{.} 
More precisely, the domain upon which formulas are interpreted, are not (sets of) points in polyhedra, but rather (sets of) simplexes. 
This is reflected by the choice of the adjacency relation between objects\footnote{Actually, the paper proposes three different relations, see Definition~1--3 therein.} giving rise to \emph{nearness} and \emph{reachability}. In contrast, the domain of interpretation of our spatial logic \SLCS\ are points of topological spaces, with polyhedral atomic valuations, and reachability is the classical topological notion, defined via paths.

Looking at tools, perhaps closest to our work is the python library \texttt{pymeshlab}~\cite{pymeshlab}, which is able to programmatically modify 3D meshes based on pre-built operators (mostly traditional 3D imaging filters). Since also the selection of connected components is an operator of the library, we envisage that it could be possible, given an SLCS formula, to encode it as a python program containing \texttt{pymeshlab} functions, effectively using the library as a backend for \VoxLogicA. Indeed, using such a library within the context of a traditional programming language does not lay at the same level of abstraction of using a simple declarative language such as \SLCS, with automatic parallelisation, and automatic memoization (caching) of intermediate results, so a direct comparison of the library and our tool would be misleading.

\paragraph{Outline.} Section~\ref{sec:background} introduces the basic geometrical notions and notation. Section~\ref{sec:slcs} recalls \SLCS\ and provides its semantics on polyhedral models.   Section~\ref{sec:polyhedral-model-checking} and Section~\ref{sec:polylogica} present the foundations for geometric model checking and the related model checker \PolyLogicA, respectively, including two proof-of-concept examples on realistic size polyhedra. In Section~\ref{sec:bisim} the concept of simplicial bisimilarity is introduced and it is shown that it characterises logic equivalence for \SLCS\ formulas. Section~\ref{sec:conclusions} concludes the paper with an outlook for future work. Proofs of the most relevant theoretical results are reported in Appendix~\ref{sec:appendix}.

\section{Background}
\label{sec:background}

In this section, we establish the basic geometric notions that we use in this work. See~\cite[Chapter 2]{Maunder1980} for more details on these matters. In particular, we recall simplexes and simplicial complexes. They are frequently used in, for example,  computer graphics and simulation. Polyhedra are the set-theoretic union of simplicial complexes and form the spatial models for the polyhedra model checking approach proposed in subsequent sections.

\begin{defi}[Simplex]\label{def:simplex}
	\iflong A \emph{(geometric)}
	\else
	A
	\fi
	\emph{$d$-simplex} $\sigma$ is the convex hull of a finite set $V = \{v_0,v_1, \dots, v_d\}$ $\subseteq$  $\R^m$ of $d+1$ affinely independent points,\footnote{
		$v_0,\dots,v_d$ are \emph{affinely independent} if $v_1 - v_0,\dots,v_d-v_0$ are linearly independent.
		In particular, this condition implies that $d \leq m$.
	} that is the set
                \[
		\sigma = \left\{  \lambda_0 v_0 + \cdots + \lambda_d v_d  \;\middle|\;  \forall i. \lambda_i \in [0,1] \;\text{and}\; \sum_{i=0}^d \lambda_i = 1  \right\}.
              \]
	The number $d$ is called the \emph{dimension} of $\sigma$ and $v_0,\dots,v_d$ are called its \emph{vertices}.
\end{defi}

\iflong
\noindent
A simplex uniquely determines the set of its vertices~\cite[Proposition 2.3.3]{Maunder1980}, so the concept of dimension and set of vertices are well-defined.
Two simplexes coincide iff they share the same set of vertices.
We say that the set of vertices $V$ \emph{spans} the simplex $\sigma$.
\fi
\iflong
Simplexes are extremely simple geometric objects, dependent essentially only on their dimension~\cite[Proposition 2.3.4]{Maunder1980}. For instance, in three dimensions, simplexes can be just points, segments, triangles, or tetrahedra.
There is one ``limit case'' to keep in mind:
if $V = \emptyset$, then also $\sigma = \emptyset$.
We call this the \emph{empty simplex} and define its dimension to be $-1$.
\fi
\noindent In Definition~\ref{def:simplex}, any subset of $\{v_0,\dots, v_d\}$ is also a set of affinely independent points, and thus it spans a simplex $\tau$:
we call $\tau$ a \emph{face} of $\sigma$ (in symbols $\tau \face \sigma$), and we call it a \emph{proper face} if $\tau \ne \emptyset$ and $\tau \ne \sigma$.

Simplexes are, informally speaking, the simplest linear convex bounded shapes. A two-dimensional simplex is a triangle; a three-dimensional simplex is a tetrahedron, and so on. Note that the two-dimensional faces of a tetrahedron are triangles, whose sides are line segments (one-dimensional simplexes), whose faces (endpoints) are points (zero-dimensional simplexes).

Next, we identify the ``internal part'' of a simplex.
\begin{defi}[Relative interior]
	In terms of the notation of Definition~\ref{def:simplex}, let the \emph{relative interior} of $\sigma$ be the set
		$\relint{\sigma} := \left\{ \sum_{i=0}^d \lambda_i v_i  \;\middle|\;  \forall i. \lambda_i \in (0,1] \;\text{and}\; \sum_{i=0}^d \lambda_i = 1  \right\}$.
\end{defi}
\noindent
Note that if $\sigma$ is non-empty then also $\relint{\sigma}$ is non-empty.
For instance, $b_{\sigma} := \sum_{i=0}^d \frac{1}{d+1} v_i$ (i.e., the \emph{barycentre} of $\sigma$) is an element of $\relint{\sigma}$.
In particular, the relative interior of a point $p$ is $p$ itself, and the relative interior of the empty simplex is the empty simplex itself.
	
We emphasize another interesting property, which follows from the definition of face and that of relative interior: Each simplex $\sigma$ is partitioned by the relative interiors of its faces, that is, $\sigma = \bigcup\{ \relint{\tau} \;|\; \tau \face \sigma \}$. For example, a triangle can be partitioned into its interior (an open triangle), three open segments (sides without endpoints) and the three vertices.

In this paper we adopt the Kuratowski axiomatic characterisation of topological spaces, based on the closure operator.
We denote the latter by $\Cl$.
Given topological space $(X,\Cl)$ we let $\interior$ denote the interior operator, i.e., the dual of $\Cl$ defined as  $\interior(A)= X\setminus \Cl(X\setminus A)$ for all $A \subseteq X$.
Finally, for $X' \subseteq X$, the topological sub-space of $(X,\Cl)$ generated by $X'$ is the pair $(X',\Cl')$, where $\Cl'(A) = \Cl(A) \cap X'$ for all $A \subseteq X'$.
For ease of notation, we will indicate the topological space $(X,\Cl)$ simply as $X$ when no confusion arises.

Simplexes are bounded, convex, compact subspaces of $\R^m$~\cite[Proposition 2.3.3]{Maunder1980}.
Being subsets of an Euclidean space $\R^{m}$, a simplex $\sigma$ inherits the topological structure of the sub-space from $\R^{m}$.
Let us indicate by $\Cl_m$ the closure operator of $\R^{m}$ and by $\Cl_{\sigma}$ the induced closure operator on $\sigma$.
Notice that, since $\sigma$ is a closed subset of $\R^{m}$, the closure of any subset $A \subseteq \sigma$ is the same computed in both topologies, that is, $\Cl_m(A) = \Cl_\sigma(A)$.
An emblematic example is the closure of $\relint{\sigma}$:
$\Cl_\sigma(\relint{\sigma}) = \Cl_m(\relint{\sigma}) = \sigma$.
However, the interior of a set may depend on the topology considered:
for example, $\interior_m(\sigma)$ is either $\relint{\sigma}$ if the dimension of $\sigma$ is $m$, or the empty set otherwise.
On the other hand, $\interior_{\sigma}(\sigma) = \sigma$, independently from the dimension.

More complex spaces are obtained by ``gluing'' simplexes together.

\begin{defi}[Simplicial complex]\label{def:simplicialComplex}
	A
	\iflong \emph{(geometric)}
	\fi 
	\emph{simplicial complex} $\Ksc$ is a finite set of simplexes of $\R^m$ such that:
	\begin{enumerate}
		\item\label{prop:one} If $\sigma \in \Ksc$ and $\tau$ is a face of $\sigma$, then $\tau \in \Ksc$;
		\item\label{prop:two} If $\sigma, \tau \in \Ksc$, then $\sigma \cap \tau$ is a face of $\sigma$ and $\tau$ (possibly the empty simplex).
	\end{enumerate}
\end{defi}

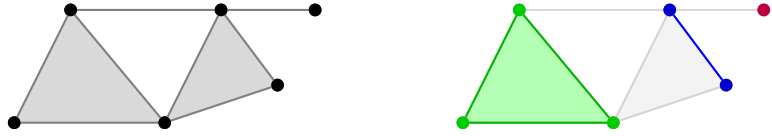
\begin{figure}[t]
	\centering
	\begin{tikzpicture}
		\tikzstyle{point}=[circle,thick,draw=black,fill=black,inner sep=0pt,minimum width=4pt,minimum height=4pt]
		\tikzstyle{edge}=[gray,thick]
		\tikzstyle{tria}=[fill=gray!30]

		\node[point] (a) at (   0,  0) {};
		\node[point] (b) at (   2,  0) {};
		\node[point] (c) at ( .75,1.5) {};
		\node[point] (d) at (2.75,1.5) {};
		\node[point] (e) at ( 3.5, .5) {};
		\node[point] (f) at (   4,1.5) {};

		\fill [tria] (a.center) -- (b.center) -- (c.center) -- cycle;
		\fill [tria] (b.center) -- (d.center) -- (e.center) -- cycle;

		\draw [edge] (a) -- (b);
		\draw [edge] (a) -- (c);
		\draw [edge] (b) -- (c);
		\draw [edge] (b) -- (d);
		\draw [edge] (c) -- (d);
		\draw [edge] (b) -- (e);
		\draw [edge] (d) -- (e);
		\draw [edge] (d) -- (f);

		\node[point] (a) at (   0,  0) {};
		\node[point] (b) at (   2,  0) {};
		\node[point] (c) at ( .75,1.5) {};
		\node[point] (d) at (2.75,1.5) {};
		\node[point] (e) at ( 3.5, .5) {};
		\node[point] (f) at (   4,1.5) {};
	\end{tikzpicture}
	\hspace{4em}
	\begin{tikzpicture}
		\tikzstyle{point}=[circle,thick,draw=black,fill=black,inner sep=0pt,minimum width=4pt,minimum height=4pt]
		\tikzstyle{edge}=[gray,thick]
		\tikzstyle{tria}=[fill=gray!30]

		\begin{scope}[opacity=.3]
			\node[point] (a) at (   0,  0) {};
			\node[point] (b) at (   2,  0) {};
			\node[point] (c) at ( .75,1.5) {};
			\node[point] (d) at (2.75,1.5) {};
			\node[point] (e) at ( 3.5, .5) {};
			\node[point] (f) at (   4,1.5) {};

			\fill [tria] (a.center) -- (b.center) -- (c.center) -- cycle;
			\fill [tria] (b.center) -- (d.center) -- (e.center) -- cycle;

			\draw[edge] (a) -- (b);
			\draw[edge] (a) -- (c);
			\draw[edge] (b) -- (c);
			\draw[edge] (b) -- (d);
			\draw[edge] (c) -- (d);
			\draw[edge] (b) -- (e);
			\draw[edge] (d) -- (e);
			\draw[edge] (d) -- (f);

			\node[point] (a) at (   0,  0) {};
			\node[point] (b) at (   2,  0) {};
			\node[point] (c) at ( .75,1.5) {};
			\node[point] (d) at (2.75,1.5) {};
			\node[point] (e) at ( 3.5, .5) {};
			\node[point] (f) at (   4,1.5) {};
		\end{scope}

		\fill [tria,green!30] (a.center) -- (b.center) -- (c.center) -- cycle;
		\draw[edge,green!70!black] (a) -- (b);
		\draw[edge,green!70!black] (a) -- (c);
		\draw[edge,green!70!black] (b) -- (c);
		\node[point,green!80!black] (a) at (   0,  0) {};
		\node[point,green!80!black] (b) at (   2,  0) {};
		\node[point,green!80!black] (c) at ( .75,1.5) {};

		\draw[edge,blue] (d) -- (e);
		\node[point,blue!80!black] (d) at (2.75,1.5) {};
		\node[point,blue!80!black] (e) at ( 3.5, .5) {};

		\node[point,purple] (f) at (   4,1.5) {};

	\end{tikzpicture}
	\caption{\label{fig:simplicialComplex}On the left, a graphical representation of a simplicial complex is shown in grey and black.
	On the right, we highlight some of the faces of the simplicial complex, of dimension 2, 1 and 0 (the green triangle, blue segment and red point, respectively).
	}	
	
\end{figure}

\begin{figure}[t]
	\centering
	\begin{tikzpicture}
		\tikzstyle{point}=[circle,thick,draw=black,fill=black,inner sep=0pt,minimum width=4pt,minimum height=4pt]
		\tikzstyle{edge}=[gray,thick]
		\tikzstyle{tria}=[fill=gray!30]

		\node[point] (a) at (    0,0) {};
		\node[point] (b) at (    0,2) {};
		\node[point] (c) at ( 2.5,1) {};
		
		\node[point] (d) at ( 1.3,  2) {};
		\node[point] (e) at ( 2.5,  2) {};
		\node[point] (f) at ( 1.9,  0) {};

		\fill [tria] (a.center) -- (b.center) -- (c.center) -- cycle;
		\fill [tria,fill=white!70!gray,opacity=.7] (d.center) -- (e.center) -- (f.center) -- cycle;

		\draw [edge] (a) -- (b);
		\draw [edge] (b) -- (c);
		\draw [edge] (c) -- (a);
		\draw [edge] (d) -- (e);
		\draw [edge] (e) -- (f);
		\draw [edge] (f) -- (d);

		\node[point] (a) at (   0,  0) {};
		\node[point] (b) at (   0,  2) {};
		\node[point] (c) at ( 2.5,  1) {};
		\node[point] (d) at ( 1.3,  2) {};
		\node[point] (e) at ( 2.5,  2) {};
		\node[point] (f) at ( 1.9,  0) {};
	\end{tikzpicture}
	\caption{\label{fig:notSimplicialComplex}A graphical representation of a collection of simplexes (in 2D) \emph{not} forming a simplicial complex.
	In particular, the collection of simplexes depicted consists of 6 points, 6 edges and 2 triangles.
	Notice that Condition $(2)$ of Definition~\ref{def:simplicialComplex} is not respected:
	the intersection between the two triangles is not a simplex of the collection itself. %
	}	
	
\end{figure}
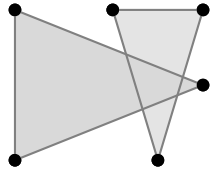

\noindent
Figure~\ref{fig:simplicialComplex} illustrates an example of a simplicial complex $\Ksc$ and highlights some of the simplexes comprising it.
Figure~\ref{fig:notSimplicialComplex} illustrates a collection of simplexes that \emph{does not} form a simplicial complex.
The \emph{dimension} of $\Ksc$ is the maximum of the dimensions of its simplexes, i.e., $2$ in Figure~\ref{fig:simplicialComplex}.
The face relation on simplexes is inherited by simplicial complexes in the expected way:
the face relation $\face$ on simplicial complex $\Ksc$ is the union of the face relations on the simplexes composing $\Ksc$.

Given simplicial complex $\Ksc = \{\sigma_1,\ldots,\sigma_n\}$, the \emph{polyhedron} of $\Ksc$, 
denoted by  $\poly{\Ksc}$, is the set-theoretic union of the simplexes in $\Ksc$. 
Formally, $\poly{\Ksc} := \bigcup_{i=1}^n \sigma_i$. Figure~\ref{subfig:K1} shows a simplicial complex, which is a set of 11 elements -- 4 points, 5 open segments and 2 open triangles. The set-theoretic union of these 11 simplexes is the square shown in Figure~\ref{subfig:SQ} -- the corresponding polyhedron. Note that a different simplicial complex, shown in Figure~\ref{subfig:K2}, has the same (underlying) polyhedron.

As in the case of simplexes, the polyhedron $\poly{\Ksc}$ is a subset of the ambient space $\R^{m}$ and so inherits the topological structure of sub-space of $\R^{m}$.
Let us indicate with $\Cl_{\Ksc}$ the corresponding closure operator --- notice that this operator only depends on the set $\poly{\Ksc}$.
A point of $\poly{\Ksc}$ may belong to several of the simplexes in $\Ksc$.
However,	 there is a natural way to associate to each point of $\poly{\Ksc}$ the ``smallest'' simplex it belongs to.
\begin{lem}\label{lemma:partition}
	Each point of $\poly{\Ksc}$ belongs to the relative interior of exactly one non-empty simplex in $\Ksc$.
	That is, $\relint{\Ksc} := \{ \relint{\sigma} \,|\, \sigma \in \Ksc\setminus \{\emptyset\} \}$ is a partition of $\poly{\Ksc}$.
\end{lem}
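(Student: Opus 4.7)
The statement asserts both that the relative interiors cover $\poly{\Ksc}$ and that distinct relative interiors are disjoint, so my plan is to prove these two halves separately. The existence half follows almost immediately from the earlier observation that any single simplex $\sigma$ is partitioned by the relative interiors of its (nonempty) faces: given $x \in \poly{\Ksc}$, pick any $\sigma \in \Ksc$ with $x \in \sigma$; then $x \in \relint{\tau}$ for a unique nonempty face $\tau \face \sigma$, and by closure under faces (clause~\ref{prop:one} of Definition~\ref{def:simplicialComplex}) $\tau \in \Ksc$. This gives a nonempty simplex of $\Ksc$ whose relative interior contains $x$.

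For disjointness, I would suppose $x \in \relint{\sigma} \cap \relint{\tau}$ with $\sigma, \tau \in \Ksc$ nonempty, and aim to conclude $\sigma = \tau$. The key move is to apply clause~\ref{prop:two}: the intersection $\rho := \sigma \cap \tau$ is a common face of $\sigma$ and $\tau$. Since $x \in \rho$, we have that $\rho$ is a face of $\sigma$ whose underlying point set meets $\relint{\sigma}$. The heart of the argument, and the main obstacle, is the following auxiliary fact which I would isolate and prove first: if $\rho \face \sigma$ and $\relint{\sigma} \cap \rho \neq \emptyset$, then $\rho = \sigma$.

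To establish this auxiliary fact, I would invoke once more the partition of $\sigma$ into relative interiors of its faces: any point of $\rho$ lies in $\relint{\rho'}$ for some face $\rho' \face \rho$, hence $\rho' \face \sigma$; but by hypothesis that point also lies in $\relint{\sigma}$, and since distinct faces of a single simplex have disjoint relative interiors (this is exactly the partition property already recorded before the lemma), we must have $\rho' = \sigma$. From $\rho' \face \rho \face \sigma$ and $\rho' = \sigma$, one concludes $\rho = \sigma$. Applying this symmetrically with $\tau$ in place of $\sigma$ gives $\rho = \tau$, whence $\sigma = \tau$, completing the disjointness step.

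The cleanest exposition is therefore to first quote (or briefly rejustify) the single-simplex partition fact, then derive the auxiliary lemma about faces meeting relative interiors, and finally assemble the covering and disjointness arguments. No hard computation is needed; the delicate point is recognising that clause~\ref{prop:two} is precisely what converts the single-simplex partition into a global partition of $\poly{\Ksc}$.
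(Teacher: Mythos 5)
Your proof is correct. Note, though, that the paper itself gives no proof of Lemma~\ref{lemma:partition}: it is stated as a standard background fact of piecewise linear topology, with the reader referred to \cite[Chapter 2]{Maunder1980}, so there is no in-paper argument to compare against. What you have written is the standard self-contained derivation, and it is sound: covering follows from the single-simplex decomposition $\sigma = \bigcup\{\relint{\tau} \mid \tau \face \sigma\}$ together with closure under faces (clause~\ref{prop:one} of Definition~\ref{def:simplicialComplex}); disjointness follows from clause~\ref{prop:two} via your auxiliary fact that a face of $\sigma$ meeting $\relint{\sigma}$ must equal $\sigma$. Two small points worth making explicit in a polished write-up. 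First, your auxiliary fact relies on the relative interiors of \emph{distinct} faces of a single simplex being pairwise disjoint; the paper's displayed identity only records the union, and the disjointness (which the word ``partitioned'' is carrying) ultimately rests on uniqueness of barycentric coordinates --- a point in $\relint{\rho'}$ has positive coordinates exactly on the vertices of $\rho'$, so the vertex set, and hence the face, is determined by the point. Second, in the final step of the auxiliary fact you pass from $\rho' = \sigma$ and $\rho' \face \rho \face \sigma$ to $\rho = \sigma$; this uses that $\tau \face \sigma$ implies $\tau \subseteq \sigma$, so the chain of inclusions $\sigma \subseteq \rho \subseteq \sigma$ forces equality. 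Both are routine, but since the lemma is being proved from first principles they deserve a sentence each.
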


\noindent
We call $\relint{\Ksc}$ a \emph{simplicial partition} of $|\Ksc|$, and we call its elements the \emph{cells} of the partition.\footnote{
	We use the terminology \emph{cells} in this way for the purposes of this paper; there is no relation between such cells and the so-called \emph{cell complexes} of algebraic topology.
}
Note that distinct simplicial complexes induce distinct partitions, even when they are associated to the same polyhedron, as shown in Figure~\ref{fig:twoKOnePoly}.
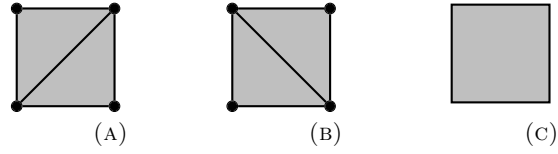
\begin{figure}
       \centering
	\subfloat[\hspace{-0.5cm}\parbox{1in}{}]{\label{subfig:K1}
		\begin{tikzpicture}[scale=1.3]%
			\tikzstyle{point}=[circle,draw=black,fill=white,inner sep=0pt,minimum width=4pt,minimum height=4pt]
			\node (p2)[point] at (1,0) {};
			\node (p3)[point] at (1,1) {};
			\node (p4)[point] at (2,0) {};
			\node (p5)[point] at (2,1) {};
			
			\draw [thick] (p2) -- (p3);
			\draw [thick] (p3) -- (p5);
			\draw [thick] (p2) -- (p5);
			\draw [thick] (p4) -- (p5);
			\draw [thick] (p2) -- (p4);
			
			\begin{scope}[on background layer]
			\fill [fill=lightgray] (p2.center) -- (p5.center) -- (p4.center);
			\fill [fill=lightgray] (p2.center) -- (p3.center) -- (p5.center);
			\end{scope}
			
			\filldraw [black] (p2) circle (1.25pt);
			\filldraw [black] (p5) circle (1.25pt);
                         \filldraw [black] (p4) circle (1.25pt);
			\filldraw [black] (p3) circle (1.25pt);
		\end{tikzpicture}
}\quad\quad\quad
	\subfloat[\hspace{-0.5cm}\parbox{1in}{}]{\label{subfig:K2}
		\begin{tikzpicture}[scale=1.3]%
			\tikzstyle{point}=[circle,draw=black,fill=white,inner sep=0pt,minimum width=4pt,minimum height=4pt]
			\node (p2)[point] at (1,0) {};
			\node (p3)[point] at (1,1) {};
			\node (p4)[point] at (2,0) {};
			\node (p5)[point] at (2,1) {};
			
			\draw [thick] (p2) -- (p3);
			\draw [thick] (p3) -- (p5);
			\draw [thick] (p3) -- (p4);
			\draw [thick] (p4) -- (p5);
			\draw [thick] (p2) -- (p4);
			
			\begin{scope}[on background layer]
			\fill [fill=lightgray] (p2.center) -- (p3.center) -- (p4.center);
			\fill [fill=lightgray] (p5.center) -- (p3.center) -- (p4.center);
			\end{scope}

			\filldraw [black] (p2) circle (1.25pt);
			\filldraw [black] (p5) circle (1.25pt);
                         \filldraw [black] (p4) circle (1.25pt);
			\filldraw [black] (p3) circle (1.25pt);
		\end{tikzpicture}
}\quad\quad\quad
	\subfloat[\hspace{-0.5cm}\parbox{1in}{}]{\label{subfig:SQ}
		\begin{tikzpicture}[scale=1.3]%
			
			\node (p2) at (1,0) {};
			\node (p3) at (1,1) {};
			\node (p4) at (2,0) {};
			\node (p5) at (2,1) {};

			\draw [thick] (p2.center) -- (p3.center) -- (p5.center) -- (p4.center) -- (p2.center) -- cycle;
			
			\begin{scope}[on background layer]
			\fill [fill=lightgray] (p2.center) -- (p3.center) -- (p5.center) -- (p4.center) -- (p2.center);
			\end{scope}

		\end{tikzpicture}
}
\caption{Two different simplicial complexes, (\ref{subfig:K1}) and (\ref{subfig:K2}) leading to the same polyhedron (\ref{subfig:SQ})}
\label{fig:twoKOnePoly}
\end{figure}
From now on, to ease readability, we fix a simplicial complex $\Ksc$,  with the associated $\poly{\Ksc}$ and $\relint{\Ksc}$.

\noindent Finally, we recall the topological notion of \emph{path}. 
\begin{defi}
A topological \emph{path} in a topological space $P$ is a total, continuous function $\pi : [0,1] \to P$, where $[0,1]$ is equipped with the subspace topology of $\R$.
\end{defi} 
With a mild abuse of notation, for $S$ a subset of $[0,1]$ and $\pi$ a path, we write $\pi(S)$ to denote $\{ \pi(x) \mid x \in S\}$.
\iflong
\noindent
As a corollary of Lemma~\ref{lemma:commonTriangulation} we obtain the following result.

\begin{corollary}\label{cor:simplicialPartitionRefinement}
	Given simplicial partitions $\relint{\Lsc}_1, \dots, \relint{\Lsc}_k$ of the same polyhedron $P$, there exists a simplicial partition $\relint{\Ksc}$ refining each $\relint{\Lsc}_i$.
\end{corollary}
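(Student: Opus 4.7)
The plan is to apply Lemma~\ref{lemma:commonTriangulation} to the simplicial complexes $\Lsc_1, \ldots, \Lsc_k$ (each having $P$ as its polyhedron by Lemma~\ref{lemma:partition}) in order to obtain a simplicial complex $\Ksc$ with $\poly{\Ksc} = P$ that \emph{simplicially} refines each $\Lsc_i$, in the sense that every simplex of $\Ksc$ is a subset of some simplex of $\Lsc_i$. The task then reduces to upgrading this simplex-level refinement to a partition-level statement, i.e.\ showing that every cell $\relint{\sigma}$ of $\relint{\Ksc}$ is contained in some cell $\relint{\tau}$ of $\relint{\Lsc}_i$, for each $i$.

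First I would fix $\sigma \in \Ksc$ with $\sigma \neq \emptyset$ and an index $i$, and choose some $\tau^{*} \in \Lsc_i$ with $\sigma \subseteq \tau^{*}$, which is available by the refinement property. Because $\tau^{*}$ is partitioned by the relative interiors of its faces, it suffices to identify a single face $\tau \face \tau^{*}$ such that $\relint{\sigma} \subseteq \relint{\tau}$. The natural candidate is the smallest face of $\tau^{*}$ containing $\sigma$, which is well-defined because the faces of a simplex are closed under intersection.

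The main thing to verify, and the only nontrivial step, is that this smallest face really does contain $\relint{\sigma}$ within its relative interior. The cleanest route is through barycentric coordinates in $\tau^{*}$: writing any $p \in \relint{\sigma}$ as a strictly positive convex combination of the vertices of $\sigma$, and expanding each such vertex in the barycentric coordinates of $\tau^{*}$, one observes that the set of vertices of $\tau^{*}$ with strictly positive coefficient in $p$ is independent of the chosen $p \in \relint{\sigma}$; it coincides with the union, over all vertices of $\sigma$, of their supports in $\tau^{*}$. The face of $\tau^{*}$ spanned by this common support is precisely the smallest face of $\tau^{*}$ containing $\sigma$, so $\relint{\sigma}$ lies in its relative interior. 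Combined with Lemma~\ref{lemma:partition} applied to $\Lsc_i$, this yields the refinement of $\relint{\Lsc}_i$ by $\relint{\Ksc}$ claimed in the statement.
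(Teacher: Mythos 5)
Your proposal is correct, and it follows the only route the paper itself suggests: in the source, the corollary appears solely in an unused ``long'' branch, introduced as an immediate consequence of Lemma~\ref{lemma:commonTriangulation}, followed by a placeholder note asking for a proof --- and the cited lemma is never actually stated anywhere in the text. So there is no paper proof to compare against; what you supply is precisely the step the paper elides, namely that a \emph{simplex-level} common refinement (every simplex of $\Ksc$ contained in some simplex of each $\Lsc_i$) yields a \emph{cell-level} refinement of the induced partitions. Your barycentric-coordinate argument for that step is sound: for $\sigma \subseteq \tau^{*} \in \Lsc_i$, the unique barycentric coordinates of any $p \in \relint{\sigma}$ with respect to $\tau^{*}$ have support equal to the union of the supports of the vertices of $\sigma$, independently of the choice of $p$; the face of $\tau^{*}$ spanned by that support is the smallest face of $\tau^{*}$ containing $\sigma$, its relative interior contains $\relint{\sigma}$, and it belongs to $\Lsc_i$ because simplicial complexes are closed under taking faces, so Lemma~\ref{lemma:partition} gives the claimed refinement of $\relint{\Lsc}_i$. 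The one caveat is that your proof, exactly like the paper's intended one, ultimately rests on the unstated Lemma~\ref{lemma:commonTriangulation} (existence of a common simplicial subdivision of finitely many triangulations of the same polyhedron); this is a standard but nontrivial fact of piecewise linear geometry, and neither you nor the paper establishes it, so your argument is complete only modulo that external result.
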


\gnote{Proof?}

\gnote{Add note on triangulation (as in: homeomorphisms from $\poly{\Ksc}$ to a topological space) and how this works extends to that; add note on spaces which are locally simplexes.}
\fi

\section{Interpreting \SLCS on Polyhedra}
\label{sec:slcs}

In this section we introduce the main theory driving our model checking approach to polyhedra.  In the classical topological tradition, valuations of atomic propositions can be arbitrary subsets of the space. In this work, instead, we restrict our attention  to a specific class of spatial regions, namely unions of cells of a fixed simplicial partition. This simple change makes it possible to define the spatial logic \SLCS on continuous space, while retaining decidability of the model checking problem. 
First of all, we introduce the syntax of the variant of \SLCS that we use in this paper, that is based on the binary modality $\slreach$ instead of $\rho$ of~\cite{BCLM19}; the relationship between $\rho$ and $\slreach$ will be shown in Proposition~\ref{prop:rho-and-gamma}.

\begin{defi}[Syntax]\label{def:slcs-syntax}
	The syntax of the logic \SLCS is:
	\begin{equation*}
		\phi  \;::=\;  \top  \;|\; p \;|\;  \neg \phi  \;|\;  \phi \land \phi  \;|\;  \Box \phi  \;|\;  \slreach ( \phi, \phi )
	\end{equation*}
	where $p$ is an atomic proposition, taken from a fixed finite set $\AP$.
\end{defi}

\noindent
\iflong 

For ease of presentation, we will work with a finite set $\AP = \{q_1, \dots, q_l\}$ of atomic propositions; however, this requirement is not essential and it is easily dispensable.

\else
Thus, we enhance the basic modal logic with a \emph{spatial reachability} operator $\slreach$.
As in the standard topological semantics for modal logic, we interpret formulas as sets of points.
Boolean operators are given their standard set-theoretical interpretation; disjunction $\lor$ is derived via the De Morgan laws. The $\Box$ modality corresponds to \emph{topological} interior $\interior$. 

The formula $\slreach(\phi,\psi)$ (``reach $\psi$ through $\phi$'') is satisfied by a point if there is a path rooted in that point, leading to a point satisfying $\psi$ and whose intermediate points all satisfy $\phi$. We provide a number of examples of the operators after presenting the models and formal semantics of the logic. 
\iflong
There are at least two good reasons to focus on this particular class of spatial objects.
First, this allows for a \emph{finitary} treatment of the topological semantics for modal logic
 \vnote{Expand on this}, that admits \emph{sequential refinements} of the spatial models considered (essentially as a consequence of Lemma~\ref{lemma:commonTriangulation}).
Second, the model checking problem is decidable, although the considered space is continuous. 
This guarantees a \emph{finitary} treatment of the semantics (namely, there are finitely many classes modulo logical equivalence) and decidability of the model checking problems.
Moreover, this work extends for the first time spatial model checking to a class of models defined in continuous space and that has been widely used in modern Computer Science (e.g., being the theoretical basis for the ubiquitous \emph{3D meshes}).
\else
\fi
In the following, we let  $\powset(P)$ denote the powerset of $P$.

\begin{defi}[Model]\label{def:polyhedralModel}
	A \emph{Polyhedral Model} is a triplet $\model{X} = \tuple{ P, \Ksc, V }$, where
		$P \subseteq \R^d$ is a polyhedron, $\Ksc$ is a simplicial complex such that $P = |\Ksc|$, and
		$V: \AP \to \powset(P)$ is a valuation such that $V(p)$ is a union of cells of $\relint{\Ksc}$.
\end{defi}

\noindent
Polyhedral models are essentially topological models with some extra restrictions on the valuation:
$P$ plays the role of the topological space and $V$ is used to interpret atomic propositions as specific subsets of this space, namely those that are the union of a finite number of simplicial cells.
\iflong 
The reader might wonder why we do not restrict the range of $V$ to polyhedra instead of union of cells.
The reason is that we want to interpret modal operators with their standard topological interpretation, and the spatial regions expressible this way are more complex than polyhedra (e.g., the complement of a polyhedron is not in general a polyhedron).
\fi
From now on, fix a polyhedral model $\model{X} = \tuple{ P, \Ksc, V }$. 

\iflong 
We are now ready to introduce the semantics of our logic.
\fi
\begin{defi}[Semantics]\label{def:slcs-semantics}
	Given $x \in P$, satisfaction $\model{X}, x \vDash \phi$ over formulas $\phi$ is given by the following inductive clauses, where we let $\sem{\phi}$ denote the set $\{ x\in P \,|\, \model{X}, x \vDash \phi  \}$:
	\begin{equation*}
	\begin{array}{lcl}
		\model{X}, x \vDash \top
			&\multicolumn{2}{l}{\text{always holds}}  \\
		\model{X}, x \vDash p
			&\iff
			&x \in V(p) \;\,\text{for } p \in \AP  \\
		\model{X}, x \vDash \neg \phi
			&\iff
			&\model{X}, x \nvDash \phi  \\
		\model{X}, x \vDash \phi \land \psi
			&\iff
			&\model{X}, x \vDash \phi \;\text{and}\; \model{X}, x \vDash \psi  \\
		\model{X}, x \vDash \Box \phi
			&\iff
			&x \in \interior_{P}(\sem{\phi}) \\
		\iflong
		\model{X}, x \vDash \slreach( \phi, \psi )
			&\iff
			&\exists \pi : [0,1] \to P.
				\begin{cases}
					\pi(0) = x  \\
					\forall r \in (0,1).\; \model{X}, \pi(r) \vDash \phi  \\
					\model{X}, \pi(1) \vDash \psi
				\end{cases}
		\else
		\model{X}, x \vDash \slreach( \phi, \psi )
			& \iff
			& \text{there exists a path $\pi$ such that}\\
			&&		\pi(0) = x,\,
			                 \pi(1) \in \sem{\psi}  \text{and }
					\pi((0,1)) \subseteq \sem{\phi} 					
		\fi
	\end{array}
	\end{equation*}
\end{defi}
\noindent
\iflong

\gnote{Add definition of topological interior or rephrase definition of $\Box$.}
\fi
\noindent
The definition of the satisfaction relation for  the standard operators of modal logic is  the usual
one for the classical topological interpretation. In particular, note the interpretation of $\Box \phi$ as the \emph{topological interior} with respect to the topology of $P = \poly{\Ksc}$, intuitively expressing that point $x$ is in the ``internal'' part of the set of points satisfying $\phi$.
Notice that the closure operator $\Cl_{P}$ can be obtained as the dual of topological interior $\Diamond \phi = \neg (\Box \neg \phi)$.
Figure~\ref{fig:example-model-A-and-formulas} illustrates these operators and their combination applied on a simple polyhedral model.

Regarding spatial reachability, a point $x$ satisfies  $\slreach( \phi, \psi )$ in model
$\model{X}$ if there is a path $\pi$ rooted in $x$ leading to a point $y$ satisfying $\psi$; in addition, all the points that lay in $\pi$, except $x$ and $y$, are required to satisfy $\phi$. Indeed, several different variants of reachability could be defined using this operator.
\iflong
For instance, 
$\slreach' ( \phi, \psi )$ such that
$\model{X}, x \vDash \slreach'( \phi, \psi) \iff $ \emph{there is path $\pi$ such that $\pi(1) \in \sem{\psi}$ and $\pi([0,1)) \subseteq \sem{\phi}$} is equivalent to $\phi \land \slreach( \phi, \psi )$; similarly
$\slreach'' ( \phi, \psi )$ such that
$\model{X}, x \vDash \slreach''( \phi, \psi) \iff $ there is path $\pi$ such that $\pi(1) \in \sem{\psi}$ and $\pi((0,1]) \subseteq \sem{\phi}$ is equivalent to $ \slreach( \phi, \psi \land \phi)$. 
\fi

As a prominent example, the reachability modality $\rho \, \psi\,[\phi]$ introduced in~\cite{BCLM19}, that we also employ to introduce some derived operators, can be defined as $\rho\, \psi [\phi] := \psi \lor \GL(\phi,\psi)$. Actually, the two operators are inter-definable, by letting $\slreach(\phi,\psi) :=  \rho\,(\phi \land \rho\,\psi[\phi])[\phi]$ (see Proposition~\ref{prop:rho-and-gamma}).
In this work we opt to use $\GL$ since, in the context of polyhedral models, its definition is more concise.
\iflong
 The two modalities are different; for instance, a point satisfying $\psi \land \neg\phi$ satisfies $\rho\, \psi\,[\phi]$ but does not satisfy $\slreach(\phi,\psi)$. However, Proposition~\ref{prop:rho-and-gamma} below shows that the operators $\rho$ and $\slreach$ are inter-definable. 
In this paper, instead, we follow the tradition in topology by basing $\slreach$ on right-closed paths;
this also simplifies proofs by saving an index.

\fi

Another relevant spatial modality is the \emph{surrounded} operator ${\mathcal S}$ (e.g.,~\cite{CLLM16,LinkerPS20,TsigkanosKG17,Ne+18} use it as a primitive of the language).
A point $x$ satisfies $\phi \, {\mathcal S} \, \psi$ if it lays in an area whose points satisfy $\phi$, and that is limited (i.e., surrounded) by points that satisfy $\psi$.
In other words, it is not possible to exit this area without passing by a point satisfying $\psi$.
Following~\cite{BCLM19}, we can define the operator ${\mathcal S}$ on polyhedral models in terms of $\rho$ through the following expression: $\phi \, \land \, \neg \rho \, (\neg (\phi \vee \psi))[\neg \psi] $.

Some examples involving the reachability operator are shown in Figure~\ref{fig:example-model-and-formulas}. We refer to the caption of that figure for more detailed explanation. Notably, we illustrate the derived operator $grow(a,b)$, that also played an important role in the brain tumour segmentation procedure presented in~\cite{BCLM19}. The operator $grow$ is reminiscent of the technique of \emph{region growing} in Medical Imaging, and it is used to characterise those areas of space satisfying $b$ that are in contact with areas of space satisfying $a$, or, in other words, the operator lets $a$ ``grow'' inside $b$ (and no further). The formal definition is $grow(\phi_1,\phi_2) := \phi_1 \lor (\phi_2 \land \rho\, \phi_1[\phi_2])$.

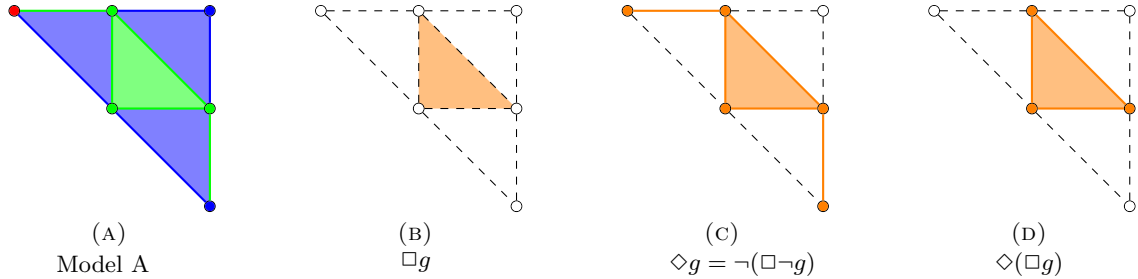
\begin{figure}[t]
       \centering
	\subfloat[\hspace{-0.5cm}\parbox{1in}{Model A}]{\label{subfig:modelA}
	\begin{tikzpicture}[scale=1.3]%
			\tikzstyle{point}=[circle,draw=black,fill=white,inner sep=0pt,minimum width=4pt,minimum height=4pt]
			  \node (p2)[point] at (0,2) {};
			\node (p4)[point] at (1,1) {};
			\node (p5)[point] at (1,2) {};
			\node (p6)[point] at (2,0) {};
			\node (p7)[point] at (2,1) {};
			\node (p8)[point] at (2,2) {};

			\draw [green,thick](p4) -- (p5);
			\draw [green,thick](p4) -- (p7);
			\draw [green,thick](p5) -- (p7);
			\draw [blue,thick](p2) -- (p4);
			\draw [green,thick](p2) -- (p5);
			\draw [blue,thick](p4) -- (p6);
			\draw [green,thick](p6) -- (p7);
			\draw [blue,thick](p7) -- (p8);
			\draw [blue,thick](p5) -- (p8);
			
			\begin{scope}[on background layer]
			\fill [fill=green!50](p4.center) -- (p5.center) -- (p7.center);
			\fill [fill=blue!50](p2.center) -- (p4.center) -- (p5.center);
			\fill [fill=blue!50](p4.center) -- (p6.center) -- (p7.center);
			\fill [fill=blue!50](p5.center) -- (p7.center) -- (p8.center);	
			\end{scope}
			
			\filldraw [red] (p2) circle (1.25pt);
			\filldraw [green] (p4) circle (1.25pt);
			\filldraw [green] (p5) circle (1.25pt);
			\filldraw [blue] (p6) circle (1.25pt);
			\filldraw [green] (p7) circle (1.25pt);
			\filldraw [blue] (p8) circle (1.25pt);

		\end{tikzpicture}
		}\hfill
	\subfloat[\hspace{-0.0cm}\parbox{1in}{$\Box g$}]{\label{subfig:Af1}
                 \begin{tikzpicture}[scale=1.3]%
			\tikzstyle{point}=[circle,draw=black,fill=white,inner sep=0pt,minimum width=4pt,minimum height=4pt]
			  \node (p2)[point] at (0,2) {};
			\node (p4)[point] at (1,1) {};
			\node (p5)[point] at (1,2) {};
			\node (p6)[point] at (2,0) {};
			\node (p7)[point] at (2,1) {};
			\node (p8)[point] at (2,2) {};

			\draw [dashed](p4) -- (p5);
			\draw [dashed](p4) -- (p7);
			\draw [dashed](p5) -- (p7);
			\draw [dashed](p2) -- (p4);
			\draw [dashed](p2) -- (p5);
			\draw [dashed](p4) -- (p6);
			\draw [dashed](p6) -- (p7);
			\draw [dashed](p7) -- (p8);
			\draw [dashed](p5) -- (p8);
			
			\begin{scope}[on background layer]
			\fill [fill=orange!50](p4.center) -- (p5.center) -- (p7.center);
			\end{scope}
			
			\filldraw [white] (p2) circle (1.25pt);
			\filldraw [white] (p4) circle (1.25pt);
			\filldraw [white] (p5) circle (1.25pt);
			\filldraw [white] (p6) circle (1.25pt);
			\filldraw [white] (p7) circle (1.25pt);
			\filldraw [white] (p8) circle (1.25pt);

		\end{tikzpicture}
		}\hfill
	\subfloat[\hspace{-0.5cm}\parbox{1in}{$\Diamond g = \neg(\Box \neg g)$}]{\label{subfig:Af2}
                 \begin{tikzpicture}[scale=1.3]%
			\tikzstyle{point}=[circle,draw=black,fill=white,inner sep=0pt,minimum width=4pt,minimum height=4pt]
			  \node (p2)[point] at (0,2) {};
			\node (p4)[point] at (1,1) {};
			\node (p5)[point] at (1,2) {};
			\node (p6)[point] at (2,0) {};
			\node (p7)[point] at (2,1) {};
			\node (p8)[point] at (2,2) {};

			\draw [orange,thick](p4) -- (p5);
			\draw [orange,thick](p4) -- (p7);
			\draw [orange,thick](p5) -- (p7);
			\draw [dashed](p2) -- (p4);
			\draw [orange,thick](p2) -- (p5);
			\draw [dashed](p4) -- (p6);
			\draw [orange,thick](p6) -- (p7);
			\draw [dashed](p7) -- (p8);
			\draw [dashed](p5) -- (p8);
			
			\begin{scope}[on background layer]
			\fill [fill=orange!50](p4.center) -- (p5.center) -- (p7.center);
			\end{scope}
			
			\filldraw [orange] (p2) circle (1.25pt);
			\filldraw [orange] (p4) circle (1.25pt);
			\filldraw [orange] (p5) circle (1.25pt);
			\filldraw [orange] (p6) circle (1.25pt);
			\filldraw [orange] (p7) circle (1.25pt);
			\filldraw [white] (p8) circle (1.25pt);

		\end{tikzpicture}
		}\hfill
	\subfloat[\hspace{-0.25cm}\parbox{1in}{$\Diamond (\Box g)$}]{\label{subfig:Af3}
                \begin{tikzpicture}[scale=1.3]%
			\tikzstyle{point}=[circle,draw=black,fill=white,inner sep=0pt,minimum width=4pt,minimum height=4pt]
			  \node (p2)[point] at (0,2) {};
			\node (p4)[point] at (1,1) {};
			\node (p5)[point] at (1,2) {};
			\node (p6)[point] at (2,0) {};
			\node (p7)[point] at (2,1) {};
			\node (p8)[point] at (2,2) {};

			\draw [orange,thick](p4) -- (p5);
			\draw [orange,thick](p4) -- (p7);
			\draw [orange,thick](p5) -- (p7);
			\draw [dashed](p2) -- (p4);
			\draw [dashed](p2) -- (p5);
			\draw [dashed](p4) -- (p6);
			\draw [dashed](p6) -- (p7);
			\draw [dashed](p7) -- (p8);
			\draw [dashed](p5) -- (p8);
			
			\begin{scope}[on background layer]
			\fill [fill=orange!50](p4.center) -- (p5.center) -- (p7.center);
			\end{scope}
			
			\filldraw [white] (p2) circle (1.25pt);
			\filldraw [orange] (p4) circle (1.25pt);
			\filldraw [orange] (p5) circle (1.25pt);
			\filldraw [white] (p6) circle (1.25pt);
			\filldraw [orange] (p7) circle (1.25pt);
			\filldraw [white] (p8) circle (1.25pt);

		\end{tikzpicture}
		}		
	\caption{\label{fig:example-model-A-and-formulas}Examples of basic topological operators on Polyhedra.~\ref{subfig:modelA}) Polyhedral model A. Circles denote 0-dimensional simplexes. The valuation of atomic propositions $r,g,b$ in model A is given by the colours \emph{red}, \emph{green}, and \emph{blue}, respectively. The points in model A that satisfy the following example formulas are shown in orange in:~\ref{subfig:Af1}) $\Box g$, i.e., points satisfying the \emph{topological interior} of the part in \emph{green} in model A; 
	\ref{subfig:Af2}) $\Diamond g$, i.e., points satisfying the \emph{topological closure} of the part in \emph{green} in model A; 
 	\ref{subfig:Af3})  $\Diamond (\Box g)$, i.e., points satisfying the closure of the interior of the part in \emph{green} in model A.}
	
\end{figure}

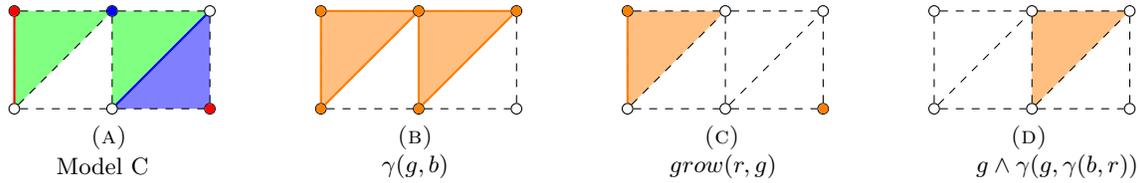
\begin{figure}[t]
	
	\subfloat[\hspace{-0.5cm}\parbox{1in}{Model C}]{\label{subfig:model}
		\begin{tikzpicture}[scale=1.3]%
			\tikzstyle{point}=[circle,draw=black,fill=white,inner sep=0pt,minimum width=4pt,minimum height=4pt]
			\node (p2)[point] at (1,0) {};
			\node (p3)[point] at (1,1) {};
			\node (p4)[point] at (2,0) {};
			\node (p5)[point] at (2,1) {};
			\node (p6)[point] at (3,0) {};
			\node (p7)[point] at (3,1) {};

			\draw [dashed] (p2) -- (p3);
			\draw [dashed] (p3) -- (p5);
			\draw [dashed] (p2) -- (p5);
			\draw [dashed] (p4) -- (p5);
			\draw [dashed] (p2) -- (p4);
			\draw [dashed] (p4) -- (p7);
			\draw [dashed] (p4) -- (p6);
			\draw [dashed] (p5) -- (p7);
			\draw [dashed] (p6) -- (p7);

			\draw [red,thick](p2) -- (p3);
			
			\draw [blue,thick](p4) -- (p7);		
			
			\begin{scope}[on background layer]
			\fill [fill=blue!50](p4.center) -- (p6.center) -- (p7.center);
			\fill [fill=green!50](p2.center) -- (p3.center) -- (p5.center);	
			\fill [fill=green!50](p4.center) -- (p5.center) -- (p7.center);
			\end{scope}
			
			\filldraw [blue] (p5) circle (1.25pt);

			\filldraw [red] (p6) circle (1.25pt);
			\filldraw [red] (p3) circle (1.25pt);

		\end{tikzpicture}	
	}\hfill
	\subfloat[\hspace{-0.25cm}\parbox{1in}{$\slreach(g,b)$}]{\label{subfig:f1}
		\begin{tikzpicture}[scale=1.3]%
			\tikzstyle{point}=[circle,draw=black,fill=white,inner sep=0pt,minimum width=4pt,minimum height=4pt]
			\node (p2)[point] at (1,0) {};
			\node (p3)[point] at (1,1) {};
			\node (p4)[point] at (2,0) {};
			\node (p5)[point] at (2,1) {};
			\node (p6)[point] at (3,0) {};
			\node (p7)[point] at (3,1) {};

			\draw [dashed] (p2) -- (p3);
			\draw [dashed] (p3) -- (p5);
			\draw [dashed] (p2) -- (p5);
			\draw [dashed] (p4) -- (p5);
			\draw [dashed] (p2) -- (p4);
			\draw [dashed] (p4) -- (p7);
			\draw [dashed] (p4) -- (p6);
			\draw [dashed] (p5) -- (p7);
			\draw [dashed] (p6) -- (p7);
						
			\draw [orange,thick](p2) -- (p3);
			\draw [orange,thick](p3) -- (p5);
			\draw [orange,thick](p2) -- (p5);
			\draw [orange,thick](p5) -- (p7);
			\draw [orange,thick](p4) -- (p7);
			\draw [orange,thick](p4) -- (p5);
					
			\begin{scope}[on background layer]
			\fill [fill=orange!50](p2.center) -- (p3.center) -- (p5.center);\fill [fill=orange!50](p4.center) -- (p5.center) -- (p7.center);
			\end{scope}
			
			\filldraw [orange] (p2) circle (1.25pt);
			\filldraw [orange] (p3) circle (1.25pt);
			\filldraw [orange] (p5) circle (1.25pt);
			\filldraw [orange] (p4) circle (1.25pt);
			\filldraw [orange] (p7) circle (1.25pt);
		\end{tikzpicture}	
	}\hfill
	\subfloat[\hspace{-0.5cm}\parbox{1in}{$grow(r,g)$}]{\label{subfig:f3}
		\begin{tikzpicture}[scale=1.3]%
			\tikzstyle{point}=[circle,draw=black,fill=white,inner sep=0pt,minimum width=4pt,minimum height=4pt]
			\node (p2)[point] at (1,0) {};
			\node (p3)[point] at (1,1) {};
			\node (p4)[point] at (2,0) {};
			\node (p5)[point] at (2,1) {};
			\node (p6)[point] at (3,0) {};
			\node (p7)[point] at (3,1) {};

			\draw [dashed] (p2) -- (p3);
			\draw [dashed] (p3) -- (p5);
			\draw [dashed] (p2) -- (p5);
			\draw [dashed] (p4) -- (p5);
			\draw [dashed] (p2) -- (p4);
			\draw [dashed] (p4) -- (p7);
			\draw [dashed] (p4) -- (p6);
			\draw [dashed] (p5) -- (p7);
			\draw [dashed] (p6) -- (p7);

			\draw [orange,thick](p2) -- (p3);

			\begin{scope}[on background layer]
			\fill [fill=orange!50](p2.center) -- (p3.center) -- (p5.center);	
			\end{scope}
			
			\filldraw [orange] (p3) circle (1.25pt);						
			\filldraw [orange] (p6) circle (1.25pt);						
		\end{tikzpicture}	
	}\hfill
	\subfloat[\hspace{-0.5cm}\parbox{1in}{$g \land \slreach(g,\slreach(b,r))$}]{\label{subfig:f4}
		\begin{tikzpicture}[scale=1.3]%
			\tikzstyle{point}=[circle,draw=black,fill=white,inner sep=0pt,minimum width=4pt,minimum height=4pt]
			\node (p2)[point] at (1,0) {};
			\node (p3)[point] at (1,1) {};
			\node (p4)[point] at (2,0) {};
			\node (p5)[point] at (2,1) {};
			\node (p6)[point] at (3,0) {};
			\node (p7)[point] at (3,1) {};

			\draw [dashed] (p2) -- (p3);
			\draw [dashed] (p3) -- (p5);
			\draw [dashed] (p2) -- (p5);
			\draw [dashed] (p4) -- (p5);
			\draw [dashed] (p2) -- (p4);
			\draw [dashed] (p4) -- (p7);
			\draw [dashed] (p4) -- (p6);
			\draw [dashed] (p5) -- (p7);
			\draw [dashed] (p6) -- (p7);

			\begin{scope}[on background layer]
			\fill [fill=orange!50](p4.center) -- (p5.center) -- (p7.center);
			\end{scope}

		\end{tikzpicture}	
	}
	\caption{\label{fig:example-model-and-formulas}Examples illustrating the reachability operator on Polyhedra.~\ref{subfig:model}) Polyhedral model C. Circles and colours have the same meaning as in Figure~\ref{fig:example-model-A-and-formulas}. 
	Dashed segments and white points and triangles do not satisfy any atomic proposition.~\ref{subfig:f1}) In orange: Points satisfying $\slreach(g,b)$ applied to the model in~\ref{subfig:model}.
	Note that these include points in the closure of the green area since the first point of the paths passing by green and reaching blue do not need to be green themselves.
	\ref{subfig:f3}) In orange: Points satisfying $grow(r,g)$. 
	Note that only the points corresponding to the red area and one green triangle in the model in~\ref{subfig:model} satisfy this formula. %
	\ref{subfig:f4}) In orange: Points satisfying $g \land \slreach(g,\slreach(b,r))$; by nesting reachability, quite complex spatial formulas may be defined.}
	
\end{figure}

\iflong
Note that we can construct polyhedral models starting from distinct descriptions of $V(q_1), \dots, V(q_l)$.
Consider for each $i$ a region $V(q_i) \subseteq P$ obtained as a union of cells of a simplicial partition $\relint{L}_i$.
By Corollary~\ref{cor:simplicialPartitionRefinement}, we can find a simplicial partition $\relint{K}$ refining each partition $\relint{L}_i$, and thus the triple $\tuple{P, \Ksc, V}$ is a well defined polyhedral model.
\else
Note that the same polyhedron $P$ can be associated with different simplicial complexes: our semantics is not sensitive to such presentational ambiguity in the description of $P$.\footnote{Such ambiguity can be thought of as being similar, in spirit, to the infinitely many different programs that may result in equivalent Kripke frames, in a specification language for classical model checking applications.}
\fi
This is because, although we need to specify $\Ksc$ to spell out the restriction on the range of $V$, $\Ksc$ itself does not play a role in the semantics, as shown in the following proposition.
\begin{lem}\label{lem:invariance-wrt-triangulations}
	Let $\model{X} = \tuple{P, \Ksc, V}$ and $\model{X}' = \tuple{P, 		\Ksc', V}$ be two models sharing the same $P$ and $V$.
	For each $x\in P$ and $\phi$ we have:
		$\model{X}, x \vDash \phi  \iff \model{X}', x \vDash \phi$.
\end{lem}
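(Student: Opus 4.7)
The plan is a straightforward structural induction on $\phi$, keyed on the observation that each clause of Definition~\ref{def:slcs-semantics} refers only to $P$, $V$, and derived notions (topological interior, topological paths) that depend purely on the subspace topology of $P$ inside $\R^m$. The simplicial complex $\Ksc$ appears in Definition~\ref{def:polyhedralModel} only to constrain the range of $V$ (ensuring that $V(p)$ is a union of cells of $\relint{\Ksc}$); it is not referenced in any semantic clause. Since we are given that both $\tuple{P,\Ksc,V}$ and $\tuple{P,\Ksc',V}$ are well-formed polyhedral models, the compatibility condition on $V$ is satisfied in both presentations, and we may simply ignore the complex in the semantic recursion.

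In detail, I would proceed by induction on $\phi$. For $\phi = \top$, satisfaction holds in both models by definition. For $\phi = p \in \AP$, satisfaction reduces to $x \in V(p)$, and $V$ is shared. The Boolean cases $\neg\phi$ and $\phi \land \psi$ follow immediately from the induction hypothesis, since $\sem{\phi}$ computed in $\model{X}$ equals $\sem{\phi}$ computed in $\model{X}'$, and the set-theoretic connectives do not look at $\Ksc$.

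The two cases that deserve an explicit word are $\Box\phi$ and $\slreach(\phi,\psi)$. For $\Box\phi$, the clause uses $\interior_P$, which is determined by the closure operator $\Cl_P$ on $P$; as noted in Section~\ref{sec:background}, this operator depends only on $P$ as a subspace of $\R^m$, not on any particular triangulation of $P$. Hence $\interior_P(\sem{\phi}^{\model{X}}) = \interior_P(\sem{\phi}^{\model{X}'})$ by the induction hypothesis, so the two models agree on $\Box\phi$. For $\slreach(\phi,\psi)$, the witnessing object is a topological path $\pi : [0,1] \to P$, whose continuity is tested against the topology of $P$ alone; the conditions $\pi(0)=x$, $\pi(1) \in \sem{\psi}$, and $\pi((0,1)) \subseteq \sem{\phi}$ again refer only to the semantic sets, which coincide in $\model{X}$ and $\model{X}'$ by the induction hypothesis. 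Therefore the existence of a witnessing path is an invariant of $P$ and $V$.

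I do not anticipate any real obstacle here: the proof is essentially an observation that the semantic clauses are ``$\Ksc$-free'' once a valuation is fixed. The one point worth stressing in the write-up is the distinction between the well-formedness constraint on $V$ (which does involve $\Ksc$) and the semantic recursion (which does not); this justifies why the two models, despite being built from different triangulations, satisfy exactly the same formulas at each point.
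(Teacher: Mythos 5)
Your proof is correct and rests on exactly the same observation as the paper's (one-line) proof, namely that $\Ksc$ does not occur in any clause of Definition~\ref{def:slcs-semantics}; you merely unfold this observation into an explicit structural induction. No gap, and no genuinely different route.
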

Therefore, for the sake of readability, we will sometimes indicate a polyhedral model with the notation $\model{X} = \tuple{P, V}$, abstracting from the particular choice of $\Ksc$.
Nevertheless, we require $V$ to range over unions of cells of \emph{some} polyhedral partition, thus restricting the semantics to spatial regions definable in terms of polyhedra.
We will call a simplicial complex $\Ksc$ as in Definition~\ref{def:polyhedralModel} \emph{coherent} with the model $\model{X} = \tuple{P,V}$.

We mentioned that employing polyhedra allows for a \emph{finitary treatment} of the semantics.
The following results are essential to formalize this intuition, which will be further investigated in Section~\ref{sec:polyhedral-model-checking}.

\begin{defi}\label{def:logical-equivalence}
    Let $\model{X} = \tuple{P, V}$ be a polyhedral model.
    \emph{Logical equivalence} $\lequiv$ is the binary relation on $P$ such that $x \lequiv y$ if and only if, for every formula $\phi$: $\model{X}, x \vDash \phi \iff \model{X}, y \vDash \phi$. 
\end{defi}

\begin{lem}\label{lemma:semanticsTriangulation}
	Let $\model{X}$ be a polyhedral model and $\Ksc$ a simplicial complex coherent with $\model{X}$.
	Then for each cell $\relint{\sigma} \in \relint{\Ksc}$ and $x,y \in \relint{\sigma}$ we have $x \lequiv y$.
\end{lem}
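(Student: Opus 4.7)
The strategy is to prove the stronger assertion that for every formula $\phi$, $\sem{\phi}$ is a union of cells of $\relint{\Ksc}$; the lemma is then immediate, because $x,y \in \relint{\sigma}$ forces $\relint{\sigma}$ to sit entirely inside or entirely outside $\sem{\phi}$. I would proceed by induction on $\phi$. The atomic case follows from Definition~\ref{def:polyhedralModel}, and the Boolean cases follow because the complement and finite intersections of unions of cells (which partition $\poly{\Ksc}$ by Lemma~\ref{lemma:partition}) are again unions of cells. For the $\Box\phi$ case, suppose $x \in \interior_P(\sem{\phi}) \cap \relint{\sigma}$: any open neighbourhood of $x$ in $P$ meets $\relint{\tau}$ for every $\tau \in \Ksc$ with $\sigma \face \tau$, so by the inductive hypothesis each such $\relint{\tau}$ lies in $\sem{\phi}$; the \emph{open star} of $\sigma$, i.e.\ $\bigcup\{\relint{\tau} \mid \sigma \face \tau\}$, is then an open subset of $\sem{\phi}$ containing $\relint{\sigma}$, whence $\relint{\sigma} \subseteq \interior_P(\sem{\phi})$.

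The real difficulty lies in the $\slreach(\phi,\psi)$ case, where the plan is to show that for any two points $x,y$ in the same cell $\relint{\sigma}$ there is a homeomorphism $h_{x,y} : P \to P$ sending $x$ to $y$ and preserving every cell of $\relint{\Ksc}$ setwise. Given such $h_{x,y}$, any path $\pi$ witnessing $x \vDash \slreach(\phi,\psi)$ yields $\pi' := h_{x,y} \circ \pi$, a path from $y$ such that $\pi'(t)$ and $\pi(t)$ lie in the same cell for every $t \in [0,1]$; by the inductive hypothesis, $\sem{\phi}$ and $\sem{\psi}$ are unions of cells, so $\pi'$ witnesses $y \vDash \slreach(\phi,\psi)$.

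It remains to construct $h_{x,y}$. I would first pick a piecewise linear self-homeomorphism $h_\sigma : \sigma \to \sigma$ fixing $\partial\sigma$ pointwise and mapping $x$ to $y$; this is possible because $\relint{\sigma}$ is convex and open in $\sigma$. Then, for each simplex $\tau \in \Ksc$ with $\sigma \face \tau$, use the join decomposition $\tau = \sigma * \tau_w$, where $\tau_w$ is the face of $\tau$ spanned by the vertices not in $\sigma$: every $p \in \tau$ factors uniquely (in barycentric coordinates) as $p = s\, p_\sigma + (1-s)\, p_w$ with $p_\sigma \in \sigma$ and $p_w \in \tau_w$, and I set $h_{x,y}(p) := s\, h_\sigma(p_\sigma) + (1-s)\, p_w$; on simplices of $\Ksc$ not containing $\sigma$, let $h_{x,y}$ be the identity. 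The main technical obstacle is verifying that these local definitions glue into a well-defined continuous homeomorphism of $P$ preserving every cell of $\relint{\Ksc}$. Well-definedness across shared faces hinges on $h_\sigma$ being the identity on $\partial\sigma$, so the various join extensions agree on intersections of simplices containing $\sigma$ and also agree with the identity on simplices not containing $\sigma$; cell preservation is a direct barycentric-coordinate computation, using that $h_\sigma$ sends $\relint{\sigma}$ bijectively onto $\relint{\sigma}$. Once $h_{x,y}$ is in place, the $\slreach$ step, and thus the whole induction, closes.
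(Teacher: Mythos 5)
Your proposal is correct and proves the same strengthened statement ($\sem{\phi}$ is a union of cells) by the same outer induction, but the two non-trivial cases are handled by genuinely different arguments. For $\Box\phi$ you argue via openness of the open star $\bigcup\{\relint{\tau}\mid\sigma\face\tau\}$, whereas the paper computes $\Cl(P\setminus\sem{\phi})$ directly as a finite union of closed simplexes; these are essentially dual phrasings of the same finiteness fact and both work (do record why the open star is open in $P$: its complement is the union of the finitely many simplexes not having $\sigma$ as a face, hence closed). The real divergence is in the $\slreach(\phi,\psi)$ case. The paper's argument is local and lightweight: since $x=\lim_{t\to 0^+}\pi(t)$ and $\pi((0,1))$ is covered by finitely many cells contained in $\sem{\phi}$, there is a cell $\relint{\tau}\subseteq\sem{\phi}$ with $x\in\tau$ and $\pi(r)\in\relint{\tau}$ for some $r\in(0,1)$; then $y\in\relint{\sigma}\subseteq\Cl(\relint{\tau})=\tau$ as well, and Proposition~\ref{prop:pathToBorder} supplies a straight segment from $y$ to $\pi(r)$ whose interior stays in $\relint{\tau}$, so only the initial stub of the witnessing path is rerouted and the tail $\pi|_{[r,1]}$ is reused verbatim. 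You instead build a global cell-preserving self-homeomorphism $h_{x,y}$ of $P$ by the radial/join construction and transport the entire path. This is heavier machinery --- one must check well-definedness at the degenerate join coordinates $s\in\{0,1\}$, agreement across shared faces (which, as you say, reduces to $h_\sigma$ being the identity on $\partial\sigma$ and to the fact that a common face of two simplexes having $\sigma$ as a face again has $\sigma$ as a face), continuity via the pasting lemma, and cell preservation --- but all of these go through, and the payoff is a stronger, reusable homogeneity statement: any two points of a cell are related by an automorphism of $P$ preserving every cell, which makes $\lequiv$-invariance immediate for \emph{any} connective whose semantics is invariant under such maps. Note also that for the path-transport step you only need $h_{x,y}$ to be continuous, cell-preserving, and to send $x$ to $y$; bijectivity is never actually used.
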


\noindent
In particular, for every formula $\phi$, $\sem{\phi}$ is a (finite) union of cells of $\relint{\Ksc}$.

\begin{prop}\label{equivalence-classes-characteristic-formulas}
	Given a polyhedral model $\model{X}$, the relation $\lequiv$ has only finitely many equivalence classes. Furthermore, each equivalence class $C$ has a \emph{characteristic formula} $\phi^C$ such that $\model{X},x \vDash \phi^C \iff x \in C$.
\end{prop}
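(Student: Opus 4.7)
The plan is to derive both claims from Lemma~\ref{lemma:semanticsTriangulation}, which already tells us that within any simplicial complex $\Ksc$ coherent with $\model{X}$, every cell $\relint{\sigma} \in \relint{\Ksc}$ is entirely contained in a single $\lequiv$-equivalence class. Since $\relint{\Ksc}$ is a finite partition of $P$ (being derived from a finite simplicial complex by Definition~\ref{def:simplicialComplex}), the equivalence classes of $\lequiv$ are unions of cells from the finite set $\relint{\Ksc}$, so there can be at most $|\relint{\Ksc}|$ of them. This settles finiteness.

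For the characteristic formulas, I would enumerate the (finitely many) equivalence classes as $C_1,\dots,C_n$. For every ordered pair $(i,j)$ with $i \neq j$, pick any $x_i \in C_i$ and $x_j \in C_j$. Since $x_i \not\lequiv x_j$, the Definition~\ref{def:logical-equivalence} of logical equivalence guarantees a formula $\psi$ distinguishing them, and by possibly replacing $\psi$ with $\neg \psi$ we may assume $\model{X}, x_i \vDash \psi_{ij}$ and $\model{X}, x_j \nvDash \psi_{ij}$. Because equivalence classes are invariant under formula satisfaction, in fact $\psi_{ij}$ is true at every point of $C_i$ and false at every point of $C_j$.

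Now define
\[
    \phi^{C_i} \;:=\; \bigwedge_{j \neq i} \psi_{ij}.
\]
This is a finite conjunction (by the finiteness already established), hence a well-formed \SLCS formula. For any $x \in C_i$, every conjunct holds, so $\model{X}, x \vDash \phi^{C_i}$. Conversely, if $y \notin C_i$, then $y \in C_j$ for some $j \neq i$, so $\model{X}, y \nvDash \psi_{ij}$, and hence $\model{X}, y \nvDash \phi^{C_i}$. This gives $\model{X}, x \vDash \phi^{C_i} \iff x \in C_i$, as required.

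I do not expect any serious obstacle here: once Lemma~\ref{lemma:semanticsTriangulation} is in hand, the finiteness is immediate, and the construction of characteristic formulas is the standard trick of conjoining pairwise separators. The only mild subtlety is making sure the separators are oriented correctly (which is handled by negating when needed) and observing that a separator for one representative of a class separates the entire class, which is a direct consequence of the definition of $\lequiv$.
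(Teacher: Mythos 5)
Your proof is correct and follows essentially the same route as the paper's: finiteness is obtained from Lemma~\ref{lemma:semanticsTriangulation} together with the finiteness of $\relint{\Ksc}$, and the characteristic formula is the conjunction $\bigwedge_{C'\neq C}\psi_{C,C'}$ of pairwise separating formulas. Your extra remarks on orienting the separators via negation and on a separator for one representative separating the whole class are correct elaborations of what the paper leaves implicit.
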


\noindent The above facts are also useful to prove an interesting feature of polyhedral models, namely that the $\Box$ modality can be considered a derived operator, since it is expressible using $\slreach$. This considerably simplifies proofs.

\begin{thm}\label{thm:near-derived-from-slreach}
    For each formula $\phi$, we have $\model{X},x \vDash \Box \phi \iff \model{X},x \vDash \lnot \slreach(\lnot \phi,\top)$.
\end{thm}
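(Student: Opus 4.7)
The plan is to prove the two implications separately, leveraging the polyhedral structure (in particular Lemma~\ref{lemma:semanticsTriangulation}) on the backward direction and pure continuity on the forward one.

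For the forward direction, suppose $\model{X},x \vDash \Box\phi$, so $x \in \interior_P(\sem{\phi})$, and pick an open neighbourhood $U$ of $x$ with $U \subseteq \sem{\phi}$. Given any topological path $\pi$ with $\pi(0)=x$, continuity of $\pi$ yields some $\delta>0$ with $\pi([0,\delta)) \subseteq U \subseteq \sem{\phi}$, so that $\pi((0,\delta)) \cap \sem{\neg\phi} = \emptyset$. Consequently $\pi((0,1)) \not\subseteq \sem{\neg\phi}$, so no path witnesses $\slreach(\neg\phi,\top)$ at $x$. This step only uses the definition of $\interior_P$ and continuity, so it is routine.

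For the backward direction, suppose $x \notin \interior_P(\sem{\phi})$; equivalently, $x \in \Cl_P(\sem{\neg\phi})$, i.e.\ every neighbourhood of $x$ meets $\sem{\neg\phi}$. Fix a simplicial complex $\Ksc$ coherent with $\model{X}$ and let $\sigma \in \Ksc$ be the unique simplex with $x \in \relint{\sigma}$ (Lemma~\ref{lemma:partition}). The key geometric observation is that the cells accumulating at $x$ are exactly $\{\relint{\tau} \mid \sigma \face \tau\}$: if $y \in \relint{\rho}$ is close enough to $x$ then $x \in \rho$, and by the face condition of Definition~\ref{def:simplicialComplex} together with uniqueness of the cell containing $x$, one gets $\sigma \face \rho$. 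Since $\sem{\neg\phi}$ is, by Lemma~\ref{lemma:semanticsTriangulation}, a union of cells, and there are only finitely many simplices $\tau$ with $\sigma \face \tau$, a pigeonhole argument on arbitrarily small neighbourhoods produces some such $\tau$ with $\relint{\tau} \subseteq \sem{\neg\phi}$.

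To finish, choose any $z \in \relint{\tau}$ (for instance its barycentre) and define the linear path $\pi(r) := (1-r)x + rz$. Since $\sigma \face \tau$ implies $x \in \sigma \subseteq \tau$ and $\tau$ is convex, $\pi$ takes values in $\tau \subseteq P$, and the standard convex-geometry fact that the open segment from a boundary point of a convex set to a relative-interior point lies entirely in the relative interior gives $\pi(r) \in \relint{\tau}$ for all $r \in (0,1]$. Hence $\pi(0)=x$, $\pi((0,1)) \subseteq \relint{\tau} \subseteq \sem{\neg\phi}$, and $\pi(1)=z \vDash \top$, witnessing $\slreach(\neg\phi,\top)$ at $x$. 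The main subtlety, and the only non-routine step, is the pigeonhole argument identifying a single offending coface $\tau$; everything else reduces to continuity and elementary convexity.
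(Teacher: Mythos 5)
Your proof is correct and follows essentially the same route as the paper's: the paper proves the dual form $\Diamond\phi \iff \slreach(\phi,\top)$, using continuity of paths for one direction and, for the other, the fact that $\sem{\lnot\phi}$ is a finite union of cells (Lemma~\ref{lemma:semanticsTriangulation}) together with the linear path of Proposition~\ref{prop:pathToBorder} into the relative interior of an offending cell. Your pigeonhole argument over cofaces of $\sigma$ is just a rephrasing of the paper's observation that closure commutes with the finite union of cells, so the two proofs use the same key ingredients.
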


\noindent
Another property of polyhedral models which turns out to be fundamental in this work is that we can restrict our attention to a special class of paths --- rather than arbitrary paths --- to study the reachability operator $\slreach$:
\emph{piecewise linear paths}.

\begin{defi}[PL-path]\label{def:piecewise-linear-path}
	We call a path $\pi: [0,1] \to P$ \emph{piecewise linear} (or simply PL-path) if there exist values $r_0 = 0, r_1,\dots, r_k = 1$ such that for every $i = 0, \dots, k-1$ and $t \in [0,1]$: $\pi\big(  t r_i + (1-t) r_{i+1}  \big) = t \pi(r_i) + (1-t) \pi(r_{i+1})$.
\end{defi}

\begin{figure} %
    \centering
\subfloat[\parbox{1in}{Path 1}]{\label{subfig:PLpath1}
\begin{tikzpicture}[scale=1.3]%
    \tikzstyle{point}=[circle,draw=black,fill=white,inner sep=0pt,minimum width=4pt,minimum height=4pt]
    \node (p0)[point] at (0,0) {};      
    \node (p1)[point] at (0,1) {};
    \node (p2)[point] at (1,0) {};
    \node (p3)[point] at (1,1) {};
    \node (p4)[point] at (2,0) {}; 
    \node (p5)[point] at (2,1) {};
    
    \draw [dashed] (p0) -- (p2);
    \draw [dashed] (p2) -- (p3);
    \draw [dashed] (p3) -- (p5);
    \draw [dashed] (p2) -- (p5);
    \draw [dashed] (p4) -- (p5);
    \draw [dashed] (p2) -- (p4);
    
    \draw [red,thick](p0) -- (p1);
    \draw [red,thick](p1) -- (p3);
    \draw [red,thick](p0) -- (p2);
    \draw [red,thick](p2) -- (p3);
    \draw [red,thick](p0) -- (p3);

    \begin{scope}[on background layer]
    \fill [fill=red!50](p0.center) -- (p1.center) -- (p3.center);       
    \fill [fill=red!50](p0.center) -- (p3.center) -- (p2.center);   
    \fill [fill=green!50](p2.center) -- (p3.center) -- (p5.center); 
    \end{scope}
    
    \filldraw [red] (p0) circle (1.25pt);

    \filldraw [red] (p1) circle (1.25pt);
    \filldraw [red] (p3) circle (1.25pt);

    \node (x) at (0,1) {};
    \node at (0.2,0.9) {x};
    \filldraw [black] (x) circle (1.25pt);
    \draw [black,thick](p1) -- (0.4,0) -- (0.7,0) -- (1.2,0.7) -- (1.7,0.3) -- (2,0.4) -- (p5);     
\end{tikzpicture}   
}\quad\quad%
\subfloat[\hspace{-0.25cm}\parbox{1in}{Path 2}]{\label{subfig:PLpath2}
\begin{tikzpicture}[scale=1.3]%
    \tikzstyle{point}=[circle,draw=black,fill=white,inner sep=0pt,minimum width=4pt,minimum height=4pt]
    \node (p0)[point] at (0,0) {};      
    \node (p1)[point] at (0,1) {};
    \node (p2)[point] at (1,0) {};
    \node (p3)[point] at (1,1) {};
    \node (p4)[point] at (2,0) {}; 
    \node (p5)[point] at (2,1) {};
            
    \draw [dashed] (p0) -- (p2);
    \draw [dashed] (p2) -- (p3);
    \draw [dashed] (p3) -- (p5);
    \draw [dashed] (p2) -- (p5);
    \draw [dashed] (p4) -- (p5);
    \draw [dashed] (p2) -- (p4);

    \draw [red,thick](p0) -- (p1);
    \draw [red,thick](p0) -- (p2);
    \draw [red,thick](p2) -- (p3);
    \draw [red,thick](p0) -- (p3);
    \draw [red,thick](p1) -- (p3);
        
    \begin{scope}[on background layer]
    \fill [fill=red!50](p0.center) -- (p1.center) -- (p3.center);       
    \fill [fill=red!50](p0.center) -- (p3.center) -- (p2.center);   
    \fill [fill=green!50](p2.center) -- (p3.center) -- (p5.center); 
    \end{scope}
    
    \filldraw [red] (p0) circle (1.25pt);

    \filldraw [red] (p1) circle (1.25pt);
    \filldraw [red] (p3) circle (1.25pt);

    \node (y) at (0.3,0.7) {};
    \node at (0.2,0.7) {y};
    \draw [black,thick](0.3,0.7) -- (p3) -- (1.2,0.7) -- (1.5,0.2) -- (p4) -- (1.7,0.7);

\end{tikzpicture}   
}
\caption{\label{fig:PLpaths}
Two examples of piecewise linear paths, one starting in point $x$ and the other in point $y$.}
\end{figure}
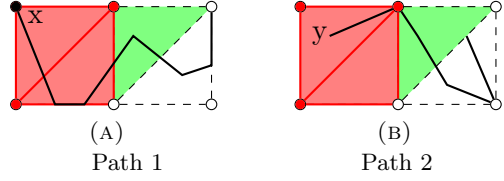

Figure~\ref{fig:PLpaths} shows two examples of piecewise linear paths.
We indicate that a path is piecewise linear with the notation $\pi : [0,1] \toPL P$.
Intuitively, a PL-path is obtained by connecting a finite number of segments and parametrizing them in a suitable way.
\iflong
We shall to  a bit sloppy on properties of such parametrization (e.g., we will often say that the concatenation of two PL-paths is again a PL-path, which requires defining the concatenation and parametrizing its domain), but we trust that all the fine details can be easily filled by the eager reader.
\fi
Even if PL-paths are much simpler than arbitrary paths, when it comes to connectivity in polyhedral models the two classes are interchangeable, as shown in the following lemma.

\begin{lem}\label{lemma:pathToPLPath}
	Let $\Ksc$ be a simplicial complex and $x,y \in \poly{\Ksc}$.
	Then there exists a path in $\poly{\Ksc}$ from $x$ to $y$ if and only if there is a PL-path in $\poly{\Ksc}$ from $x$ to $y$.
\end{lem}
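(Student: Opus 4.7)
The forward direction is immediate: every PL-path is, by definition, a continuous function $[0,1] \to \poly{\Ksc}$, hence a path in the topological sense. The substantive content is therefore the converse, and my plan is to ``straighten out'' an arbitrary path $\pi$ by replacing it with a finite concatenation of straight line segments, each lying inside a single simplex of $\Ksc$.

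The first step is to cover $\poly{\Ksc}$ by a convenient family of open sets. For each vertex $v$ of $\Ksc$, let $\mathrm{St}(v) := \bigcup \{ \relint{\tau} \mid \tau \in \Ksc \text{ and } v \text{ is a vertex of } \tau\}$. This set is open in $\poly{\Ksc}$, since its complement is the union of the (finitely many) simplexes of $\Ksc$ not having $v$ among their vertices, hence a closed set. Moreover, by Lemma~\ref{lemma:partition} every point of $\poly{\Ksc}$ lies in the relative interior of some non-empty simplex, which has at least one vertex, so $\{\mathrm{St}(v)\}_{v}$ covers $\poly{\Ksc}$. The key geometric property I will exploit is \emph{star-convexity at $v$}: if $p \in \mathrm{St}(v)$, then $p$ and $v$ both belong to some simplex $\tau \in \Ksc$ which, being a convex hull, is convex, so the entire line segment from $p$ to $v$ lies in $\tau \subseteq \poly{\Ksc}$.

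Given a path $\pi : [0,1] \to \poly{\Ksc}$ from $x$ to $y$, I pull back the above cover to an open cover of the compact space $[0,1]$. By the Lebesgue number lemma I obtain a subdivision $0 = t_0 < t_1 < \dots < t_n = 1$ and vertices $v_0, \dots, v_{n-1}$ of $\Ksc$ such that $\pi([t_i, t_{i+1}]) \subseteq \mathrm{St}(v_i)$ for every $i$. In particular $\pi(t_i), \pi(t_{i+1}) \in \mathrm{St}(v_i)$, so by the star-convexity observation each of the segments $[\pi(t_i), v_i]$ and $[v_i, \pi(t_{i+1})]$ is contained in a single simplex of $\Ksc$, and therefore in $\poly{\Ksc}$.

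The desired PL-path from $x$ to $y$ is then the concatenation of the $2n$ segments $[\pi(t_0), v_0], [v_0, \pi(t_1)], [\pi(t_1), v_1], \dots, [v_{n-1}, \pi(t_n)]$, with the domain $[0,1]$ reparametrized so that each constituent segment admits an affine parametrization satisfying Definition~\ref{def:piecewise-linear-path}. I expect the only delicate points in the full write-up to be bookkeeping around degenerate segments (e.g.\ when $\pi(t_i)$ coincides with $v_i$, which merely shortens the path) and the routine but notation-heavy specification of the piecewise-linear reparametrization; the geometric content is captured entirely by the Lebesgue number argument together with the convexity of simplexes.
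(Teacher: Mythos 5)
Your proof is correct, but it follows a genuinely different route from the paper's. The paper argues by induction on the number of cells traversed by the given path: it shows that any \emph{connected union of cells} is PL-connected, peeling off the cell containing $x$, applying the inductive hypothesis to a connected component of the remainder, and stitching the pieces together with the linear paths of Proposition~\ref{prop:pathToBorder}. You instead cover $\poly{\Ksc}$ by the open stars $\mathrm{St}(v)$ of the vertices, invoke the Lebesgue number lemma on the pulled-back cover of $[0,1]$, and replace each sub-arc $\pi([t_i,t_{i+1}])$ by the two segments through the apex $v_i$; star-convexity guarantees each segment lies in a single simplex. This is the classical simplicial-approximation-style argument, and it is clean and complete for the lemma as stated (the degenerate-segment and reparametrisation issues you flag are indeed only bookkeeping, since a constant piece is affine). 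The trade-off worth noting is that the paper's inductive argument establishes a slightly stronger fact that is silently reused later: in the proof of Lemma~\ref{lem:piecewise-linear-semantics} one needs a PL-path that stays \emph{inside} the union $\bigcup\Pi$ of cells met by the original path, so that the intermediate points still satisfy $\phi$. Your construction does not give this for free: the apex $v_i$ of a star need not lie in any cell visited by $\pi$ (e.g.\ a path running along the relative interior of an edge would be rerouted through one of the edge's endpoints, a $0$-cell the original path never touches). So your argument proves the lemma, but could not be substituted wholesale for the paper's inner claim without an additional refinement (e.g.\ taking barycentres of the visited cells as apexes instead of vertices of $\Ksc$).
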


\noindent
Using the previous result, we can give an alternative semantic characterization of the reachability operator, which is relevant for the proofs of decidability of model checking (Section~\ref{sec:polyhedral-model-checking}) and the characterisation of logical equivalence via bisimilarity (Section~\ref{sec:simplicial-bisimilarity}).

\begin{lem}\label{lem:piecewise-linear-semantics} We have:
		$\model{X}, x \vDash \slreach( \phi, \psi )$ if and only if
			there is a PL-path $\pi : [0,1] \toPL P$ such that $\pi(0) = x$ and $\pi((0,1)) \subseteq \sem{\phi}$ and $\pi(1) \in \sem{\psi}$.
\end{lem}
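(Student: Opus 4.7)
The backward direction is immediate from Definition~\ref{def:slcs-semantics}, since every PL-path is a path, so the plan focuses on the forward direction: converting an arbitrary witness path into a PL-path with the same properties.

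Suppose $\model{X}, x \vDash \slreach(\phi,\psi)$ with an arbitrary witness $\pi : [0,1] \to P$. I would first fix a simplicial complex $\Ksc$ coherent with $\model{X}$; by Lemma~\ref{lemma:semanticsTriangulation}, both $\sem{\phi}$ and $\sem{\psi}$ are then unions of cells of $\relint{\Ksc}$. Set $\Lsc^+ := \{\sigma \in \Ksc \,|\, \relint{\sigma} \subseteq \sem{\phi}\}$ and let $\Lsc$ be its closure under the face relation, which is again a subcomplex of $\Ksc$. Using that $\Cl_P(\relint{\sigma}) = \sigma$ and that finite unions commute with closures, one checks $\poly{\Lsc} = \Cl_P(\sem{\phi})$. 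Continuity of $\pi$ forces both $x$ and $y := \pi(1)$, being limits of points in $\sem{\phi}$, to lie in $\poly{\Lsc}$; hence $\pi$ is in particular a path in the polyhedron $\poly{\Lsc}$. Applying Lemma~\ref{lemma:pathToPLPath} to $\poly{\Lsc}$ then yields a PL-path $\pi^* : [0,1] \toPL \poly{\Lsc}$ from $x$ to $y$, whose break-points I would further subdivide so that each linear piece lies within a single simplex of $\Lsc$.

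The intermediate points of $\pi^*$ may still fall in cells of $\Lsc \setminus \Lsc^+$, that is, outside $\sem{\phi}$, so the remaining step---and the main obstacle---is to deform $\pi^*$ into a PL-path whose open part is contained in $\sem{\phi}$. The decisive geometric ingredient is that whenever $p \in \relint{\sigma}$ and $q \in \sigma$ for some $\sigma \in \Lsc^+$, every point of the segment from $q$ to $p$ other than $q$ lies in $\relint{\sigma} \subseteq \sem{\phi}$---a direct convexity calculation from Definition~\ref{def:simplex}. Exploiting this, the plan is to replace each linear piece of $\pi^*$ by a two-segment detour through a ``hub'' point chosen in the relative interior of some $\sigma \in \Lsc^+$ having the piece's ambient simplex as a face. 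The hard part will be coordinating these hubs across consecutive break-points, so that successive hubs either lie in a common simplex of $\Lsc^+$ or can be bridged by a further PL sub-path within $\sem{\phi}$; I expect this to go through by an inductive construction along the break-points of $\pi^*$, exploiting that the cells it traverses form a face-connected chain in $\Lsc$ inherited from the $\sem{\phi}$-connectivity of the original path $\pi$.
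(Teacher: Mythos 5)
Your backward direction and your initial reductions (coherent $\Ksc$, $\sem{\phi}$ and $\sem{\psi}$ as unions of cells, the endpoints lying in $\Cl_P(\sem{\phi})$) are all fine, but the proof is not complete: the step you yourself flag as the main obstacle---deforming $\pi^*$ so that its open part lands in $\sem{\phi}$---is exactly where the content of the lemma lives, and the hub-and-detour construction you sketch is not guaranteed to work from the data you have retained. By applying Lemma~\ref{lemma:pathToPLPath} to all of $\poly{\Lsc} = \Cl_P(\sem{\phi})$ you discard the information about \emph{which} cells the original witness $\pi$ actually passes through. The resulting $\pi^*$ may be routed through a ``pinch point'' of $\Cl_P(\sem{\phi})$ (e.g.\ a vertex $v \notin \sem{\phi}$ at which the closures of two otherwise disjoint components of $\sem{\phi}$ meet); near such a point no choice of hubs in $\Lsc^+$ can bridge consecutive pieces of $\pi^*$ inside $\sem{\phi}$, because the cells on either side need not share any simplex of $\Lsc^+$ as a coface. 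So the proposed induction along the break-points of $\pi^*$ can get stuck.

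The paper avoids the deformation entirely by restricting to the right subspace from the outset: let $\Pi := \{\relint{\sigma} \in \relint{\Ksc} \mid \pi((0,1)) \cap \relint{\sigma} \ne \emptyset\}$ be the set of cells met by the \emph{open part} of the original witness. Since $\sem{\phi}$ is a union of cells and $\pi((0,1)) \subseteq \sem{\phi}$, every cell in $\Pi$ is contained in $\sem{\phi}$, so $\bigcup\Pi \subseteq \sem{\phi}$; and $\bigcup\Pi$ is connected because each cell is connected and meets the connected set $\pi((0,1))$. The inductive argument proving Lemma~\ref{lemma:pathToPLPath} shows precisely that any such connected union of cells is PL-connected, so one gets a PL-path between suitably chosen points of $\bigcup\Pi$ that never leaves $\sem{\phi}$. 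The endpoints are then attached by two applications of Proposition~\ref{prop:pathToBorder}: $x \in \Cl(\bigcup\Pi)$ lies in $\sigma$ for some $\relint{\sigma} \in \Pi$, giving a linear segment from $x$ into $\relint{\sigma}$ whose open part stays in $\relint{\sigma} \subseteq \sem{\phi}$, and symmetrically for $\pi(1)$. Concatenation finishes the proof. I recommend you rework your argument along these lines; the key move is to track the cells actually traversed by $\pi((0,1))$ rather than working in the closure of the whole of $\sem{\phi}$.
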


\noindent
We conclude this section by pointing out an interesting property of PL-paths in connection to simplicial complexes.
By definition, PL-paths are obtained by concatenating a finite number of line segments, but in general each segment might traverse different cells of the simplicial complex $\Ksc$.
However we can give an alternative decomposition of these paths so that each portion is fully contained in a unique cell; paths that admit such a decomposition are called \emph{simplicial paths}.

\begin{defi}\label{def:simplicial-path}
    A path $\pi: [0,1] \to P$ is \emph{simplicial} if and only if there is a finite sequence $s_0 =0 < \cdots < s_ k =1$ of values in $[0,1]$ and cells $\relint{\sigma}_1,\dots,\relint{\sigma}_k \in \relint{\Ksc}$ such that, for all $i = 1,\dots,k$, we have $\pi((s_{i-1},s_{i})) \subseteq \widetilde{\sigma_i}$.
\end{defi}

\noindent
Notice that the property of being \emph{simplicial} depends on the simplicial complex $\Ksc$, since a path might be simplicial with respect to a certain triangulation of $P$, but not with respect to another.
For PL-paths, this property holds independently from the simplicial complex considered.

\begin{lem}\label{lem:pl-are-simplicial}
    Any piecewise linear path is simplicial.
\end{lem}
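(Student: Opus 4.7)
\textbf{Proof plan for Lemma \ref{lem:pl-are-simplicial}.}

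My plan is to exploit the convexity of the cells $\relint{\sigma}$ together with the affineness of each linear piece of the PL-path, so that the preimage of each cell under the path restricted to any linear piece is a convex subset of $[0,1]$, hence an interval (possibly degenerate). Since the partition $\relint{\Ksc}$ is finite, each linear piece is chopped into only finitely many such intervals, and the collection of their endpoints yields the required simplicial subdivision.

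In more detail, let $\pi : [0,1] \toPL P$ be a PL-path with PL-subdivision $0 = r_0 < r_1 < \dots < r_k = 1$, so that on each $[r_i, r_{i+1}]$ the restriction of $\pi$ is affine into the line segment joining $\pi(r_i)$ and $\pi(r_{i+1})$. First, I would fix $i$ and consider, for every cell $\relint{\sigma} \in \relint{\Ksc}$, the preimage $A_{\sigma,i} := \pi|_{[r_i,r_{i+1}]}^{-1}(\relint{\sigma})$. Because $\relint{\sigma}$ is the relative interior of a simplex, it is a convex subset of $\R^m$, and the preimage of a convex set under an affine map is convex; hence $A_{\sigma,i}$ is a convex subset of $[r_i,r_{i+1}]$, i.e.\ an interval (possibly empty, a single point, open, closed, or half-open). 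By Lemma \ref{lemma:partition}, the family $\{A_{\sigma,i}\}_{\sigma \in \Ksc \setminus \{\emptyset\}}$ partitions $[r_i,r_{i+1}]$, and since $\Ksc$ is finite, only finitely many of these intervals are non-empty.

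Next, I would let $T_i \subseteq [r_i, r_{i+1}]$ be the finite set consisting of all endpoints of the non-empty intervals $A_{\sigma,i}$ (which includes $r_i$ and $r_{i+1}$), and set $T := \bigcup_{i=0}^{k-1} T_i$. Enumerating $T$ in increasing order gives a sequence $0 = s_0 < s_1 < \dots < s_n = 1$. I would then verify that this is the sought simplicial subdivision: for each $j = 1, \dots, n$, the open interval $(s_{j-1}, s_j)$ is contained in some $[r_i, r_{i+1}]$ (since all $r_i$ lie in $T$) and it does not meet any of the endpoints of the intervals $\{A_{\sigma,i}\}_\sigma$; hence, by convexity of each $A_{\sigma,i}$, the whole open interval $(s_{j-1}, s_j)$ lies inside a single $A_{\sigma_j,i}$, which gives $\pi((s_{j-1}, s_j)) \subseteq \relint{\sigma_j}$, as required by Definition \ref{def:simplicial-path}.

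The only mildly delicate step is arguing that $(s_{j-1}, s_j)$ cannot jump between two different $A_{\sigma,i}$'s without crossing an endpoint in $T_i$. This follows from the fact that the non-empty intervals $A_{\sigma,i}$ form a partition of $[r_i, r_{i+1}]$: if $(s_{j-1},s_j)$ met two such intervals, by connectedness it would meet a boundary point between them, which by construction belongs to $T_i \subseteq T$ and therefore would be some $s_\ell$ with $j-1 < \ell < j$, contradicting the consecutiveness of $s_{j-1}$ and $s_j$. I do not anticipate any other substantive obstacle.
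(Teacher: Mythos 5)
Your proposal is correct and follows essentially the same route as the paper's (much terser) proof: both rest on the observation that, by convexity of the cells and affineness of each linear piece, each piece meets each cell in a single (possibly empty) interval, and finiteness of $\Ksc$ and of the number of pieces then yields the required subdivision. Your version merely spells out the bookkeeping of collecting the interval endpoints, which the paper leaves implicit.
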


\section{Geometric Model Checking}
\label{sec:polyhedral-model-checking}

Given a polyhedral model $\model{X}$, this section is devoted to identifying a corresponding Kripke-style, finite model $\KM(\model{X})$. Notably, $\KM(\model{X})$ is also a topological model in the sense of~\cite{vBB07} when equipped with the Alexandrov topology, and it is a quotient of $\model{X}$ that preserves and reflects the semantics of each formula.
The goal of this section is to extend the standard Kripkean semantics of modal logic to the language of \SLCS, by defining a suitable semantics for $\slreach$ and by showing that $\model{X}$ and $\KM(\model{X})$ are logically equivalent, in the sense that, for all $x \in \model{X}$ and for all formulas $\Phi$, letting $\relint{\sigma}$ be the only cell such that $x\in\relint{\sigma}$, we have:
$\model{X},x \models \Phi$ if and only if $\KM(\model{X}),\relint{\sigma} \models \Phi$.
To do so, we introduce a suitable notion of path in $\KM(\model{X})$ corresponding to a simplicial path in $\model{X}$.
Model checking on $\model{X}$ can then be carried out using $\KM(\model{X})$.

\begin{defi}\label{def:KrpS}
	Given a polyhedral model $\model{X} = \tuple{ P, \Ksc, V}$, we define the Kripke model $\KM(\model{X}) = \tuple{\relint{\Ksc}, \relint{\face}, \relint{V} }$, where
	\begin{itemize}
		\item $\relint{\Ksc}$ is the simplicial partition of $|\Ksc|$ generated by $\Ksc$, as Defined in Lemma~\ref{lemma:partition},
		\item $\relint{\face} \subseteq \relint{\Ksc} \times \relint{\Ksc}$ with 
		$\relint{\sigma}_1 \; \relint{\face} \; \relint{\sigma}_2$ iff $\sigma_1 \; \face \; \sigma_2$, and
		\item $\relint{\sigma} \in \relint{V}(p)$ iff $\relint{\sigma} \subseteq V(p)$
	\end{itemize}
	where $\face$ is the face relation of the simplicial complex $\Ksc$.
\end{defi}

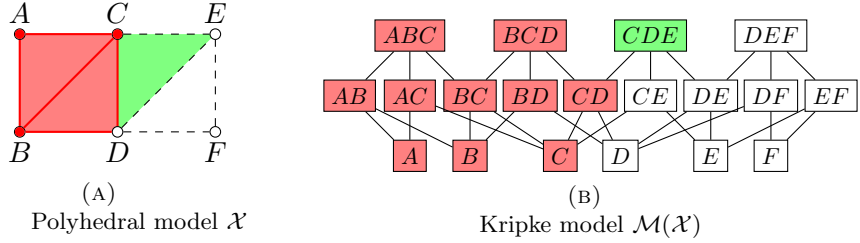
\begin{figure}
\centering
\subfloat[\hspace{-0.7cm}\parbox{1.5in}{Polyhedral model $\model{X}$}]{
	\label{subfig:polyModelX}
	\begin{tikzpicture}[scale=1.3,label distance=-2pt]%
	    \tikzstyle{point}=[circle,draw=black,fill=white,inner sep=0pt,minimum width=4pt,minimum height=4pt]
	    \node (p0)[point,label={270:$B$}] at (0,0) {};
	    	\filldraw [red] (p0) circle (1.25pt);
	    \node (p1)[point,label={ 90:$A$}] at (0,1) {};
	    	\filldraw [red] (p1) circle (1.25pt);
	    \node (p2)[point,label={270:$D$}] at (1,0) {};
	    \node (p3)[point,label={ 90:$C$}] at (1,1) {};
	    	\filldraw [red] (p3) circle (1.25pt);
	    \node (p4)[point,label={270:$F$}] at (2,0) {};
	    \node (p5)[point,label={ 90:$E$}] at (2,1) {};

	    \draw [red   ,thick](p0) -- (p1);
	    \draw [red   ,thick](p0) -- (p2);
	    \draw [red   ,thick](p0) -- (p3);
	    \draw [red   ,thick](p1) -- (p3);
	    \draw [red   ,thick](p2) -- (p3);
	    \draw [dashed      ](p2) -- (p4);
	    \draw [dashed      ](p2) -- (p5);
	    \draw [dashed      ](p3) -- (p5);
	    \draw [dashed      ](p4) -- (p5);
	        
	    \begin{scope}[on background layer]
	    \fill [fill=red!50  ](p0.center) -- (p1.center) -- (p3.center);
	    \fill [fill=red!50  ](p0.center) -- (p3.center) -- (p2.center);
	    \fill [fill=green!50](p2.center) -- (p3.center) -- (p5.center);
	    \end{scope}
	\end{tikzpicture}
}\hspace{2em}
\subfloat[$ $\\Kripke model $\KM(\model{X})$]{\label{subfig:KripkeModelM}
\begin{tikzpicture}[scale=0.8, every node/.style={transform shape}]
    \tikzstyle{kstate}=[rectangle,draw=black,fill=white]
    \node[kstate,fill=red!50  ] (P0) at (  1,0) {$B$};
    \node[kstate,fill=red!50  ] (P1) at (  0,0) {$A$};
    \node[kstate              ] (P2) at (3.5,0) {$D$};
    \node[kstate,fill=red!50  ] (P3) at (2.5,0) {$C$};
    \node[kstate              ] (P4) at (  6,0) {$F$};
    \node[kstate              ] (P5) at (  5,0) {$E$};

    \node[kstate,fill=red!50  ] (E0) at (-1,1) {$AB$};
    \node[kstate,fill=red!50  ] (E1) at ( 2,1) {$BD$};
    \node[kstate,fill=red!50  ] (E2) at ( 1,1) {$BC$};
    \node[kstate,fill=red!50  ] (E3) at ( 0,1) {$AC$};
    \node[kstate,fill=red!50  ] (E4) at ( 3,1) {$CD$};
	\node[kstate              ] (E5) at ( 6,1) {$DF$};
	\node[kstate              ] (E6) at ( 5,1) {$DE$};
	\node[kstate              ] (E7) at ( 4,1) {$CE$};
	\node[kstate              ] (E8) at ( 7,1) {$EF$};

    \node[kstate,fill=red!50  ] (T0) at ( 2,2) {$BCD$};
    \node[kstate,fill=red!50  ] (T1) at ( 0,2) {$ABC$};
    \node[kstate              ] (T2) at ( 6,2) {$DEF$};
    \node[kstate,fill=green!50] (T3) at ( 4,2) {$CDE$};

    \draw (P0) to (E0);
    \draw (P0) to (E1);
    \draw (P0) to (E2);

    \draw (P1) to (E0);
    \draw (P1) to (E3);

    \draw (P2) to (E1);
    \draw (P2) to (E4);
    \draw (P2) to (E5);
    \draw (P2) to (E6);

    \draw (P3) to (E2);
    \draw (P3) to (E3);
    \draw (P3) to (E4);
    \draw (P3) to (E7);

    \draw (P4) to (E5);
    \draw (P4) to (E8);

    \draw (P5) to (E6);
    \draw (P5) to (E7);
    \draw (P5) to (E8);

    \draw (E0) to (T1);
    
    \draw (E1) to (T0);
    
    \draw (E2) to (T0);
    \draw (E2) to (T1);
    
    \draw (E3) to (T1);
    
    \draw (E4) to (T0);
    \draw (E4) to (T3);

	\draw (E5) to (T2);

	\draw (E6) to (T2);
	\draw (E6) to (T3);

	\draw (E7) to (T3);

	\draw (E8) to (T2);

\end{tikzpicture}
}
\caption{\label{fig:kripke-model-running-example}
The polyhedral model $\model{X}$ of Figure~\ref{fig:bisimilarity-running-example} (\ref{subfig:polyModelX}) and its corresponding Kripke model $\KM(\model{X})$ (\ref{subfig:KripkeModelM}).
We indicate a cell by the set of the vertices of the corresponding simplex.
The accessibility relation $\relint{\face}$ is represented via its Hasse diagram (reflexive and transitive edges are omitted).
The atomic propositions $g$ and $r$ are indicated in green and red respectively.
}
\end{figure}

\noindent
Notice that, since $\face$ is reflexive, anti-symmetric and transitive, then so is $\relint{\face}$.
An example of a 2D polyhedral model together with its corresponding Kripke model $\KM(\model{X})$ is depicted in Figure~\ref{fig:kripke-model-running-example} (where reflexive and transitive relations are omitted).
Each cell is identified by the vertices of the corresponding simplexes indicated in the polyhedron on the left in Figure~\ref{fig:kripke-model-running-example}.

Such Kripke models can be depicted as a number of rows.
The nodes in the bottom row represent the zero-dimensional cells (i.e., the vertices), the middle row the one-dimensional ones (i.e., the line segments) and the top-row the two-dimensional cells (i.e., the triangles).
To represent 3D polyhedral models we would have one more row representing the three-dimensional cells (i.e., tetrahedrons), and so on for higher dimensions.
We emphasize that  $\KM(\model{X})$, although disregarding much of the information about a polyhedral model (e.g., the position and size of the simplexes) encodes all the information which is expressible using \SLCS, while being a finite --- thus computationally tractable --- representation of $\model{X}$.

By definition of the relation $\relint{\face}$, we have that for two different cells  $\relint{\sigma}_1$ and $\relint{\sigma}_2$,
$\relint{\sigma}_1 \, \relint{\face} \, \relint{\sigma}_2$ entails that the two cells are spatially adjacent --- more precisely, $\sigma_1$ is part of the \emph{boundary} of $\sigma_2$.
This becomes particularly relevant when studying PL-paths in $P$.
In fact, a PL-path can transit between two different cells $\relint{\sigma}_1$ and $\relint{\sigma}_2$ only if either $\relint{\sigma}_1 \relint{\face} \relint{\sigma}_2$ or $\relint{\sigma}_1 \relint{\facer} \relint{\sigma}_2$ holds.
Based on this intuition, the next definition introduces special paths on these Kripke models corresponding to PL-paths on the polyhedron.

\begin{defi}[$\pm$-path]\label{definition:pmpath}	
	Given a polyhedral model $\model{X}$, with $\KM(\model{X}) = \tuple{ \relint{\Ksc}, \relint{\face}, \relint{V} }$ as in Definition~\ref{def:KrpS}, let $\face^{\pm}$ be the relation $\relint{\face} \cup \relint{\facer}$.
	We say that $\pi: \{0,\dots,k\} \to \relint{\Ksc}$ is a \emph{$\pm$-path} (and we indicate it with $\pi: \{0,\dots,k\} \toPM \relint{\Ksc}$) if $k\geq 2$ and
		$\pi(0) \;\relint{\face}\; \pi(1) \;\face^{\pm}\; \pi(2) \;\face^{\pm}\; \dots \;\face^{\pm}\; \pi(k-1) \;\relint{\facer}\; \pi(k)$.
\end{defi}

\noindent

Intuitively, a $\pm$-path $\pi$ represents the cells traversed by a PL-path $\pi'$ in $P$; $\pi(0)$ is the cell containing  the first point $\pi'(0)$ of the PL-path $\pi'$; $\pi(1)$ is the cell containing the the next portion of PL-path $\pi'$, i.e., the image via the PL-path $\pi'$ of 
an interval of the form $(0,s)$ or $(0,s]$, for some $s$, and so on, up to cell $\pi(k)$ that contains the end-point $\pi'(1)$ of the PL-path $\pi'$. An example of a PL-path and its corresponding $\pm$-path is shown in Figure~\ref{fig:pm-path-example}. 
Notice that the first 
step $\pi(0)\pi(1)$ in the $\pm$-path $\pi$ reflects the move from the starting point in the PL-path to the subsequent segment on the PL-path in polyhedron $P$.  The starting point must obviously be `connected' to this subsequent path segment. This is only possible if the starting point is in the closure of the cell that contains the interval, in other words, $\pi(0) \,\relint{\face}\, \pi(1)$,
that is $\pi(0)$ must be on the boundary of $\Cl_P(\pi(1))$.
Moreover, if $\pi(0) = \pi(1)$, then, depending on the dimension of the cell containing the starting point, the subsequent portion of the path can be in the \emph{same} cell as the starting point. This is for example the case in path $y$ in Figure~\ref{fig:PLpaths}. Obviously, the path segment following the starting point cannot be part of a cell that, in the poset, is strictly below the cell containing the starting point, because then the point and the segment cannot be connected to each other. A similar situation holds for the end-point of the PL-path and the last segment of the path leading to it. In that case the end-point should be in a cell that is in the closure of the cell containing the last segment of the PL-path before reaching the end-point, in other words, $\pi(k-1)\,  \relint{\facer}\, \pi(k)$. Figure~\ref{fig:PLpaths} illustrates this situation in both path $x$ and path $y$ for different situations.

\begin{figure}
\centering
\subfloat[\hspace{-0.5cm}\parbox{1in}{PL-path $\pi'$}]{
	\label{subfig:path-in-polyModelX}
	\begin{tikzpicture}[scale=1.3,label distance=-2pt]
	    \tikzstyle{point}=[circle,draw=black,fill=white,inner sep=0pt,minimum width=4pt,minimum height=4pt]
	    \node (p0)[point,label={270:$B$}] at (0,0) {};
	    	\filldraw [red] (p0) circle (1.25pt);
	    \node (p1)[point,label={ 90:$A$}] at (0,1) {};
	    	\filldraw [red] (p1) circle (1.25pt);
	    \node (p2)[point,label={270:$D$}] at (1,0) {};
	    \node (p3)[point,label={ 90:$C$}] at (1,1) {};
	    	\filldraw [red] (p3) circle (1.25pt);
	    \node (p4)[point,label={270:$F$}] at (2,0) {};
	    \node (p5)[point,label={ 90:$E$}] at (2,1) {};

	    \draw [red   ,thick](p0) -- (p1);
	    \draw [red   ,thick](p0) -- (p2);
	    \draw [red   ,thick](p0) -- (p3);
	    \draw [red   ,thick](p1) -- (p3);
	    \draw [red   ,thick](p2) -- (p3);
	    \draw [dashed      ](p2) -- (p4);
	    \draw [dashed      ](p2) -- (p5);
	    \draw [dashed      ](p3) -- (p5);
	    \draw [dashed      ](p4) -- (p5);
	        
	    \begin{scope}[on background layer]
	    \fill [fill=red!50  ](p0.center) -- (p1.center) -- (p3.center);
	    \fill [fill=red!50  ](p0.center) -- (p3.center) -- (p2.center);
	    \fill [fill=green!50](p2.center) -- (p3.center) -- (p5.center);
	    \end{scope}

	    \node at (-.15,.5) {$x$};
	    \fill[blue] (0,.5) circle (.7pt);
	    \draw [blue,thick](0,.5) -- (.3,.7) -- (1,0);
	    \fill[blue] (1,0) circle (.7pt);

	\end{tikzpicture}
}\hspace{2em}
\subfloat[$  $\\\hspace{-0.25cm}\parbox{1.5in}{$  $Kripke Model $\KM(\model{X})$}]{\label{subfig:path-in-KripkeModelM}
\begin{tikzpicture}[scale=0.8, every node/.style={transform shape}]
    \tikzstyle{kstate}=[rectangle,draw=black,fill=white]
    \tikzset{->-/.style={decoration={
		markings,
		mark=at position #1 with {\arrow{>}}},postaction={decorate}}}
    \node[kstate,fill=red!50  ] (P0) at (  1,0) {$B$};
    \node[kstate,fill=red!50  ] (P1) at (  0,0) {$A$};
    \node[kstate              ,draw=blue,thick] (P2) at (3.5,0) {$D$};
    \node[kstate,fill=red!50  ] (P3) at (2.5,0) {$C$};
    \node[kstate              ] (P4) at (  6,0) {$F$};
    \node[kstate              ] (P5) at (  5,0) {$E$};

    \node[kstate,fill=red!50  ,draw=blue,thick] (E0) at (-1,1) {$AB$};
    \node[kstate,fill=red!50  ] (E1) at ( 2,1) {$BD$};
    \node[kstate,fill=red!50  ,draw=blue,thick] (E2) at ( 1,1) {$BC$};
    \node[kstate,fill=red!50  ] (E3) at ( 0,1) {$AC$};
    \node[kstate,fill=red!50  ] (E4) at ( 3,1) {$CD$};
	\node[kstate              ] (E5) at ( 6,1) {$DF$};
	\node[kstate              ] (E6) at ( 5,1) {$DE$};
	\node[kstate              ] (E7) at ( 4,1) {$CE$};
	\node[kstate              ] (E8) at ( 7,1) {$EF$};

    \node[kstate,fill=red!50  ,draw=blue,thick] (T0) at ( 2,2) {$BCD$};
    \node[kstate,fill=red!50  ,draw=blue,thick] (T1) at ( 0,2) {$ABC$};
    \node[kstate              ] (T2) at ( 6,2) {$DEF$};
    \node[kstate,fill=green!50] (T3) at ( 4,2) {$CDE$};

    \draw (P0) to (E0);
    \draw (P0) to (E1);
    \draw (P0) to (E2);

    \draw (P1) to (E0);
    \draw (P1) to (E3);

    \draw (P2) to (E1);
    \draw (P2) to (E4);
    \draw (P2) to (E5);
    \draw (P2) to (E6);

    \draw (P3) to (E2);
    \draw (P3) to (E3);
    \draw (P3) to (E4);
    \draw (P3) to (E7);

    \draw (P4) to (E5);
    \draw (P4) to (E8);

    \draw (P5) to (E6);
    \draw (P5) to (E7);
    \draw (P5) to (E8);

    \draw[blue,thick,->-=.5] (E0) to (T1);
    
    \draw (E1) to (T0);
    
    \draw[blue,thick,->-=.5] (E2) to (T0);
    \draw[blue,thick,->-=.5] (T1) to (E2);
    
    \draw (E3) to (T1);
    
    \draw (E4) to (T0);
    \draw (E4) to (T3);

	\draw (E5) to (T2);

	\draw (E6) to (T2);
	\draw (E6) to (T3);

	\draw (E7) to (T3);

	\draw (E8) to (T2);

	\begin{scope}[on background layer]
		\draw[blue,thick,->-=.2] (T0) to (P2);
		\draw[blue,thick,->-=.8] (T0) to (P2);
	\end{scope}

\end{tikzpicture}
}
\caption{\label{fig:pm-path-example}On the left, a PL-path $\pi'$ (in blue);
on the right, the corresponding $\pm$-path $\pi$ (again in blue).
Only $\pi'(0)$ belongs to the simplex $AB$ and the path enters immediately into the cell $\relint{ABC}$:
this is possible since $\relint{AB} \,\relint{\face}\, \relint{ABC}$.
Likewise, the path ends with a transition from the cell $\relint{BCD}$ to $\relint{D}$, which is possible since $\relint{BCD} \,\relint{\facer}\, \relint{D}$.
These are exactly the requirements on the first and last steps of a $\pm$-path.
}
\end{figure}

We are now ready to define the formal semantics of \SLCS on $\KM(\model{X})$.

\begin{defi}[\SLCS semantics on $\KM(\model{X})$]\label{def:kripke-semantics}
	Consider $\KM(\model{X}) = \tuple{ \relint{\Ksc}, \relint{\face}, \relint{V} }$.
	Given $\relint{\sigma} \in \relint{\Ksc}$, satisfaction $\KM(\model{X}), \relint{\sigma} \vDash \phi$ over formulas $\phi$ is given by the following inductive clauses, where we let $\semKM{\phi}$ denote the set $\{\relint{\sigma} \in \relint{\Ksc} \,|\, \KM(\model{X}), \relint{\sigma} \vDash \phi\}$:
	\begin{equation*}
	\begin{array}{lcl}
		\KM(\model{X}), \relint{\sigma} \vDash \top
			&\multicolumn{2}{l}{\text{always holds}}  \\
		\KM(\model{X}), \relint{\sigma} \vDash p
			&\iff
			&\relint{\sigma} \in \relint{V}(p) \;\,\text{for } p \in \AP  \\
		\KM(\model{X}), \relint{\sigma} \vDash \neg \phi
			&\iff
			&\KM(\model{X}), \relint{\sigma} \nvDash \phi  \\
		\KM(\model{X}), \relint{\sigma} \vDash \phi \land \psi
			&\iff
			&\KM(\model{X}), \relint{\sigma} \vDash \phi \;\text{and}\; \KM(\model{X}), \relint{\sigma} \vDash \psi  \\
		\KM(\model{X}), \relint{\sigma} \vDash \Box \phi
			&\iff
			&\forall \relint{\tau} \in \relint{\Ksc}.\;  \text{if}\;  \relint{\sigma} \,\relint{\face}\, \relint{\tau}  \;\text{then}\;  \KM(\model{X}), \relint{\tau} \vDash \phi \\
		\KM(\model{X}), \relint{\sigma} \vDash \slreach( \phi, \psi )
			&\iff
			&\text{there exists a $\pm$-path $\pi: \{0,\dots,k\} \toPM \relint{\Ksc}$ such that} \\
			&&\pi(0) = \relint{\sigma},\,
				\pi(k) \in \semKM{\psi}
				\text{and } \pi(\{1,\dots,k-1\}) \subseteq \semKM{\phi}
	\end{array}
	\end{equation*}
\end{defi}

\iflong
\noindent
For $\phi$ a formula, we indicate with $\Pi_\phi \subseteq S$ the set of points satisfying $\phi$, so that the clause for $\slreach$ can be rewrite as follows:
\begin{equation*}
\begin{array}{lcl}
	\KM, s \vDash \slreach( \phi, \psi )
		&\iff
		&\exists \pi: \{0,\dots,k\} \toPM S.\begin{cases}
					\pi(0) = s  \\
					\forall j \in \{1,\dots,k-1\}.\; \pi(j) \in \Pi_\phi  \\
					\pi(k) \in \Pi_\psi
				\end{cases}
\end{array}
\end{equation*}
\fi
\noindent
The clauses for the Boolean operators and for $\Box$ are the standard interpretation of modal formulas on Kripke models (and on topological spaces, via the Alexandrov topology, see~\cite{vBB07}).
In the semantic clause for $\slreach$ we use $\pm$-paths since, as previously pointed out, these paths naturally correspond to PL-paths and Lemma~\ref{lem:piecewise-linear-semantics} allows us to restrict our attention to this class of topological paths.
We can give an example of this correspondence based on Figure~\ref{fig:pm-path-example}:
the PL-path $\pi'$ (on the left) witnesses that $x \in \sem{\slreach(r,\neg (r\vee g))}$, and the corresponding $\pm$-path $\pi$ (on the right) witnesses that $\relint{AB} \in \semKM{\slreach(r,\neg (r\vee g))}$.

The following theorem shows that this correspondence holds for every formula of the logic.

\begin{thm}\label{thm:kripke-model-checking}
	Let $\model{X} = \tuple{P, \Ksc, V}$ be a polyhedral model and  $x$ a point of $P$.
	Let $\sigma \in \Ksc$ be the unique simplex such that $x \in \relint{\sigma}$.
	For every formula $\phi$ of \SLCS we have $\model{X}, x \vDash \phi
	\iff
	\KM(\model{X}), \relint{\sigma} \vDash \phi$.
\end{thm}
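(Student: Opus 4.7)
The plan is to proceed by structural induction on $\phi$. The base cases $\phi=\top$ and $\phi=p$ are immediate from Definition~\ref{def:polyhedralModel}: since $V(p)$ is a union of cells of $\relint{\Ksc}$, we have $x \in V(p)$ iff $\relint{\sigma} \subseteq V(p)$ iff $\relint{\sigma} \in \relint{V}(p)$. The Boolean cases $\neg$ and $\land$ follow immediately from the inductive hypothesis. For $\phi = \Box\psi$, the cleanest route is to invoke Theorem~\ref{thm:near-derived-from-slreach}, which rewrites $\Box\psi$ as $\neg\slreach(\neg\psi,\top)$ on $\model{X}$, reducing this case to the reachability case below; the analogous rewrite holds on $\KM(\model{X})$ by the standard correspondence between the Kripke $\Box$ on a poset and the Alexandrov interior induced by $\relint{\face}$.

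The crux is the case $\phi = \slreach(\psi_1, \psi_2)$. For the forward direction, I would invoke Lemma~\ref{lem:piecewise-linear-semantics} to obtain a witnessing PL-path $\pi' : [0,1] \toPL P$, then apply Lemma~\ref{lem:pl-are-simplicial} to extract breakpoints $0 = s_0 < \cdots < s_n = 1$ and cells $\relint{\sigma}_1, \dots, \relint{\sigma}_n$ with $\pi'((s_{i-1}, s_i)) \subseteq \relint{\sigma}_i$. Letting $\relint{\tau}_i$ denote the unique cell containing $\pi'(s_i)$ (so $\relint{\tau}_0 = \relint{\sigma}$), I would build the $\pm$-path of length $2n$ by interleaving, setting $\pi(2i) := \relint{\tau}_i$ and $\pi(2i+1) := \relint{\sigma}_{i+1}$. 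The required adjacencies follow from the observation that any point in the topological closure of $\relint{\sigma}_{i+1}$ lies in $\sigma_{i+1}$, hence $\tau_i \face \sigma_{i+1}$ and symmetrically $\tau_{i+1} \face \sigma_{i+1}$; this matches the asymmetric opening and closing requirements of Definition~\ref{definition:pmpath}. Each intermediate cell meets $\sem{\psi_1}$ and hence is wholly contained in $\sem{\psi_1}$ by Lemma~\ref{lemma:semanticsTriangulation}, whence the inductive hypothesis places it in $\semKM{\psi_1}$; the terminal cell $\relint{\tau}_n$ lies in $\semKM{\psi_2}$ analogously.

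For the converse, given a $\pm$-path $\pi : \{0, \dots, k\} \toPM \relint{\Ksc}$ with $\pi(0) = \relint{\sigma}$, I would construct a PL-path by concatenating the segments $[x, b_1], [b_1, b_2], \dots, [b_{k-1}, b_k]$, where $b_i$ is the barycentre of $\sigma_i$. For each step $\pi(i) \face^{\pm} \pi(i+1)$ one of the two simplexes is a face of the other, and since the relative interior of a simplex is stable under convex combinations that assign strictly positive weight to an interior point, the open segment from $b_i$ to $b_{i+1}$ lies in the relative interior of the larger simplex. The opening segment $[x, b_1]$ has the same property with respect to $\relint{\sigma}_1$, because $\sigma \face \sigma_1$ forces $x \in \sigma_1$ while $b_1 \in \relint{\sigma}_1$. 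Thus the open portion of the PL-path is covered by relative interiors of intermediate cells of $\pi$, each of which lies in $\sem{\psi_1}$ by the inductive hypothesis and Lemma~\ref{lemma:semanticsTriangulation}, while the endpoint $b_k \in \relint{\sigma}_k \subseteq \sem{\psi_2}$. The hardest part is aligning the asymmetric endpoint conventions of a $\pm$-path with the open-interval conventions of the $\slreach$ semantics; once the interleaving and the barycentre construction are in place, formula satisfaction transfers between $\model{X}$ and $\KM(\model{X})$ cell-by-cell via Lemma~\ref{lemma:semanticsTriangulation}.
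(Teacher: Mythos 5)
Your treatment of the crux, the $\slreach$ case, is essentially the paper's own argument in both directions: forward, you pass through Lemma~\ref{lem:piecewise-linear-semantics} and Lemma~\ref{lem:pl-are-simplicial}, read off the cells of the open intervals and of the breakpoints, and interleave them into a $\pm$-path whose adjacencies come from the closure criterion of Lemma~\ref{lemma:faceInTermsOfPartition}; backward, you join barycentres by the linear paths of Proposition~\ref{prop:pathToBorder}, with each open segment absorbed into the relative interior of the larger of the two adjacent cells. The only cosmetic difference there is that the paper starts its PL-path at $b_{\pi(0)}$ and transfers to $x$ via Lemma~\ref{lemma:semanticsTriangulation}, whereas you start at $x$ directly; both are fine.

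Where you genuinely diverge is the $\Box$ case. The paper proves it directly, unwinding $x \notin \Cl(\sem{\neg\psi})$ through Lemmas~\ref{lemma:semanticsTriangulation} and~\ref{lemma:faceInTermsOfPartition} into the Kripke clause $\forall \relint{\tau}.\ \relint{\sigma}\,\relint{\face}\,\relint{\tau} \Rightarrow \relint{\tau} \vDash \psi$. You instead reduce $\Box$ to $\slreach$ via Theorem~\ref{thm:near-derived-from-slreach}, which is an attractive economy, but the justification you give for the Kripke side (``the standard correspondence between the Kripke $\Box$ on a poset and the Alexandrov interior'') does not quite apply as stated: what you actually need is that $\KM(\model{X}), \relint{\sigma} \vDash \slreach(\neg\psi,\top)$ holds iff some $\relint{\tau}$ with $\relint{\sigma}\,\relint{\face}\,\relint{\tau}$ satisfies $\neg\psi$, and $\slreach$ on $\KM(\model{X})$ is interpreted via $\pm$-paths of length $k \geq 2$ with asymmetric endpoint constraints, not via a single accessibility step. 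The equivalence does hold — a one-step witness $\relint{\sigma}\,\relint{\face}\,\relint{\tau}$ pads out to the $\pm$-path $\relint{\sigma}\,\relint{\face}\,\relint{\tau}\,\relint{\facer}\,\relint{\tau}$ using reflexivity of $\relint{\face}$, and conversely $\pi(1)$ of any witnessing $\pm$-path is such a $\relint{\tau}$ — but this half-page check must be written out; as it stands it is the one under-justified step in your proposal. Once that is supplied, your route buys a slightly shorter proof (one fewer direct topological computation), at the cost of making the $\Box$ case depend on the correctness of the $\slreach$ case and on Theorem~\ref{thm:near-derived-from-slreach}, whereas the paper keeps the two cases independent.
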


\subsection{Geometric Model-checking Algorithm}
We briefly present the main aspects of the geometric model checking algorithm for \SLCS over polyhedra. 
The model checking algorithm takes as inputs a Kripke model $\KM(\model{X})$ of polyhedron $\model{X}$ and an \SLCS formula $\phi$.
The output is the satisfaction set $\sat(\phi)= \{\relint{\sigma} \in \relint{\Ksc} \,|\, \KM(\model{X}), \relint{\sigma} \vDash \phi\} = \semKM{\phi}$ of nodes in the model $\KM(\model{X})$ that correspond to the set of cells in $\model{X}$ that satisfy formula $\phi$, i.e., the algorithm is a \emph{global} model checking algorithm.
The satisfaction set $\sat$ is defined recursively on the structure of \SLCS formulas in the usual way (see for example~\cite{BK08}).
The algorithms for the Boolean operators are straightforward, and thus omitted.
The algorithm for the $\Box$ operator takes a set $\sat(\phi) = \semKM{\phi}$ and computes the corresponding satisfaction set $\sat(\Box \phi) = \semKM{\Box \phi} = \{  \relint{\sigma} \in \sat(\phi) \,|\,  \OUT(\relint{\sigma}) \subseteq \sat(\phi)  \}$, where $\OUT(\relint{\sigma})$ is the out-neighbourhood of $\relint{\sigma}$, i.e., $\OUT(\relint{\sigma}) = \{\relint{\tau} \,|\, \relint{\sigma} \,\relint{\face}\, \relint{\tau}\}$.
We omit a detailed description of the algorithm, since it only involves basic set-theoretic operations.
We give a more detailed description of the algorithm for the spatial reachability operator $\GL$ that, given $\sat(\phi)=\semKM{\phi}$ and $\sat(\psi)=\semKM{\psi}$, computes the satisfaction set $\sat(\GL(\phi,\psi))= \semKM{\GL(\phi,\psi)}$.

The semantics of $\GL$, as of Definition~\ref{def:kripke-semantics}, is computed via a variant of the flooding procedure that was already successfully employed in our previous work on point-based spatial model checking algorithms~\cite{CLLM16,CGLLM15,Gri16}, retaining its asymptotic complexity (in the geometric case linear in the number of nodes in $\KM(\model{X})$). 

The pseudo-code with comments is reported in Algorithm~\ref{fig:pseudo-code}.
In the code, we make use of the additional operators on cells $\OUT$ (previously introduced) and $\IN$, the in-neighbourhood,  defined as $\IN(\relint{\sigma}) = \{\relint{\tau} \,|\, \relint{\sigma} \,\relint{\facer}\, \relint{\tau}\}$.
Both operators are lifted to sets in the obvious way.
Therein, for brevity, we call a $\pm$-path $\pi$ a ``good'' path if it is witnessing the formula 
$\GL(\phi,\psi)$, i.e., the path must map (0,1) to $\semKM{\phi}$ and $\{1\}$ to $\semKM{\psi}$. %
Figure~\ref{fig:flooding} shows a step-by-step example of  an execution of the flooding algorithm described in Algorithm~\ref{fig:pseudo-code} for the property $\gamma(r,g)$, where $r$ denoted red and $g$ denotes green, in the polyhedral model and corresponding Kripke model shown in Figure~\ref{subfig:models}. Recall that the definition of the reachability operator $\gamma$ states that the first point of the paths passing by red and reaching green does not need to be red itself, which explains why also point $B$ is included in the results. 

\begin{figure}
\begin{algo}{\label{fig:pseudo-code}Pseudo-code for model checking the spatial reachability operator.}\\
\begin{lstlisting}[
    escapeinside={(*}{*)},
    basicstyle=\ttfamily\footnotesize,
    breaklines=true,
    postbreak=\mbox{$\hookrightarrow$\space},
    numbers=left,
    xleftmargin=2em,
    framexleftmargin=1.5em
]
    Input:  (*$\semKM{\phi}, \semKM{\psi}$*) sets of nodes of (*$\KM(\model{X})$*) satisfying (*$\phi$*) resp. (*$\psi$*)
    Output: (*$\semKM{\GL(\phi,\psi)}$*) satisfaction set of nodes in (*$\KM(\model{X})$*)
    
    // frontier: the points queued for the next iteration. 
    // Initialized with the next to last points of a good path.
    let frontier = (*$\semKM{\phi} \cap\OUT(\semKM{\psi})$*)
    
    // flooded: all the points that are in the middle of a good path.
    let flooded = frontier
    
    while frontier(*$\ne\emptyset$*):
        let (*$\sigma$*) = frontier.pop()
        for every (*$\tau \in \IN(\sigma) \cup \OUT(\sigma)$*):
            if (*$\tau \notin$*)flooded and (*$\tau \in \semKM{\phi}$*):
                frontier.add((*$\tau$*))
                flooded.add((*$\tau$*))
    
    // result: the starting points of a good path.
    let result = (*$\IN($*)flooded(*$)$*)
    
    return result
\end{lstlisting}
\end{algo}
\end{figure}

\begin{figure}[p]
\centering
\subfloat[\label{subfig:models}A polyhedral model $\model{X}$ and its corresponding Kripke model $\KM(\model{X})$.]{
\begin{tikzpicture}[scale=1.3,label distance=-2pt]
    \tikzstyle{point}=[circle,draw=black,fill=white,inner sep=0pt,minimum width=4pt,minimum height=4pt]

    \node (p0)[point,label={270:$B$}] at (0,0) {};
    \node (p1)[point,label={ 90:$A$}] at (0,1) {};
    	\filldraw [red  ] (p1) circle (1.25pt);
    \node (p2)[point,label={270:$D$}] at (1,0) {};
    	\filldraw [green] (p2) circle (1.25pt);
    \node (p3)[point,label={ 90:$C$}] at (1,1) {};
    	\filldraw [green] (p3) circle (1.25pt);

    \draw [red   ,thick](p0) -- (p1);
    \draw [dashed      ](p0) -- (p2);
    \draw [red   ,thick](p0) -- (p3);
    \draw [red   ,thick](p1) -- (p3);
    \draw [dashed      ](p2) -- (p3);
        
    \begin{scope}[on background layer]
    	\fill [fill=red!50  ](p0.center) -- (p1.center) -- (p3.center);
    \end{scope}
\end{tikzpicture}
\hspace{2em}
\begin{tikzpicture}[scale=0.8, every node/.style={transform shape}]
    \tikzstyle{kstate}=[rectangle,draw=black,fill=white]
    \tikzset{->-/.style={decoration={
		markings,
		mark=at position #1 with {\arrow{>}}},postaction={decorate}}}

    \node[kstate              ] (P0) at ( .5,0) {$B$};
    \node[kstate,fill=red!50  ] (P1) at (-.5,0) {$A$};
    \node[kstate,fill=green!50] (P2) at (2.5,0) {$D$};
    \node[kstate,fill=green!50] (P3) at (1.5,0) {$C$};
    
    \node[kstate,fill=red!50  ] (E0) at (-1,1) {$AB$};
    \node[kstate              ] (E1) at ( 2,1) {$BD$};
    \node[kstate,fill=red!50  ] (E2) at ( 1,1) {$BC$};
    \node[kstate,fill=red!50  ] (E3) at ( 0,1) {$AC$};
    \node[kstate              ] (E4) at ( 3,1) {$CD$};
	
    \node[kstate              ] (T0) at ( 2,2) {$BCD$};
    \node[kstate,fill=red!50  ] (T1) at ( 0,2) {$ABC$};

    \draw (P0) to (E0);
    \draw (P0) to (E1);
    \draw (P0) to (E2);

    \draw (P1) to (E0);
    \draw (P1) to (E3);

    \draw (P2) to (E1);
    \draw (P2) to (E4);

    \draw (P3) to (E2);
    \draw (P3) to (E3);
    \draw (P3) to (E4);

    \draw (E0) to (T1);
    
    \draw (E1) to (T0);
    
    \draw (E2) to (T0);
    \draw (E2) to (T1);
    
    \draw (E3) to (T1);
    
    \draw (E4) to (T0);
\end{tikzpicture}
}\\[2em]
\subfloat[\label{subfig:firstComp}Initialization of \texttt{frontier} (lines 4-5).]{
\begin{tikzpicture}[scale=1.3,label distance=-2pt]
    \tikzstyle{point}=[circle,draw=black,fill=white,inner sep=0pt,minimum width=4pt,minimum height=4pt]

    \node (p0)[point,label={270:$B$}] at (0,0) {};
    \node (p1)[point,label={ 90:$A$}] at (0,1) {};
    	\filldraw [red  ] (p1) circle (1.25pt);
    \node (p2)[point,label={270:$D$}] at (1,0) {};
    	\filldraw [green] (p2) circle (1.25pt);
    \node (p3)[point,label={ 90:$C$}] at (1,1) {};
    	\filldraw [green] (p3) circle (1.25pt);

    \draw [red   ,thick](p0) -- (p1); %
    \draw [dashed      ](p0) -- (p2); %
    \draw [blue  ,thick](p0) -- (p3); %
    \draw [blue  ,thick](p1) -- (p3); %
    \draw [dashed      ](p2) -- (p3); %
        
    \begin{scope}[on background layer]
    	\fill [fill=blue!30 ](p0.center) -- (p1.center) -- (p3.center);
    \end{scope}
\end{tikzpicture}
\hspace{1em}
\begin{tikzpicture}[scale=0.8, every node/.style={transform shape}]
    \tikzstyle{kstate}=[rectangle,draw=black,fill=white]
    \tikzset{->-/.style={decoration={
		markings,
		mark=at position #1 with {\arrow{>}}},postaction={decorate}}}

    \node[kstate              ] (P0) at ( .5,0) {$B$};
    \node[kstate,fill=red!50  ] (P1) at (-.5,0) {$A$};
    \node[kstate,fill=green!50] (P2) at (2.5,0) {$D$};
    \node[kstate,fill=green!50] (P3) at (1.5,0) {$C$};
    
    \node[kstate,fill=red!50  ] (E0) at (-1,1) {$AB$};
    \node[kstate              ] (E1) at ( 2,1) {$BD$};
    \node[kstate,fill=blue,text=white] (E2) at ( 1,1) {$BC$};
    \node[kstate,fill=blue,text=white] (E3) at ( 0,1) {$AC$};
    \node[kstate              ] (E4) at ( 3,1) {$CD$};
	
    \node[kstate              ] (T0) at ( 2,2) {$BCD$};
    \node[kstate,fill=blue,text=white] (T1) at ( 0,2) {$ABC$};

    \draw (P0) to (E0);
    \draw (P0) to (E1);
    \draw (P0) to (E2);

    \draw (P1) to (E0);
    \draw (P1) to (E3);

    \draw (P2) to (E1);
    \draw (P2) to (E4);

    \draw (P3) to (E2);
    \draw (P3) to (E3);
    \draw (P3) to (E4);

    \draw (E0) to (T1);
    
    \draw (E1) to (T0);
    
    \draw (E2) to (T0);
    \draw (E2) to (T1);
    
    \draw (E3) to (T1);
    
    \draw (E4) to (T0);
\end{tikzpicture}
\hspace{1em}
$\begin{array}{ll}
	{\color{blue}\mathtt{frontier}}
		&= \{AC,BC,ABC\}  \phantom{A,AB}
\end{array}$
}\\[1em]
\subfloat[\label{subfig:secondComp}Initialization of \texttt{flooded} and first flooding iteration (lines 7-15).]{
\begin{tikzpicture}[scale=1.3,label distance=-2pt]
    \tikzstyle{point}=[circle,draw=black,fill=white,inner sep=0pt,minimum width=4pt,minimum height=4pt]

    \node (p0)[point,label={270:$B$}] at (0,0) {};
    \node (p1)[point,label={ 90:$A$}] at (0,1) {};
    	\filldraw [blue ] (p1) circle (1.25pt);
    \node (p2)[point,label={270:$D$}] at (1,0) {};
    	\filldraw [green] (p2) circle (1.25pt);
    \node (p3)[point,label={ 90:$C$}] at (1,1) {};
    	\filldraw [green] (p3) circle (1.25pt);

    \draw [blue  ,thick](p0) -- (p1); %
    \draw [dashed      ](p0) -- (p2); %
    \draw [orange,thick](p0) -- (p3); %
    \draw [orange,thick](p1) -- (p3); %
    \draw [dashed      ](p2) -- (p3); %
        
    \begin{scope}[on background layer]
    	\fill [fill=orange!30 ](p0.center) -- (p1.center) -- (p3.center);
    \end{scope}
\end{tikzpicture}
\hspace{1em}
\begin{tikzpicture}[scale=0.8, every node/.style={transform shape}]
    \tikzstyle{kstate}=[rectangle,draw=black,fill=white]
    \tikzset{->-/.style={decoration={
		markings,
		mark=at position #1 with {\arrow{>}}},postaction={decorate}}}

    \node[kstate              ] (P0) at ( .5,0) {$B$};
    \node[kstate,fill=blue,text=white] (P1) at (-.5,0) {$A$};
    \node[kstate,fill=green!50] (P2) at (2.5,0) {$D$};
    \node[kstate,fill=green!50] (P3) at (1.5,0) {$C$};
    
    \node[kstate,fill=blue,text=white] (E0) at (-1,1) {$AB$};
    \node[kstate              ] (E1) at ( 2,1) {$BD$};
    \node[kstate,fill=orange] (E2) at ( 1,1) {$BC$};
    \node[kstate,fill=orange] (E3) at ( 0,1) {$AC$};
    \node[kstate              ] (E4) at ( 3,1) {$CD$};
	
    \node[kstate              ] (T0) at ( 2,2) {$BCD$};
    \node[kstate,fill=orange] (T1) at ( 0,2) {$ABC$};

    \draw (P0) to (E0);
    \draw (P0) to (E1);
    \draw (P0) to (E2);

    \draw (P1) to (E0);
    \draw (P1) to (E3);

    \draw (P2) to (E1);
    \draw (P2) to (E4);

    \draw (P3) to (E2);
    \draw (P3) to (E3);
    \draw (P3) to (E4);

    \draw (E0) to (T1);
    
    \draw (E1) to (T0);
    
    \draw (E2) to (T0);
    \draw (E2) to (T1);
    
    \draw (E3) to (T1);
    
    \draw (E4) to (T0);
\end{tikzpicture}
\hspace{1em}
$\begin{array}{ll}
	{\color{blue}\mathtt{frontier}}
		&= \{A,AB\} \\
	{\color{orange}\mathtt{flooded}}
		&= \{AC,BC,ABC\}  \phantom{A,AB}
\end{array}$
}\\[1em]
\subfloat[\label{subfig:thirdComp}Second flooding iteration (lines 10-15).]{
\begin{tikzpicture}[scale=1.3,label distance=-2pt]
    \tikzstyle{point}=[circle,draw=black,fill=white,inner sep=0pt,minimum width=4pt,minimum height=4pt]

    \node (p0)[point,label={270:$B$}] at (0,0) {};
    \node (p1)[point,label={ 90:$A$}] at (0,1) {};
    	\filldraw [orange] (p1) circle (1.25pt);
    \node (p2)[point,label={270:$D$}] at (1,0) {};
    	\filldraw [green] (p2) circle (1.25pt);
    \node (p3)[point,label={ 90:$C$}] at (1,1) {};
    	\filldraw [green] (p3) circle (1.25pt);

    \draw [orange ,thick](p0) -- (p1); %
    \draw [dashed      ](p0) -- (p2); %
    \draw [orange,thick](p0) -- (p3); %
    \draw [orange,thick](p1) -- (p3); %
    \draw [dashed      ](p2) -- (p3); %
        
    \begin{scope}[on background layer]
    	\fill [fill=orange!30 ](p0.center) -- (p1.center) -- (p3.center);
    \end{scope}
\end{tikzpicture}
\hspace{1em}
\begin{tikzpicture}[scale=0.8, every node/.style={transform shape}]
    \tikzstyle{kstate}=[rectangle,draw=black,fill=white]
    \tikzset{->-/.style={decoration={
		markings,
		mark=at position #1 with {\arrow{>}}},postaction={decorate}}}

    \node[kstate              ] (P0) at ( .5,0) {$B$};
    \node[kstate,fill=orange] (P1) at (-.5,0) {$A$};
    \node[kstate,fill=green!50] (P2) at (2.5,0) {$D$};
    \node[kstate,fill=green!50] (P3) at (1.5,0) {$C$};
    
    \node[kstate,fill=orange] (E0) at (-1,1) {$AB$};
    \node[kstate              ] (E1) at ( 2,1) {$BD$};
    \node[kstate,fill=orange] (E2) at ( 1,1) {$BC$};
    \node[kstate,fill=orange] (E3) at ( 0,1) {$AC$};
    \node[kstate              ] (E4) at ( 3,1) {$CD$};
	
    \node[kstate              ] (T0) at ( 2,2) {$BCD$};
    \node[kstate,fill=orange] (T1) at ( 0,2) {$ABC$};

    \draw (P0) to (E0);
    \draw (P0) to (E1);
    \draw (P0) to (E2);

    \draw (P1) to (E0);
    \draw (P1) to (E3);

    \draw (P2) to (E1);
    \draw (P2) to (E4);

    \draw (P3) to (E2);
    \draw (P3) to (E3);
    \draw (P3) to (E4);

    \draw (E0) to (T1);
    
    \draw (E1) to (T0);
    
    \draw (E2) to (T0);
    \draw (E2) to (T1);
    
    \draw (E3) to (T1);
    
    \draw (E4) to (T0);
\end{tikzpicture}
\hspace{1em}
$\begin{array}{ll}
	{\color{blue}\mathtt{frontier}}
		&= \emptyset \\
	{\color{orange}\mathtt{flooded}}
		&= \{A,AB,AC,BC,ABC\}
\end{array}$
}\\[1em]
\subfloat[\label{subfig:fourthComp}Computation of \texttt{result} (lines 17-18).]{
\begin{tikzpicture}[scale=1.3,label distance=-2pt]
    \tikzstyle{point}=[circle,draw=black,fill=white,inner sep=0pt,minimum width=4pt,minimum height=4pt]

    \node (p0)[point,label={270:$B$}] at (0,0) {};
    	\filldraw [purple] (p0) circle (1.25pt);
    \node (p1)[point,label={ 90:$A$}] at (0,1) {};
    	\filldraw [purple ] (p1) circle (1.25pt);
    \node (p2)[point,label={270:$D$}] at (1,0) {};
    	\filldraw [green] (p2) circle (1.25pt);
    \node (p3)[point,label={ 90:$C$}] at (1,1) {};
    	\filldraw [purple] (p3) circle (1.25pt);

    \draw [purple  ,thick](p0) -- (p1); %
    \draw [dashed      ](p0) -- (p2); %
    \draw [purple  ,thick](p0) -- (p3); %
    \draw [purple  ,thick](p1) -- (p3); %
    \draw [dashed      ](p2) -- (p3); %
        
    \begin{scope}[on background layer]
    	\fill [fill=purple!30 ](p0.center) -- (p1.center) -- (p3.center);
    \end{scope}
\end{tikzpicture}
\hspace{1em}
\begin{tikzpicture}[scale=0.8, every node/.style={transform shape}]
    \tikzstyle{kstate}=[rectangle,draw=black,fill=white]
    \tikzset{->-/.style={decoration={
		markings,
		mark=at position #1 with {\arrow{>}}},postaction={decorate}}}

    \node[kstate,fill=purple,text=white] (P0) at ( .5,0) {$B$};
    \node[kstate,fill=purple,text=white] (P1) at (-.5,0) {$A$};
    \node[kstate,fill=green!50] (P2) at (2.5,0) {$D$};
    \node[kstate,fill=purple,text=white] (P3) at (1.5,0) {$C$};
    
    \node[kstate,fill=purple,text=white] (E0) at (-1,1) {$AB$};
    \node[kstate              ] (E1) at ( 2,1) {$BD$};
    \node[kstate,fill=purple,text=white] (E2) at ( 1,1) {$BC$};
    \node[kstate,fill=purple,text=white] (E3) at ( 0,1) {$AC$};
    \node[kstate              ] (E4) at ( 3,1) {$CD$};
	
    \node[kstate              ] (T0) at ( 2,2) {$BCD$};
    \node[kstate,fill=purple,text=white] (T1) at ( 0,2) {$ABC$};

    \draw (P0) to (E0);
    \draw (P0) to (E1);
    \draw (P0) to (E2);

    \draw (P1) to (E0);
    \draw (P1) to (E3);

    \draw (P2) to (E1);
    \draw (P2) to (E4);

    \draw (P3) to (E2);
    \draw (P3) to (E3);
    \draw (P3) to (E4);

    \draw (E0) to (T1);
    
    \draw (E1) to (T0);
    
    \draw (E2) to (T0);
    \draw (E2) to (T1);
    
    \draw (E3) to (T1);
    
    \draw (E4) to (T0);
\end{tikzpicture}
\hspace{1em}
$\begin{array}{ll}
	\mathtt{flooded}
		&= \{A,AB,AC,BC,ABC\} \\
	{\color{purple}\mathtt{result}}
		&= \{A,B,C,AB, \\
		& \hspace{5em} AC,BC,ABC\}
\end{array}$
}

\caption{\label{fig:flooding}An example of application of the algorithm described in Algorithm~\ref{fig:pseudo-code}.
Subfigure~\ref{subfig:models} depicts a polyhedral model $\model{X}$ (on the left) and its corresponding Kripke model $\KM(\model{X})$.
Subfigures from~\ref{subfig:firstComp} %
to~\ref{subfig:fourthComp}  depict the steps performed on $\KM(\model{X})$ to compute the satisfaction set $\sat(\gamma(r,g))$, with reference to the pseudo-code in Algorithm~\ref{fig:pseudo-code}.
For each subfigure the value of the relevant variables at the end of the computation step are indicated and illustrated in corresponding colours in both models (blue for \texttt{frontier}, orange for \texttt{flooded} and purple for \texttt{result}).
}
\end{figure}

We include a concise proof of the correctness of the algorithm.

\begin{proof}[Correctness, sketch]
To be consistent with the comments in the pseudo-code, we keep calling a $\pm$-path $\pi:\{0,\dots,k\} \to \relint{\Ksc}$ witnessing the satisfaction of $\GL(\phi,\psi)$ a ``good'' path.

First, notice that we can divide a good path into three parts:
the initial point $\pi(0)$, the points on the central segment $\pi(\{1,\dots,k-1\})$ all satisfying $\phi$ and the final point $\pi(k)$ satisfying $\psi$.
To compute the set $\semKM{\GL(\phi,\psi)}$ we work ``backwards'' from an initial set of nodes that satisfy $\phi$ and that have an outgoing edge to a node satisfying $\psi$.
So, first, we compute the set $C := \semKM{\phi} \cap \OUT(\semKM{\psi})$ (the cells corresponding to nodes with index $\pi(k-1)$ on some good path $\pi$). Then we use a standard flooding procedure to build the set $D$ of the nodes of the graph that are connected to $C$ via a non-directed path, i.e., abstracting from the direction of the edges, passing only through nodes satisfying $\semKM{\phi}$ (these correspond to the cells of the form $\pi(j)$, with $j=1,\ldots,k-1$, for any good path $\pi$ in the polyhedron). Finally we compute the set $\semKM{\GL(\phi,\psi)} = \mbox{${\IN}(D)$}$ (the cells corresponding to the initial nodes of good paths, i.e., $\pi(0)$ for any good path $\pi$).
\end{proof}

\subsection{Computational Complexity}

\noindent
In what follows, we indicate by $n$ the number of cells of $\relint{\Ksc}$ and by $d$ the dimension of $\Ksc$, that is, the \emph{maximum} dimension of a simplex $\sigma$ in $\Ksc$.
Therefore, the number of nodes of the encoding $\KM(\model{X})$ is $n$.
Each simplex $\sigma$ having dimension $d_{\sigma} \leq d$ has $d_{\sigma}+1$ vertices.
Furthermore, in the graph encoding $\relint{\Ksc}$, the in-neighbourhood of each $\relint{\sigma} \in \relint{K}$ is precisely the set of the proper faces of $\sigma$, having cardinality $2^{d_{\sigma}+1}-1$. 
By this, the total number of edges of $\KM(\model{X})$ is at most $n\cdot(2^{d+1} - 1)$.
We let $N$ to be the size of this encoding, that is, the sum of the number of nodes and edges, which is at most equal to $n + n\cdot(2^{d+1} - 1) = n\cdot2^{d+1}$.

Note that the complexity of our encoding grows exponentially in $d$ if $d$ is not fixed.
The design space for algorithms that scale better with $d$, possibly exploiting specialised data structures (see, e.g., the recent work~\cite{Boissonnat2014}) will be explored in future research, depending on the considered use case.
In the case under consideration (i.e., 3D meshes) we have the fixed dimension $d=3$.
Therefore the ``exponential'' contribution of $d$ to the computational complexity becomes a constant. In other words, when the dimension $d$ is fixed, as in 3D meshes, the size of $\KM(\model{X})$ is of order $\mathcal{O}(n)$.

Note that the flooding procedure (Algorithm~\ref{fig:pseudo-code}) has linear
computational complexity in the number of nodes and edges of $\KM(\model{X})$, that is, $N$.
The computation of the Boolean operators and of the $\Box$-operator are also
linear in $N$. As in~\cite{CLLM16}, since each subformula is
checked independently from the others,  the asymptotic computational complexity
of the model checking algorithm is of order $\mathcal{O}(N\cdot h)$, where $h$
is the cardinality of the set of subformulas of the \SLCS
formula to be checked.

As remarked before, \textbf{once the dimension $d$ of the image is fixed} the exponential contribution of $d$ to the computational complexity is negligible and therefore the total complexity of the spatial model checking algorithm is of order $\mathcal{O}(n \cdot h)$.

Finally, we briefly address the complexity of the encoding.
In the current prototype implementation of the model-checking algorithm (see Section~\ref{sec:polylogica} for further details), the input polyhedron is described by a list of $n$ simplexes with maximum dimension $d$, each one being represented by a list of vertices.
To compute the Kripke frame of Definition~\ref{def:KrpS} from this description, an explicit enumeration is performed of the subsets of each simplex, incrementally building the arrays of out- and in-neighbourhoods.
This results in a time complexity in $\mathcal{O}(N)$, which becomes $\mathcal{O}(n)$ once the dimension $d$ is fixed.
Therefore, for $d$ fixed, the total complexity (encoding plus model checking) is in $\mathcal{O}(n \cdot h)$.

\section{\PolyLogicA:\@ a Model Checker for Polyhedra}
\label{sec:polylogica}

Based on the theory and the model checking algorithm presented in Section~\ref{sec:polyhedral-model-checking}, we developed the prototype model checker \PolyLogicA: a \emph{Polyhedral Logic-based Analysis tool}.
The model checker is implemented in the functional language FSharp.\footnote{See \url{https://www.fsharp.org}}
\PolyLogicA is  Free and Open Source Software, distributed under the Apache 2.0 license.\footnote{The tool is currently available in a branch of the main \VoxLogicA repository, see \url{https://github.com/vincenzoml/VoxLogicA}.}

Currently \PolyLogicA represents a polyhedral model $\model{X}$ through an explicit encoding of the Kripke model $\KM(\model{X})$ according to Definition~\ref{def:KrpS}.
The encoding is stored as a graph having the cells as nodes and with the covering relation of $\relint{\face}$ as the edge relation.
The current implementation stores the out-neighbourhood and the in-neighbourhood of each node $\sigma$ in two separate arrays, allowing access in constant time to these sets.

\subsection{Functional description}

A \PolyLogicA specification consists of a text file that can make use of four commands: \texttt{let}, for declaring functions and constants; \texttt{import},  for importing libraries of such declarations; \texttt{load}, to specify the file to be loaded as a model; \texttt{save}, to specify the logic formulas that need to be computed, and saved, possibly making use of previous \texttt{let} declarations.

Models are required to be based on a fixed simplicial complex. The given simplicial complex constitutes the subspace of $\R^d$ that is explored by the tool. Thus, one needs to explicitly include, in the input model, the ``environment'' 
in which the objects live.
As an example, consider the simplicial complex depicted in Figure~\ref{fig:example-model-A-and-formulas}. The semantics of the formula $\phi = g \lor r \lor b$ is indeed the set of points $S$ that belong to the whole coloured triangle (made up of four smaller ones). 

The file describing the input model uses a custom \texttt{json}-based\footnote{See \url{https://www.json.org/}.} format. The information contained in the file consists of: a list $p$ of $d$-dimensional vectors, denoting the coordinates of the 0-cells of the polyhedron;
a list of atomic proposition identifiers; a list of simplexes. Each simplex is specified by the list of the indexes of its vertices in $p$, and its specification also contains the list of atomic propositions holding at the cell corresponding to the simplex.%

Logic formulas are just a concrete syntax for \SLCS\ as presented in Section~\ref{sec:slcs}.
Currently, \PolyLogicA does not implement additional extra-logical operators (contrary to \VoxLogicA, which also implements imaging primitives).
\PolyLogicA is in spirit a \emph{global, explicit-state} model checker, that is, the set of simplexes satisfying a given formula is computed and returned at once.
The output of the model checker is a list, in \texttt{json} format,
containing an element for each formula $\phi$ that the specification requires to be checked. Each element of such list contains in turn a list representing the truth values of $\phi$ at each point (cell) of the input model. Finally, a simple 3D, web-based visualizer has been implemented along with the prototype (see the screenshot in Figure~\ref{fig:example-visualizer}), which will be refined in future work.

\subsection{Implementation Details}\label{sec:implementation} The current version of \PolyLogicA
has been implemented sharing part of the code base for point-based spatial model checking with \VoxLogicA.
\PolyLogicA inherits from its parent tool the multithreaded, memoizing computation engine, and the parser for the input language of the tool.
Basically, after expanding \texttt{let} bindings, each formula is converted into a \emph{directed acyclic graph} where nodes are \emph{tasks}, and arcs are \emph{dependencies}. Each task is a basic logical primitive, to be applied to specific arguments. Task $A$ depends upon task $B$ if and only if the result of $B$ is an argument of $A$. The implementation guarantees that, while being constructed, such data structure is kept \emph{minimal} in the sense that the same primitive on the same arguments will never be added twice to the graph, thus implementing \emph{memoization} at a syntactic level, without the need to hash possibly large sets of states at run-time. After having been constructed, the task graph is executed in parallel as much as possible, exploiting the available CPU cores.

\subsection{A Visualiser for \PolyLogicA}

In addition to the polyhedra model checker we developed \PolyVisualiser, a prototype polyhedra visualiser. The user interface of the visualiser is shown in the screenshots in Figure~\ref{fig:the-cube}. The visualiser is an interactive tool with which polyhedra can be explored from different perspectives by means of the mouse or an other pointing device. The tool allows to zoom in and out on the visualised model, to translate it rigidly and to turn it around a fixed point.
The main purpose of the visualiser is to inspect the result of spatial properties that were checked using PolyLogicA. From a pop-up menu, shown on the right of the screenshot, a property can be selected and from further menu items one can select the way in which the results of the evaluation of the property will be visualised. For example, in the screenshot in Figure~\ref{fig:the-cube} the cells that satisfy the selected property are shown in green and the cells that do \emph{not} satisfy the property are shown in red. Also the degree of transparency of the colours can be set, thus facilitating the inspection of the interior of 3D objects. Cells that are points or line segments or triangles are coloured in a direct way. To facilitate inspection of the solid components, being them covered by triangles, tetrahedra are reduced in size and coloured without transparency. Moreover, the relative size of tetrahedra can be manipulated real-time through the appropriate control panel.

In the following sections \PolyVisualiser is used to illustrate two examples of realistic size. In those examples, rather than using the colour green for cells in the polyhedron that satisfy the selected formula and the colour red for those cells that do not satisfy the formula, we will use different levels of transparency while retaining the original colour of the cells. Cells that satisfy the selected property are shown in their original colour but made less transparent (i.e., more opaque) than cells that do not satisfy the property. This way the contours of the complete model are preserved and may serve as a visual reference framework for the interpretation of the 3D spatial model checking results which pop-out as opaque objects as shown in Figure~\ref{fig:3DmazeMC}.

\begin{figure}
\centering
	\subfloat[]{\label{subfig:p}
		\includegraphics[valign=c,height=5em]{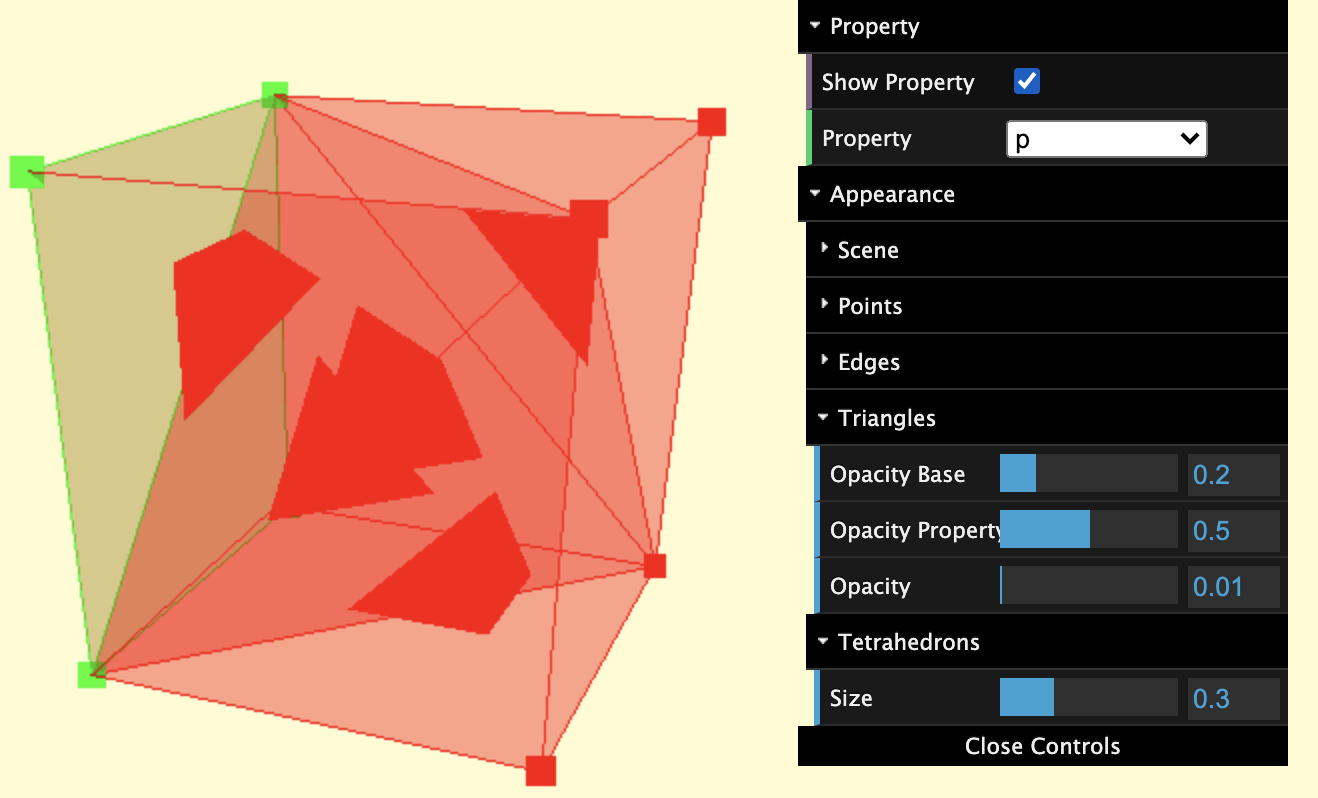}
	}
	\subfloat[]{\label{subfig:q}
		\includegraphics[valign=c,height=5em]{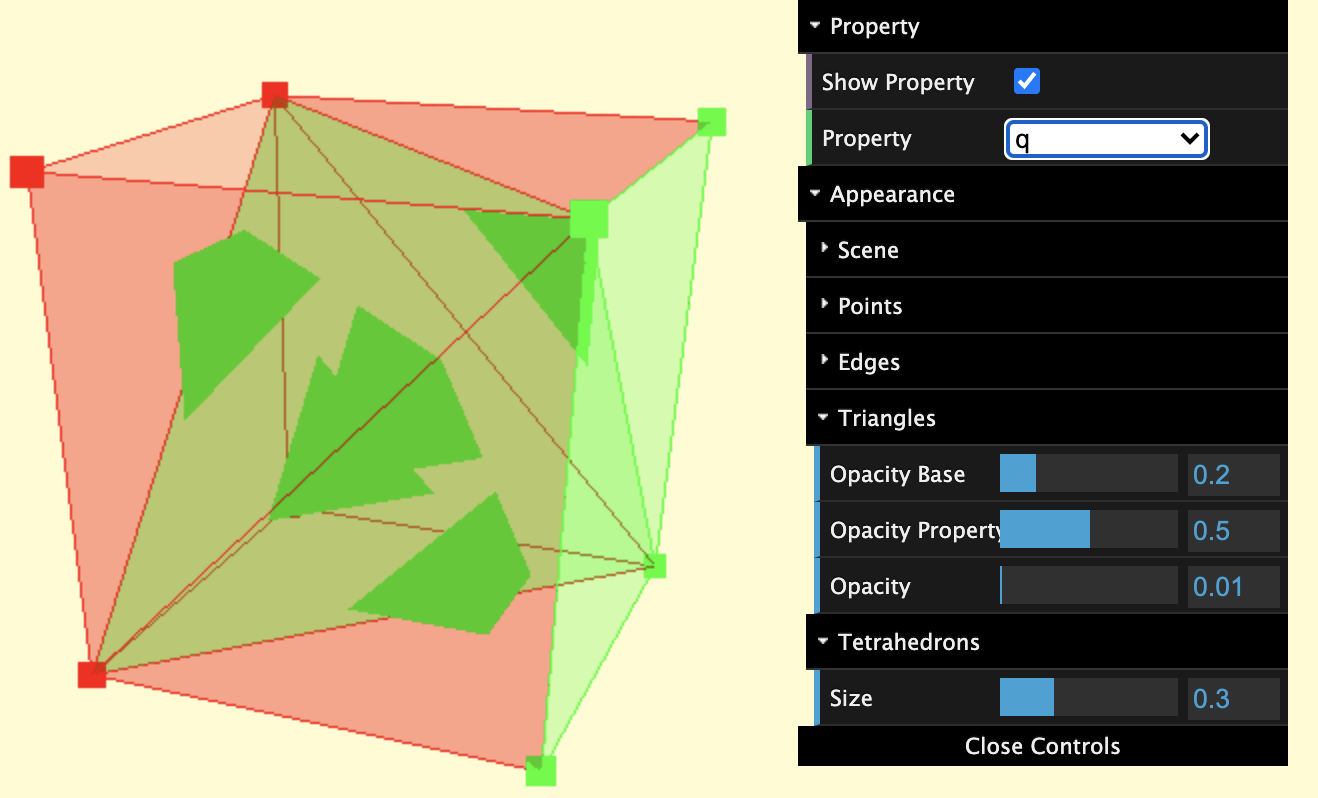}
	}\\
	\subfloat[]{\label{subfig:grw_pq}
		\includegraphics[valign=c,height=5em]{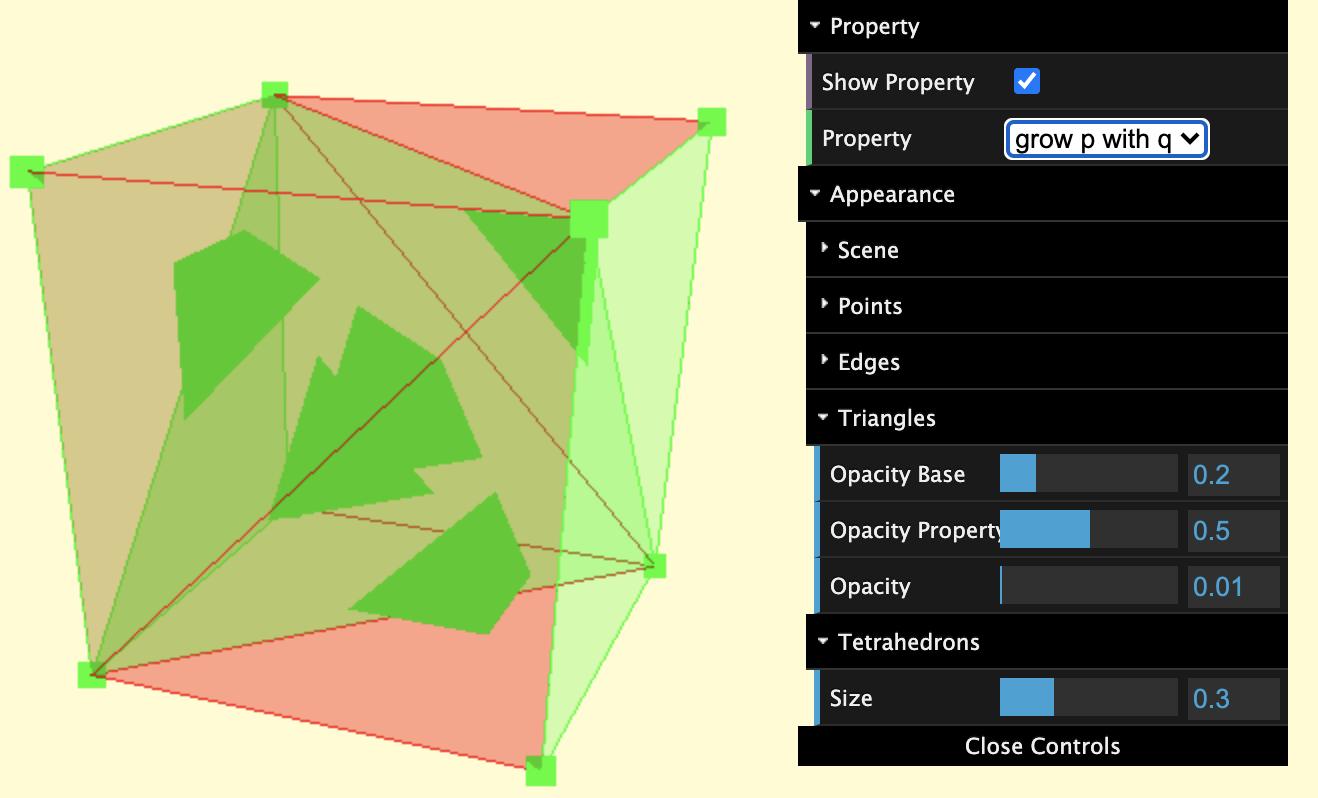}
	}
\caption{\label{fig:the-cube} Visualiser for \PolyLogicA: Model of a simple cube showing cells satisfying the atomic propositions ``p'' and ``q'' in green ((\ref{subfig:p}) and (\ref{subfig:q}), respectively) and model checking result (in green) of cells satisfying ``grow p with q'' (\ref{subfig:grw_pq}).}
\end{figure}

\subsection{Proof of Concept Experiments}
In order to obtain a first impression of the actual feasibility of our (first-in-its-kind) geometric model checking approach we present two examples: one involving the analysis of a 3D maze (generated by the authors) and one involving the analysis of an existing mesh model of the anatomy of the human body. Both examples are analysed with PolyLogicA and we show screenshots of the PolyVisualiser to illustrate the results. 
The results are also made available online\footnote{\url{http://ggrilletti2.scienceontheweb.net/polyVisualizer/polyVisualizer_static_maze.html}} for inspection in an interactive way.

\subsubsection{A 3D Maze}

The first example concerns a 3D maze.\footnote{The maze has been generated by the authors.} The maze, shown in Figure~\ref{subfig:green_rooms}, consists of ``rooms'' that are connected by ``corridors''. The rooms come in four colours: white, black, green and one room in red. The green rooms are all situated at the outer boundary of the maze and represent the surroundings of the maze that can be reached via an exit. The white, black and red rooms, and related corridors, are situated inside the cube and form the maze itself. Figure~\ref{subfig:black_or_white_rooms} shows all the white and black rooms. Figure~\ref{subfig:red_rooms} shows the red room situated at the centre of the maze. All corridors between rooms are dark grey. The valid paths through the maze should only pass by white rooms (and related corridors) to reach a green room without passing by black rooms or corridors that connect to black rooms.  

Typical spatial queries or properties that are of interest for such a maze are:
\begin{enumerate}
\item[Q1:] Which are the white rooms and connecting corridors from which one can reach a green room (i.e., an exit) without having to pass by a black room? Show those white rooms, the related corridors and the green exit.
\item[Q2:] Which are the white rooms and connecting corridors from which both a green room (exit) and a red room can be reached, without having to pass by black rooms (and related corridors)? Show those white rooms, the related corridors, the green exit and the red room.
\item[Q3:] Which are the white rooms (and related corridors) from which it is \emph{not} possible to reach a green room without having to pass by a black room? 
\end{enumerate} 

Figure~\ref{spec:3Dmaze} shows the specification of the above properties in a syntax close to the spatial logic introduced in Section~\ref{sec:slcs}. This syntax is shared with that of \VoxLogicA (see~\cite{CBLM21} for a tutorial) and is mostly self-explanatory. 
The boolean operators are \texttt{|} (or), \texttt{\&} (and), \texttt{!} (not). The spatial reachability operator $\gamma$ is denoted by the keyword  \texttt{through}. The surround operator \texttt{sur} is implicitly defined in terms of the operator \texttt{through} according to its definition provided in Section~\ref{sec:slcs}. In the first line of Figure~\ref{spec:3Dmaze}, the model file is loaded. In lines 4 to 7 atomic propositions are defined for the simplicial cells of the various coloured rooms. In line 8, the atomic proposition of cells belonging to corridors is defined.
These definitions make direct reference to the information stored in the elements of the model file, in \texttt{json}  format, representing the cells of the polyhedron model. In particular, \texttt{ap} denotes an atomic proposition associated with such elements.

In line 11, a simple application of these atomic propositions is defined. The property \texttt{blackOrWhite} is satisfied by all cells of black rooms and those of white rooms. The result of this property can be saved  (as shown in line 44) and used by \PolyVisualiser to visualise the outcome. The saved file essentially consists of a list of booleans indicating for each cell whether it satisfies the property. The result is shown in a screenshot of the visualiser in Figure~\ref{subfig:black_or_white_rooms} where the cells that satisfy the property are opaque and all the others almost transparent. Similarly, in Figure~\ref{subfig:red_rooms} the cells that satisfy the atomic proposition \texttt{red} are shown.

Lines 14 to 20 define properties to distinguish different kinds of corridors. In particular, corridors that connect white rooms (\texttt{corridorWW}), those between white and green rooms (\texttt{corridorWG}), those between white and red rooms (\texttt{corridorWR}) and those between white and black rooms (\texttt{corridorWB}). These will turn useful in the specification of the three properties Q1, Q2 and Q3. A point satisfies \texttt{corridorWW} if it (lays in a cell that) belongs to a corridor (i.e., a cell  satisfying \texttt{corridor}) and from which only white rooms can be reached via the corridor itself (i.e., only cells satisfying \texttt{white} but neither \texttt{green}, nor \texttt{black}, nor \texttt{red} can be reached from cells of the corridor).

In lines 24 and 27 property Q1 is specified as \texttt{whiteToGreen}. It is expressed in terms of the \texttt{through}-operator ($\gamma$). It says that we want all (cells of) white rooms, corridors between white rooms and corridors between white and green rooms by which one can reach a green room, i.e., without passing by black rooms or corridors connected to black rooms. If we also want to include the green room that is being reached in the set of results we have to add it, as is done in line 27 in \texttt{connWG}, leading to all cells satisfying \texttt{whiteToGreen} and the (cells of) green rooms from which cells satisfying \texttt{whiteToGreen} can be reached. Cells satisfying \texttt{connWG} are shown in their original opaque colour in Figure~\ref{subfig:connectionWhiteGreen}. Cells that do not satisfy the property are shown in a transparent manner. Note that, in general, the \texttt{through(x,y)}-operator would also admit satisfaction by cells that are adjacent to those satisfying \texttt{x} in that case, as shown in the example of Figure~\ref{subfig:f1}. In the model of the maze rooms and corridors do not have such adjacent cells. However, in general one could consider the use of \texttt{(x | y) \& through(x,y)} to make sure that only cells satisfying at least \texttt{x} or \texttt{y} are part of the result. The choice for the particular definition of \texttt{through} (and $\gamma$) has been motivated by the fact that it poses minimal restrictions. More restricted variants can be easily defined based on \texttt{through}. This would not be the case if one opts for a less basic definition of \texttt{through}.
\begin{figure}
$
\begin{array}{l c l}
\mbox{corridorWW} &\equiv & \gamma(C,W) \land \neg\gamma(C, G \lor B \lor R)\\
\mbox{corridorWG} &\equiv & \gamma(C,W) \land \gamma(C, G)\\
\mbox{corridorWR} &\equiv & \gamma(C,W) \land \gamma(C, R)\\
\mbox{corridorWB} &\equiv & \gamma(C,W) \land \gamma(C, B)\\
\mbox{whiteToGreen} &\equiv & \gamma((W \lor \mbox{corridorWW} \lor \mbox{corridorWG}),G)\\
\mbox{Q1} &\equiv &  \mbox{whiteToGreen} \lor \gamma(G,\mbox{whiteToGreen})\\
\mbox{Q2} &\equiv & \gamma((\mbox{Q1} \lor \mbox{corridorWR}), R) \lor \gamma((R \lor \mbox{corridorWR}),\mbox{Q1})\\
\mbox{Q3} &\equiv & (W \lor \mbox{corridorWW}) \land \neg \mbox{whiteToGreen}.
\end{array}
$
\caption{\SLCS\ formulas expressing properties Q1, Q2 and Q3; atomic proposition letters $G,W,B,R,C$ are assumed given and their meaning is the obvious one ($C$ for ``corridor'', $G$ for green and similarly for the other colours).\label{fig:SLCSQ1Q2Q3}}
\end{figure}

In line 30 property Q2 is specified as \texttt{connRWG}. In this case we are looking for white rooms and related corridors from which both a green room and a red one can be reached, and we also want to include in the result such red and green rooms. The specification makes use of \texttt{connWG} and allows passing by corridors connecting white rooms to red rooms \texttt{corridorWR}. The result is shown in Figure~\ref{subfig:whiteConnectsRedGreen}.

In line 34 property Q3 is specified as \texttt{whiteNoGreen}. In this case we are looking for white rooms and related corridors from which it is \emph{not} possible to reach a green room without passing by a black room. This can be specified as white rooms and corridors between white rooms that are not satisfying \texttt{whiteToGreen} (i.e., property Q1 above). The same property could also be specified in an alternative way making use of the surround operator \texttt{sur} defined in lines 37-38. In that case these cells are characterised as those belonging to white rooms and corridors between white rooms that are completely surrounded by corridors leading only to black rooms in specification \texttt{whiteSblack} in line 41. The result is shown in Figure~\ref{subfig:no_exit_rooms}.

Finally, in line 44 an example is shown of saving a particular result, in this case that of property \texttt{blackOrWhite}. Saving the other results can be performed in a similar manner and is not shown in the specification.

For completeness,  in Figure~\ref{fig:SLCSQ1Q2Q3}, 
the \SLCS\ formulas expressing properties Q1, Q2 and Q3 are shown.

\begin{figure}
\begin{lstlisting}[
	basicstyle=\ttfamily\footnotesize,
	escapeinside={(*}{*)},
        breaklines=true,
        postbreak=\mbox{$\hookrightarrow$\space},
        xleftmargin=2em,
        framexleftmargin=1.5em,
	numbers=left]
load model = "mazeModel.json"

// Atomic propositions for rooms and corridors
let green      = ap("G")
let white      = ap("W")
let black      = ap("B")
let red        = ap("R")
let corridor   = ap("corridor")

// Black or White rooms
let blackOrWhite = black | white

// Corridors: white-to-white, white-to-green, white-to-red, white-to-black:
let corridorWW = through( corridor, white ) &
                !through( corridor, green | black | red )
let corridorWG = through( corridor, white ) &  
                 through( corridor, green )
let corridorWR = through( corridor, white ) & 
                 through( corridor, red )
let corridorWB = through( corridor, white ) & 
                 through( corridor, black )

// Q1: White rooms from which a green room can be reached not passing by black rooms
let whiteToGreen = through((white | corridorWW | corridorWG), green)

// as Q1 but including the green room that is reached
let connWG = whiteToGreen | through(green,whiteToGreen) 

// Q2: White rooms from which both a red and a green room can be reached not passing by black rooms
let connRWG = through((connWG | corridorWR), red) | 
                 through((red | corridorWR), connWG)

// Q3: White rooms with no path to green rooms and their connecting corridors
let whiteNoGreen = (white | corridorWW) & !whiteToGreen

// Surround operator in terms of reach in the polyhedra setting
let reach(x,y) = x | through(y,x)
let sur(x,y)	 = x & !reach(!(x | y),!y) 

// Q3(alternative): White rooms and their connecting corridors surrounded only by corridors to black rooms
let whiteSblack = sur((white | corridorWW), corridorWB) 

// Save the result for property blackOrWhite
save "blackOrWhite" blackOrWhite


\end{lstlisting}
\caption{\label{spec:3Dmaze} PolyLogicA specification of some spatial properties for the 3D maze model.}
\end{figure}

\begin{figure}
\centering
	\subfloat[]{\label{subfig:green_rooms}
		\includegraphics[valign=c,height=6em]{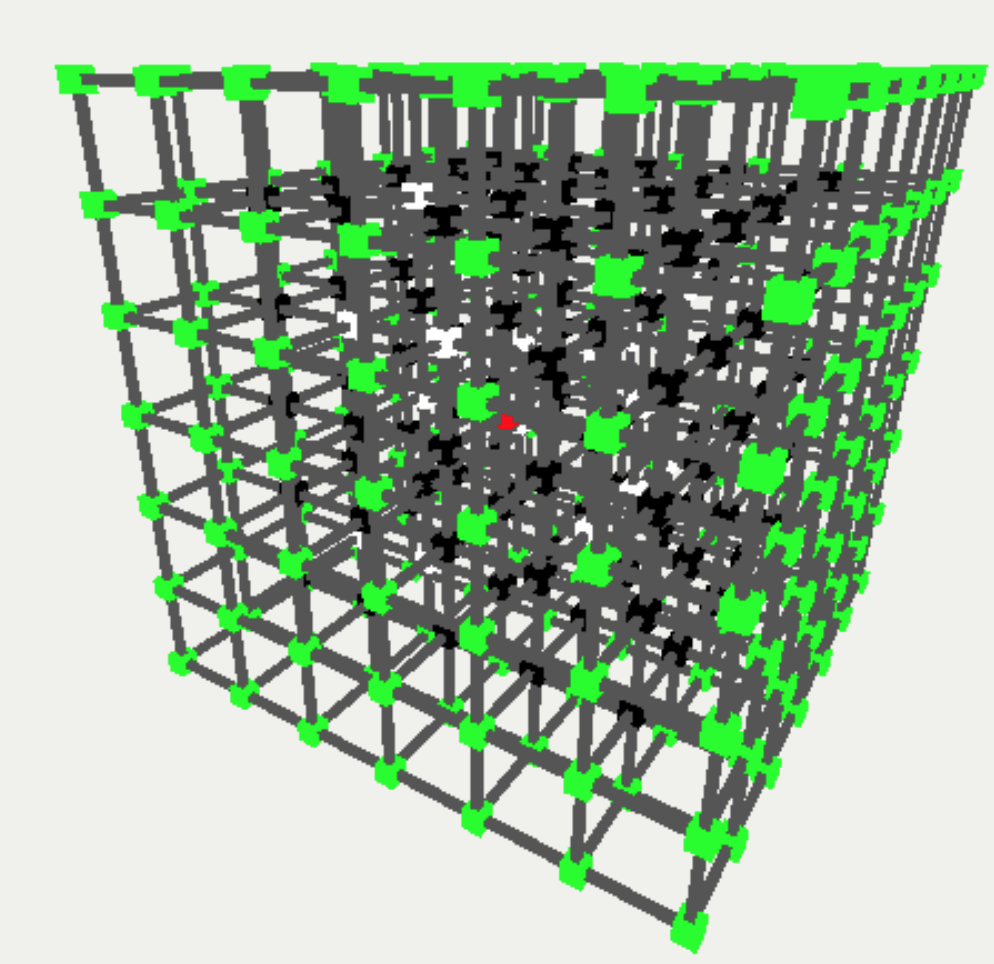}
	}
	\subfloat[]{\label{subfig:black_or_white_rooms}
		\includegraphics[valign=c,height=6em]{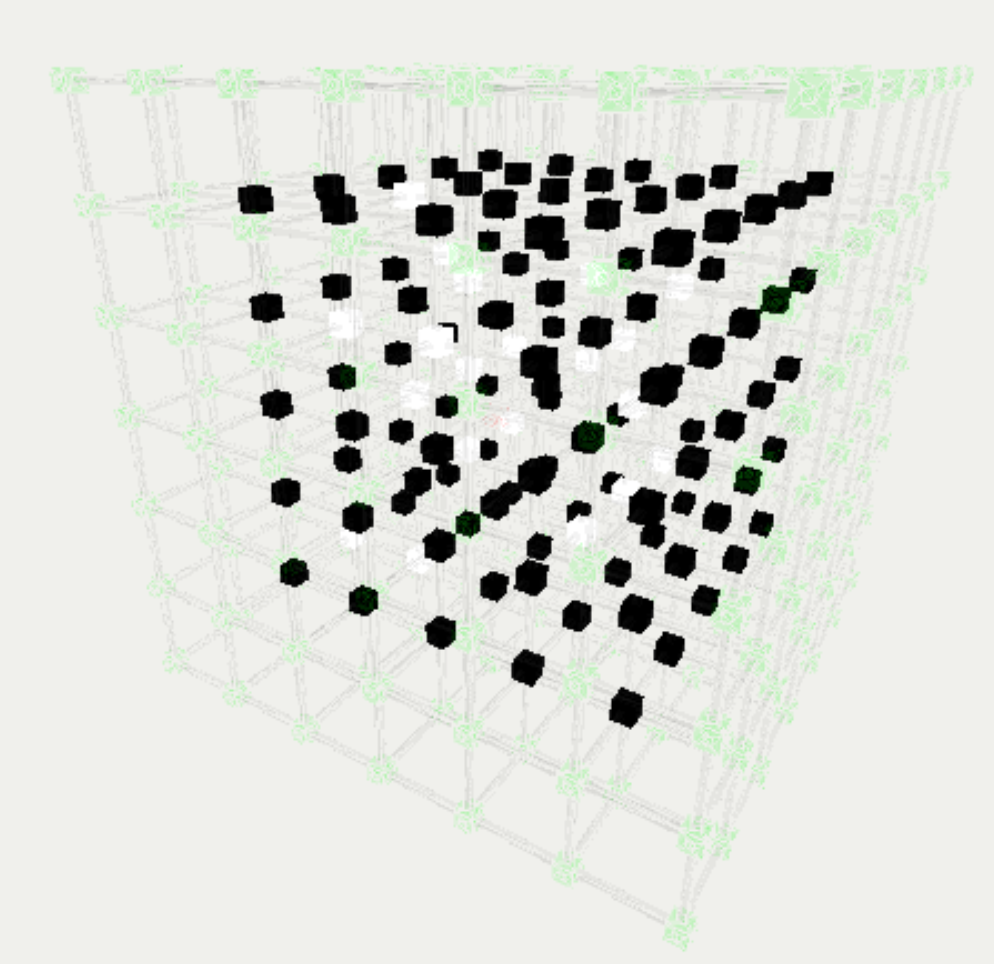}
	}
	\subfloat[]{\label{subfig:red_rooms}
		\includegraphics[valign=c,height=6em]{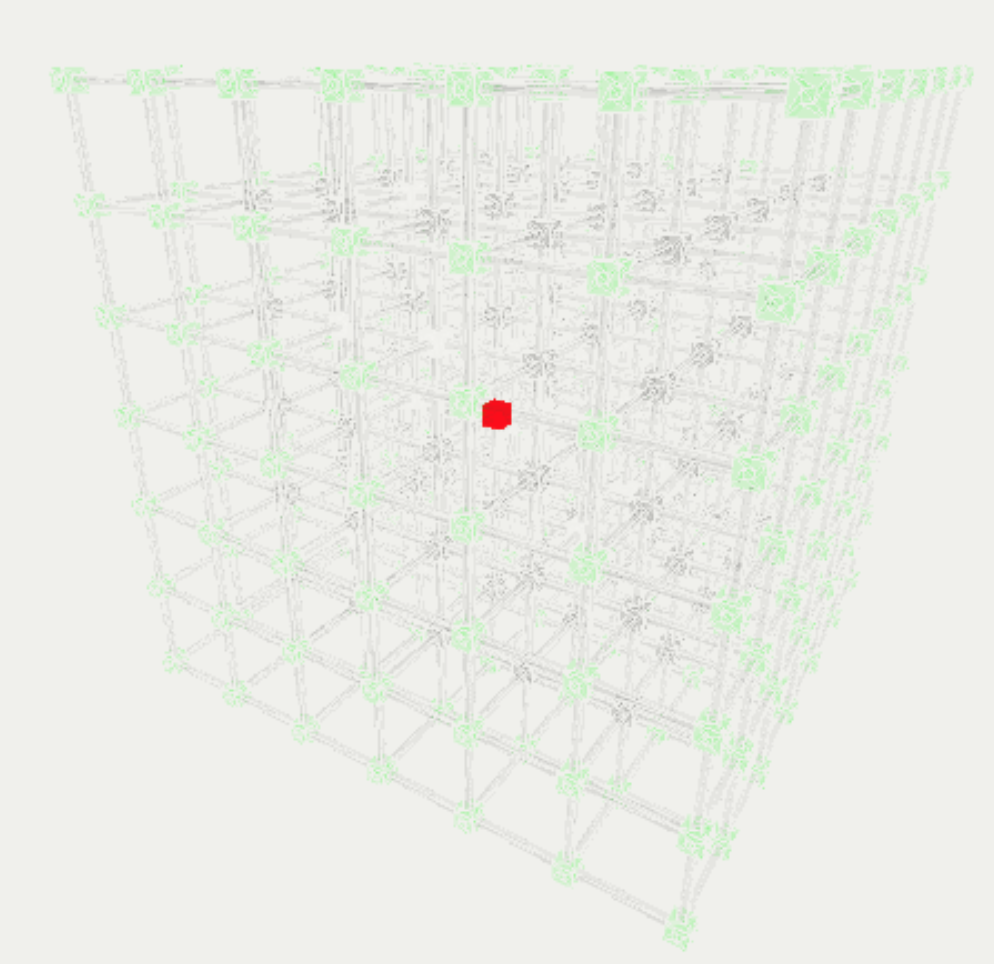}
	}
\caption{\label{fig:3Dmaze} 3D maze~(\ref{subfig:green_rooms}), black and white~(\ref{subfig:black_or_white_rooms}) and red rooms~(\ref{subfig:red_rooms}) in the 3D maze.}
\end{figure}

\begin{figure}
\centering
	\subfloat[Q1]{\label{subfig:connectionWhiteGreen}
		\includegraphics[valign=c,height=6em]{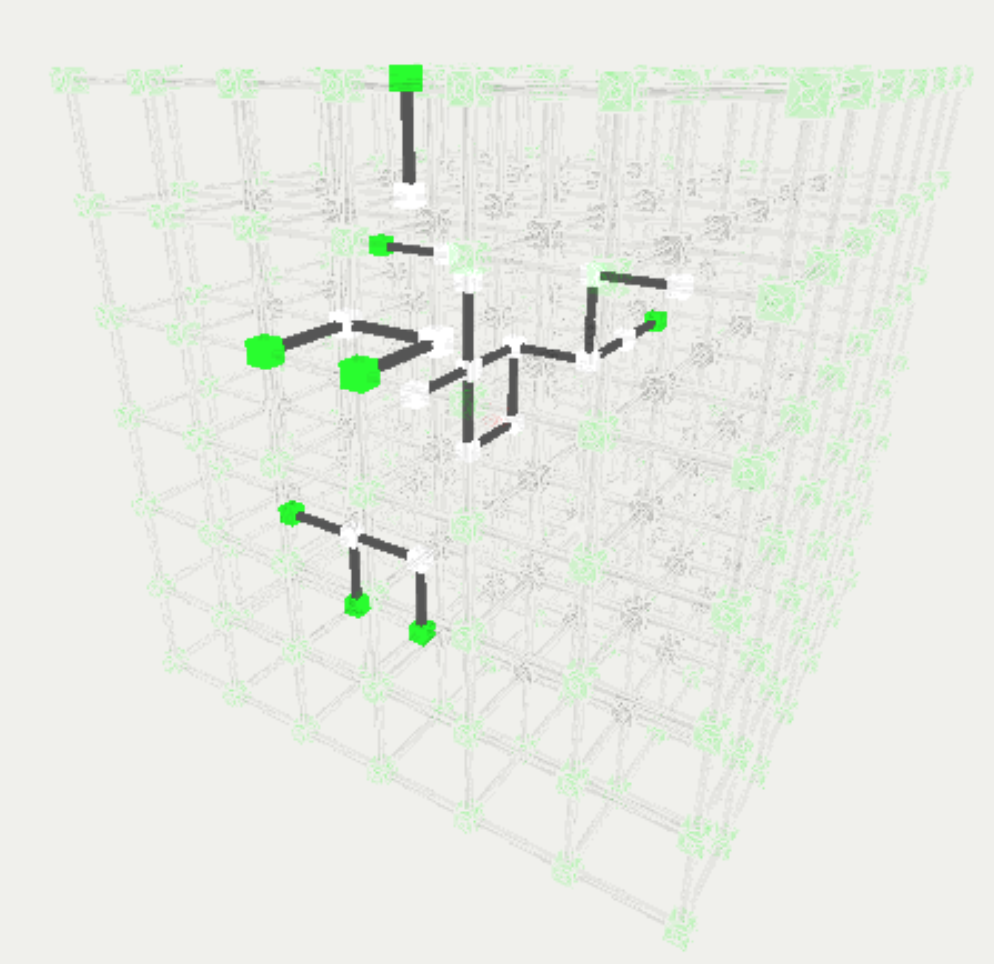}
	}
	\subfloat[Q2]{\label{subfig:whiteConnectsRedGreen}
		\includegraphics[valign=c,height=6em]{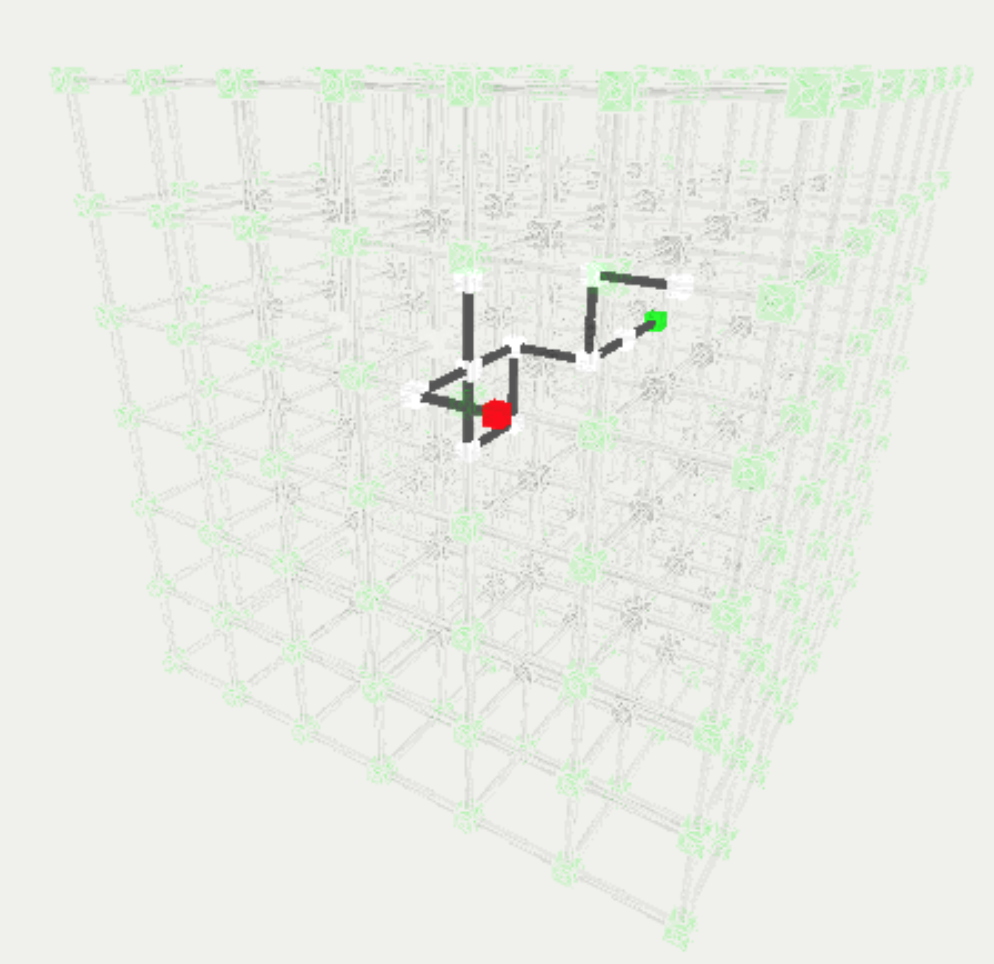}
	}
	\subfloat[Q3]{\label{subfig:no_exit_rooms}
		\includegraphics[valign=c,height=6em]{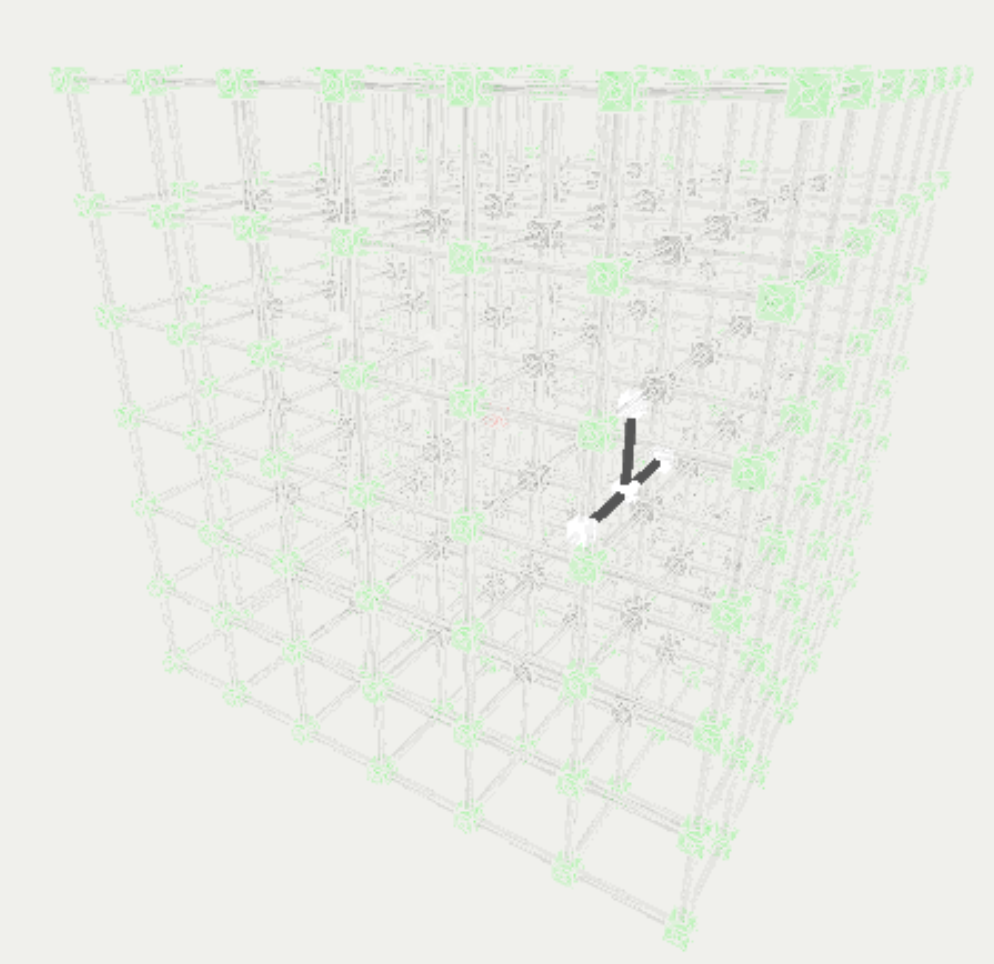} 
	}
\caption{\label{fig:3DmazeMC} Spatial model checking results of the properties in Figure~\ref{spec:3Dmaze} for the 3D maze of Figure~\ref{fig:3Dmaze}. Q1: White rooms and their connecting corridors from which a green room can be reached not passing by black rooms, including the green room that is reached (\texttt{connWG}); Q2: White rooms and their connecting corridors from which both a red and a green room can be reached not passing by black rooms (\texttt{connRWG}); Q3: White rooms and their connecting corridors with no path to green rooms  (\texttt{whiteNoGreen} or equivalently \texttt{whiteSblack}).}
\end{figure}

\begin{table}
	\small
	\begin{tabular}{ l r r r }
		\toprule
			& \textbf{Number of tasks} 
		    & \textbf{Check} (ms.) & \textbf{Total time} (ms.)\\
		\midrule
		\textbf{Property Q1 (\texttt{connWG})} & 24 & 200 & 5,450 \\
		\textbf{Property Q2 (\texttt{connRWG})} & 31 &   250 & 5,500 \\
		\textbf{Property Q3 (\texttt{whiteNoGreen})} & 25 &   100 & 5,350 \\
		\textbf{All properties in one run} & 34 & 300 & 5,550 \\
		\bottomrule
	\end{tabular}

	\caption{\label{tab:eval_maze} \PolyLogicA  performance results (rounded to the nearest multiple of 50ms) of the evaluation of the specification in Figure~\ref{spec:3Dmaze} on three properties of the 3D mesh of the maze shown in Figure~\ref{fig:3Dmaze}. The size of the model is 147,245 %
	cells. The time spent in parsing the input file (circa 12 megabytes) is about 4 seconds, 
	whereas building the Kripke structure takes about 1 second. 
	If all properties are evaluated in the same specification, the execution time is just slightly above the maximum one, since the machine has 8 physical cores, and \PolyLogicA evaluates independent properties in parallel needing less than 8 cores for the maze example.}
\end{table}

Table~\ref{tab:eval_maze} shows information on the model checking time for the properties Q1-Q3 individually and when evaluated all together. Also the total time is shown that includes the parsing of the model, the generation of the Kripke model and the actual model checking time. It is easy to see that 
most of the time goes in preparing the model (4 seconds for parsing and 1 second for building the Kripke model)  rather than the actual model checking which only takes a few hundred milliseconds. This is very encouraging, because it means that the actual model checking procedure is not the bottleneck and we are confident that more efficient parsing and Kripke model generating procedures can be found than the non-optimised ones we used for this first proof-of-concept set-up. Work on such optimisations is planned as part of future work. Note also that it is possible to verify several properties at once in which case the Kripke structure needs to be generated only one time. In future work an option could be to enable saving and loading such Kripke structures once they have been generated in order to reduce the total time needed for analysis. The evaluation was performed on a desktop machine equipped with an Intel core i9 9900k cpu and 32Gb of RAM.

Even though the 3D maze example seems simple, the fact that examples like this can be generated in various forms makes it a promising example for a future benchmark. The example lends itself for endless variations in size and shape of the maze, introducing ``holes" or forms of ``rings". Setting up a suitable benchmark for geometric model checking of polyhedra is planned as part of future work.

\subsubsection{Digital Anatomy}
 
In the second example we apply PolyLogicA on an \emph{existing} 3D mesh  from the medical domain with the aim to assess the feasibility of the geometric model checking approach on existing meshes of realistic size. The mesh visualised in Figure~\ref{fig:vis}\ref{sub@subfig:3dimage} consists of about 1.5 milion simplicial cells. A custom converter has been implemented to obtain a model file in \texttt{json} format from the input \texttt{obj} mesh;\footnote{\emph{Wavefront \texttt{obj}} is a widely used file format for 3D meshes.
See \url{https://en.wikipedia.org/wiki/Wavefront\_.obj\_file}}
 the atomic propositions used are strings of the form \texttt{"Ci"} for \texttt{C} in $\{r,g,b\}$ (standing for \emph{red}, \emph{green}, \emph{blue}) and \texttt{i} an integer between $0$ and $3$. Each such atomic proposition denotes the intensity level at each point of the red, green or blue component of the colour vector associated to the simplicial cell, with intensities quantised to four discrete target levels indicated by the integers. %

Based on such atomic propositions, it is possible to encode spatial properties involving approximations of the colour gradients\footnote{This type of mesh has colours associated to vertices, and no textures. In future versions of the tool we plan to encode colour intervals in the logic, in a similar way as has been  done in the parent tool \VoxLogicA. However, this is ongoing work, as the presence or colours, textures, materials, and so on, may depend upon the chosen file format of the mesh and will require further investigation.} that are visible in the rendered mesh. Using the specification of spatial properties shown in Figure~\ref{fig:specification}, we have checked the properties \texttt{heart} (Figure~\ref{fig:vis}\ref{sub@subfig:hearth}), \texttt{spleen} (Figure~\ref{fig:vis}\ref{sub@subfig:spleen}), \texttt{vein} (Figure~\ref{fig:vis}\ref{sub@subfig:someVeins}), 
and \texttt{selectedVein} (Figure~\ref{fig:vis}\ref{sub@subfig:selectedVein}). The syntax of the specification is the same as the one explained in the 3D maze example. 
Some definitions (e.g., \texttt{spleenWithError}) need to handle errors in the segmentations due to the usage of quantised colours leading to somewhat loose approximations; this is no problem for our purposes, as in this example we merely want to illustrate a first feasibility test and performance of the tool on existing meshes.  After loading the model file (line 1), in the specification first some areas are identified based on their approximative colour (lines 3--8). In particular, a blueish colour is defined as a combination of the blue and red intensity levels. The core of the spleen can be found in a similar way, as well as an over-approximation of the spleen by a relaxation of the constraint on the level of red. Heart, spleen or kidneys are identified by a particular combination of the levels of red and green.

This simple first approximation is then used to identify the individual organs exploiting, for example, the knowledge that veins are blueish and leading to the core of the spleen (line 10, definition \texttt{vein}). The spleen itself is first approximated by \texttt{spleenOver}, an over-approximation of the spleen that touches the veins, but may have some overlap with the heart. Recall here that the \texttt{through}-operator may be satisfied by some points (cells) that do not satisfy \texttt{spleenOver} (see the example in Figure~\ref{subfig:f1}).

The heart is specified as that part of \texttt{heartOrSpleenOrKidneys} that is not part of the over-approximated spleen that can reach a vein \texttt{spleenWithError}, but can reach a vein by itself. Subsequently, the spleen can be characterised more precisely by excluding the points belonging to the heart (line 12). Finally, the reachability operator \texttt{through} is used (line 14) to identify the veins that reach the spleen but not the heart.

We have tested the specification using \PolyLogicA\ on a desktop machine equipped with an Intel core i9 9900k cpu and 32Gb of RAM.\footnote{The memory limit is actually never hit, even with lower amounts of available RAM on other machines.} Three models have been used: the original one, and two versions that have been obtained by simplifying the original mesh, using a built-in algorithm in MeshLab~\cite{meshlab}. The number of tasks executed (the nodes of the directed acyclic graph described in Section~\ref{sec:implementation}, that is, the cardinality of the set of subformulas) is $33$. Table~\ref{tab:some-numbers} reports the model size and the execution time, broken down into parsing of the \texttt{json} model file, computing the Kripke structure, and actual model checking. Note that parsing a terse textual format for such large objects is time consuming, and ought to be replaced in future work by functions to load and save more specific mesh-based file formats. However, in the meantime, we find these results particularly encouraging, as the model checking times are quite small, and would permit to check many more formulas in a single run, thus compensating for the time needed to generate the Kripke structure.

\setlength{\tabcolsep}{10pt}
\renewcommand{\arraystretch}{1.5}

\begin{table}
	\small
	\begin{tabular}{ l r r r r r }
		\toprule
		    & \textbf{Size} (cells) & \textbf{Parse} (secs.) & \textbf{Kripke} (secs.) & \textbf{Check} (secs.) & \textbf{Total} (secs.) \\
		\midrule
		\textbf{Small} & 91,771 & 2.5 & 0.5 & 0.3 & 3.3 \\
		
		\textbf{Medium}  & 591,269 & 15 & 7 &  1.6 & 23.6 \\
		
		\textbf{Full-size} & 1,587,430 & 44 & 32 & 5 & 81\\
		\bottomrule
	\end{tabular}

	\caption{\label{tab:some-numbers} \PolyLogicA  performance results of the evaluation of the specification in Figure~\ref{fig:specification} on three variants of the 3D Medical mesh shown in Figure~\ref{subfig:3dimage}.}
\end{table}

\begin{figure}
\begin{lstlisting}[
	basicstyle=\ttfamily\footnotesize,
	xleftmargin=2em,
        framexleftmargin=1.5em,
	numbers=left]
load model = "RAJA2.json"

let blueish = (ap("b3")|ap("b2")) & (ap("r0")| ap("r1"))
let cyan = ap("r2") & (ap("g0"))

let spleenCore = ap("b3") & (ap("r1")) & ap("g1")
let spleenOver = ap("b3") & (ap("r2")|ap("r1")) & ap("g1")
let heartOrSpleenOrKidneys = ap("r2") & (ap("g1"))

let vein = through(blueish,spleenCore) 
let spleenWithError = (spleenOver | through(spleenOver,vein))
let heart = through(heartOrSpleenOrKidneys & (not(spleenWithError)),vein)
let spleen = spleenWithError & (!heart)
let selectedVein = vein & (!through(vein,heart))
\end{lstlisting}
\caption{\label{fig:specification} Specification of the spatial properties to produce the results shown in Figure~\ref{fig:vis}.}
\end{figure}

Future work will include the implementation of a fast loader for 3D meshes, in order to eliminate the parsing of (very large) model files in \texttt{json} format, and the optimisation of the translation from simplicial complexes to Kripke models, which currently exploits purely-functional data structures for ease of prototyping. We note in passing that the intermediate Kripke model may be cached, for speeding up the execution of multiple analyses on the same model. 

Future work will also include further experiments with more complex spatial formulas and different kinds of meshes with the aim to generate a basic benchmark suitable for future comparisons of performance.

\begin{figure}[t]
	\centering
	\subfloat[\parbox{2in}{3D medical image}]{\label{subfig:3dimage}
		\includegraphics[valign=c,height=5.5em]{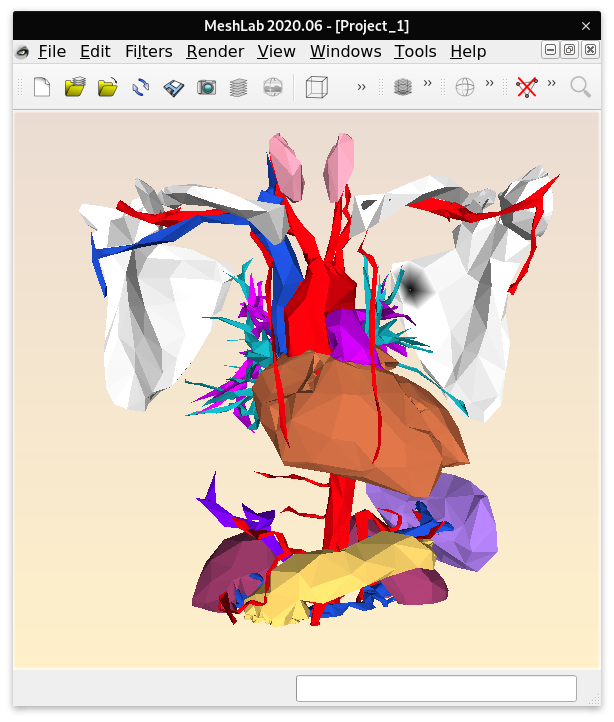}
	}
	\subfloat[\parbox{1in}{Heart (red)}]{\label{subfig:hearth}
		\includegraphics[valign=c,height=5em,trim={0 0 0cm 0},clip]{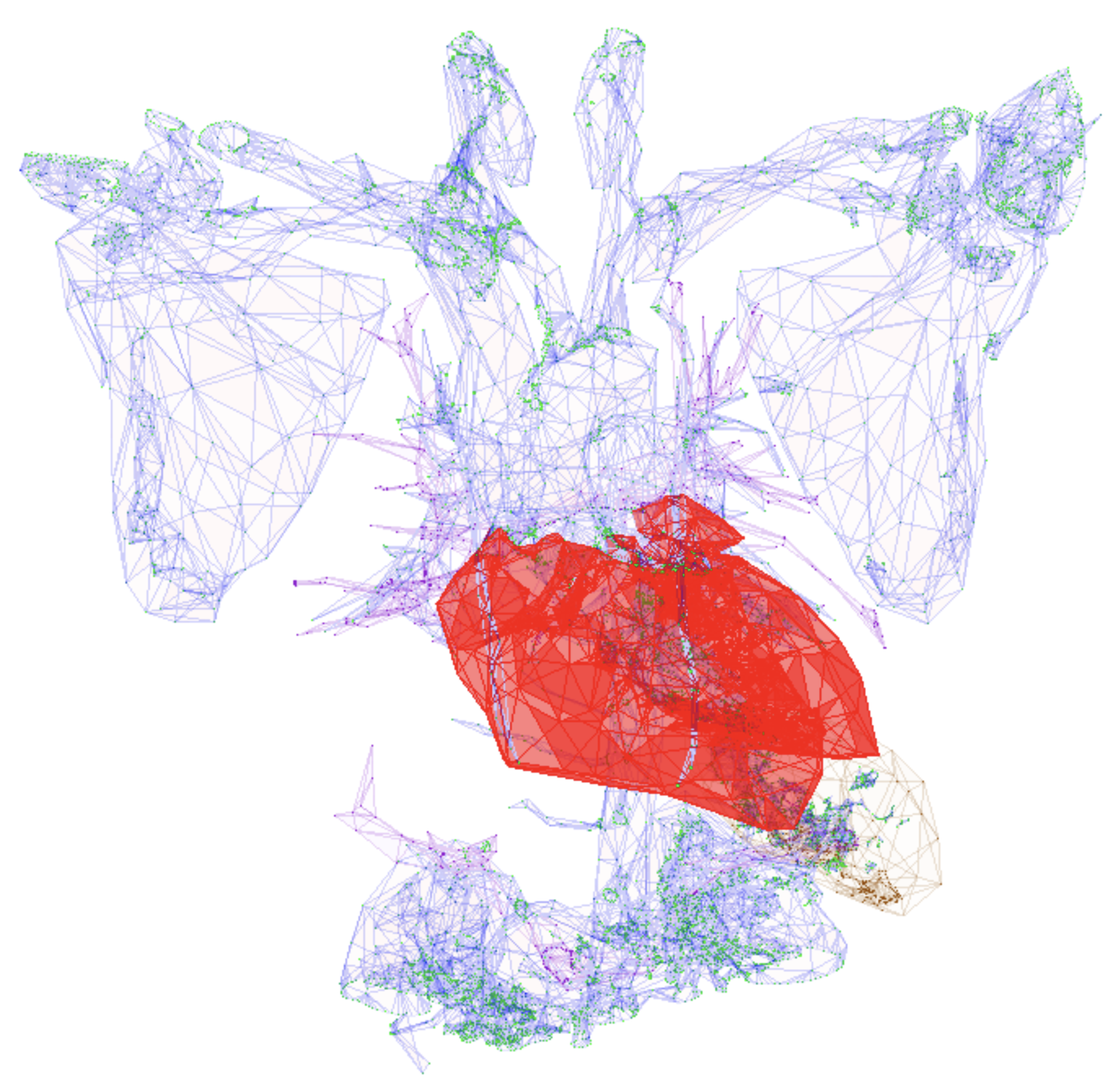}
	}

	\subfloat[\parbox{1in}{Spleen (brown)}]{\label{subfig:spleen}
		\includegraphics[valign=c,height=5em,trim={0 0 0cm 0},clip]{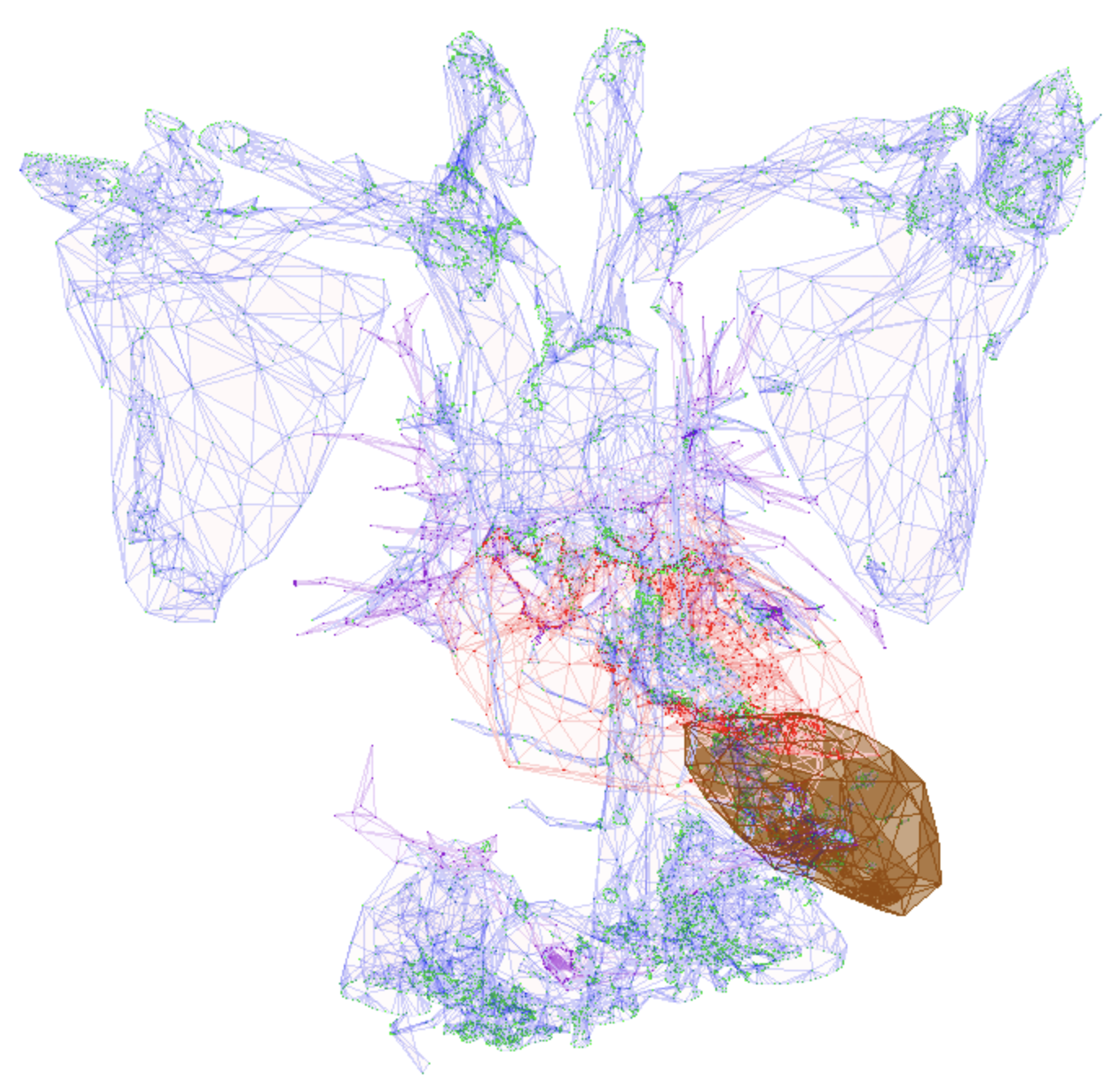}}\qquad
	\subfloat[\parbox{1in}{Veins (violet)}]{\label{subfig:someVeins}
		\includegraphics[valign=c,height=5em,trim={0 0 0cm 0},clip]{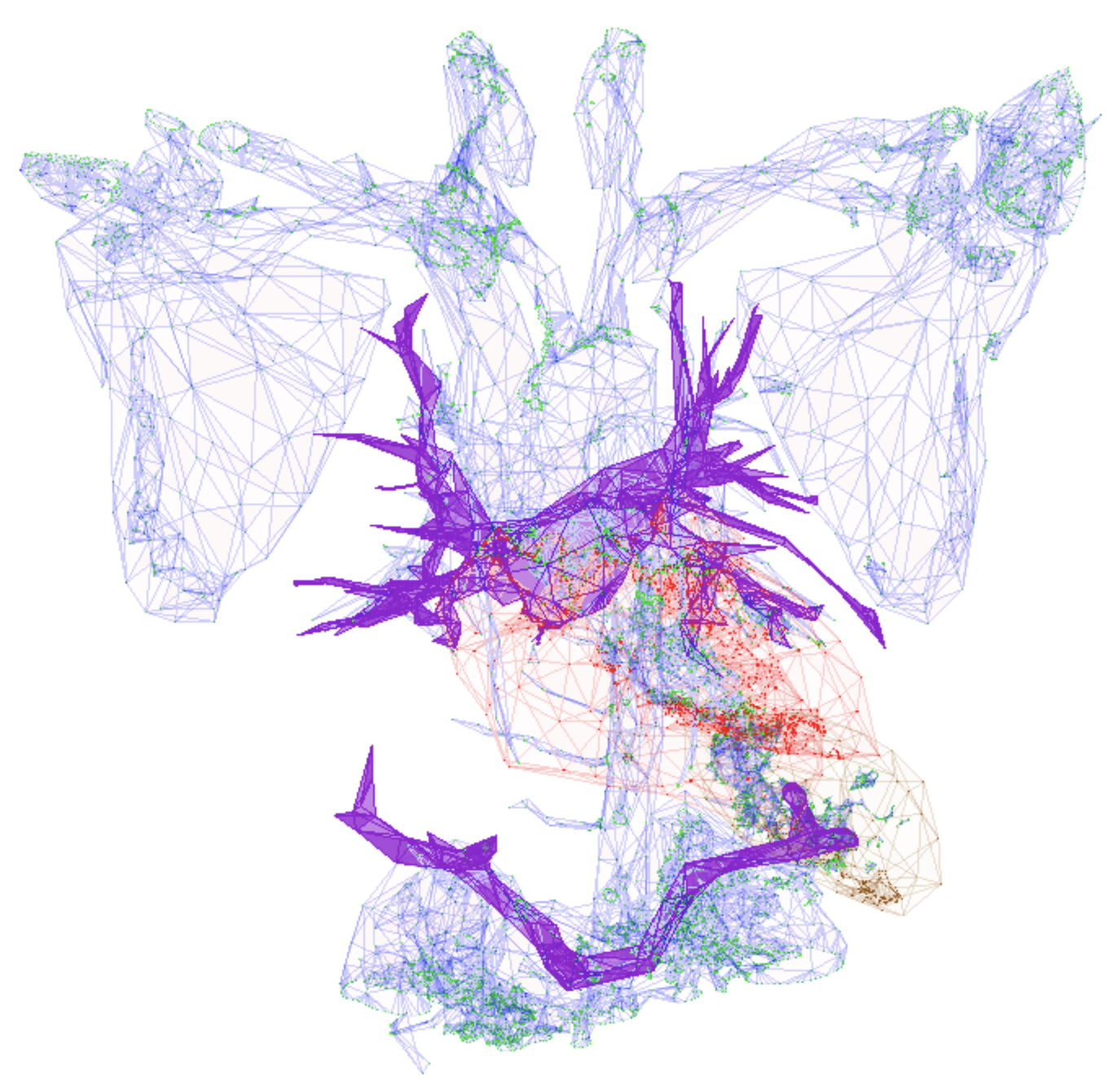}}\qquad
	\subfloat[\parbox{1in}{Selected vein\\ (violet)}]{\label{subfig:selectedVein}
		\includegraphics[valign=c,height=5em,trim={0 0 0cm 0},clip]{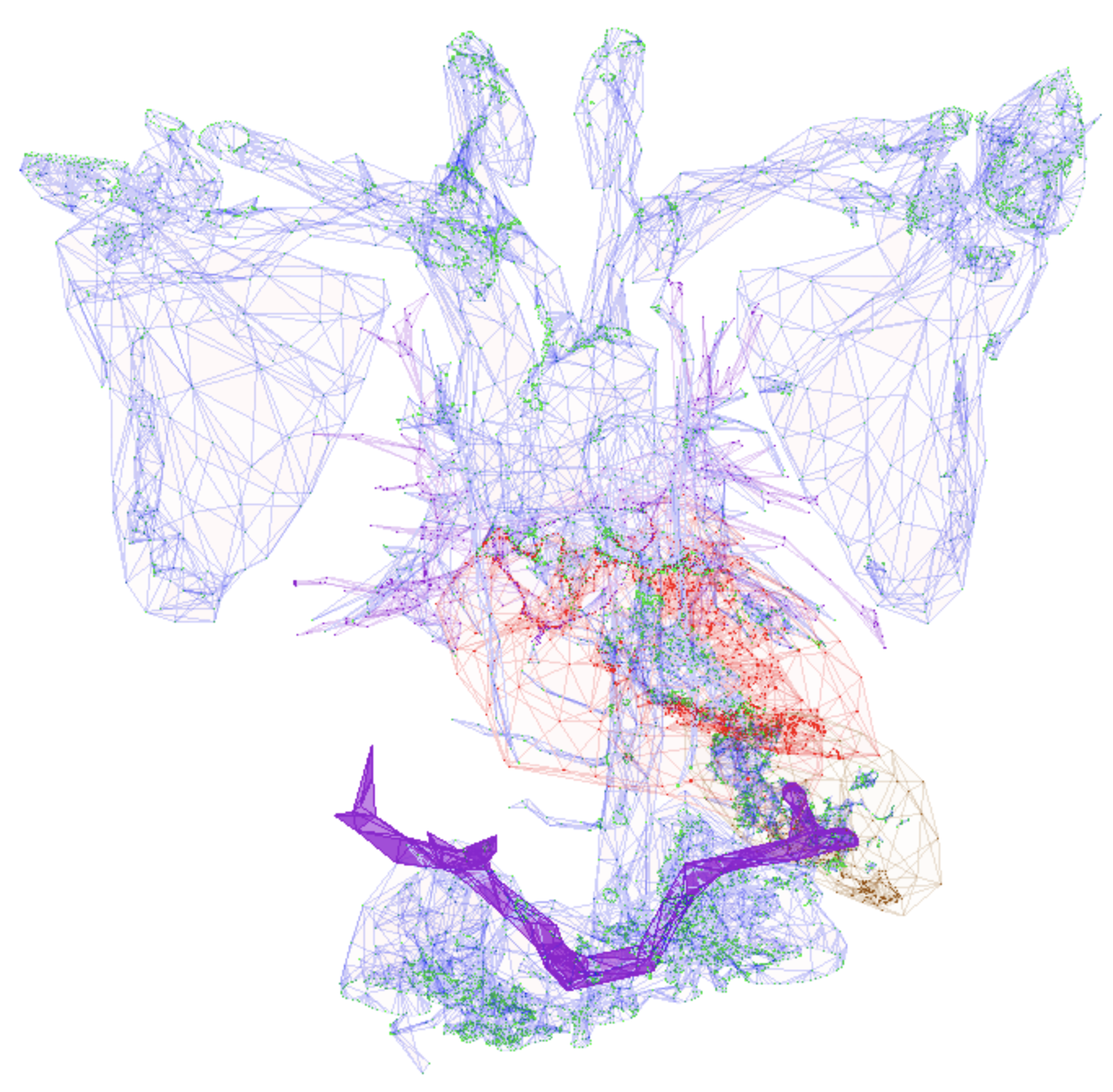}}
	\label{fig:example-visualizer}
	\caption{\label{fig:vis} 
	(\ref{subfig:3dimage}) A 3D medical illustration, courtesy of \url{www.sketchfab.com} (copyright:  COEUR et vaissaaux by Chair\_Digital\_Anatomy -- The Unesco Chair of digital anatomy (Paris University) -- is licensed under Creative Commons Attribution, see \url{https://creativecommons.org/licenses/by/4.0/legalcode}), visualized using MeshLab~\cite{meshlab}. PolyLogicA is used to segment the heart (\ref{subfig:hearth}) shown in red, spleen (\ref{subfig:spleen}) shown in brown, and some veins (\ref{subfig:someVeins}) shown in violet, and then to segment a specific vein (the one that reaches the spleen) using a reachability predicate (\ref{subfig:selectedVein}), also shown in violet. In each image, the non segmented parts are shown as mostly transparent cells. %
	}

\end{figure}

\section{Simplicial Bisimilarity}\label{sec:simplicial-bisimilarity}
\label{sec:bisim}

A further, more theoretically inspired, direction of research aiming at increasing the performance of geometric spatial model checking is to exploit suitable spatial bisimilarity to reduce the models.
The main idea is to identify areas that satisfy the same spatial properties, before applying model checking.
Similar approaches have been exploited in traditional model checking techniques, based on bisimilarity for modal logics (see e.g.~\cite{VanBenthem1984}). 

As a first step into this direction, in this section we characterise logical  \SLCS equivalence via \emph{simplicial bisimilarity}. Recall  the results summarised in~\cite{AiePB07}, defining bisimilarity for topological spatial logics, so that any two points are bisimilar if and only if they are logically equivalent.

To account for the addition of the reachability operator in the logical language, the definition of bisimilarity makes use of the point-wise lifting of a relation to a path, defined in a formal way below.
In the following, fix a model $\model{X} := \tuple{P,\Ksc,V}$.

\begin{defi}\label{def:extend-relation-to-paths}
    Given a relation $R \subseteq P \times P$, let the \emph{extension of $R$ to paths} be the binary relation between paths $\hat R$, such that $\pi_1 \hat{R} \pi_2$ if and only for all $t \in [0,1]$ we have $\pi_1(t) R \pi_2(t)$.
\end{defi}

\begin{defi}[Simplicial bisimilarity]\label{def:simplicial-bisimulation}
    A binary relation $\sim\, \subseteq P\times P$ is a \emph{simplicial bisimulation} if and only if for all $x,y$ with $x \sim y$:
    \begin{enumerate}
        \item\label{cond:bisim-atoms} for all $p \in \AP$, $x \in V(p) \iff y \in V(p)$;
        \item\label{cond:bisim-forth} for each \emph{simplicial} path $\pi_x$, with $\pi_x(0) = x$, there is a simplicial path $\pi_y$ with $\pi_y(0) = y$, and $\pi_x \hat \sim \pi_y$;
        \item\label{cond:bisim-back} for each \emph{simplicial} path $\pi_y$, with $\pi_y(0) = y$, there is a simplicial path $\pi_x$ with $\pi_x(0) = x$, and $\pi_x \hat \sim \pi_y$.
    \end{enumerate} 

    \noindent The largest simplicial bisimulation, if it exists, is called \emph{simplicial bisimilarity}.
\end{defi}

\noindent
The notion of simplicial bisimulation resembles that of stuttering equivalence  for process calculi~\cite{BCG88,DNV95,Gr+17} but it should be noted that it is defined for \emph{continuous} structures.
\noindent In the following, we state the three main facts that conclude this section. Detailed proofs can be found in Appendix~\ref{sec:appendix}.

\begin{thm}\label{thm:logical-equivalence-is-bisimilarity}
    Logical equivalence is a simplicial bisimulation.
\end{thm}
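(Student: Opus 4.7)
The goal is to verify the three conditions of Definition \ref{def:simplicial-bisimulation} for $\lequiv$. Condition \ref{cond:bisim-atoms} is immediate: if $x \lequiv y$ then in particular $x$ and $y$ satisfy the same atomic propositions. Conditions \ref{cond:bisim-forth} and \ref{cond:bisim-back} are symmetric, so it suffices to prove the forth direction, namely that for any $x \lequiv y$ and any simplicial path $\pi_x$ with $\pi_x(0)=x$ there exists a simplicial path $\pi_y$ with $\pi_y(0) = y$ and $\pi_x \, \hat{\lequiv} \, \pi_y$. The key ingredients will be the characteristic formulas $\phi^C$ for equivalence classes given by Proposition \ref{equivalence-classes-characteristic-formulas}, together with Lemmas \ref{lem:piecewise-linear-semantics} and \ref{lem:pl-are-simplicial}, which let us replace any path witnessing a $\slreach$-formula by a simplicial one.

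I would proceed by induction on the number $k$ of cells in the simplicial decomposition of $\pi_x$, say using cells $\relint{\sigma}_1, \dots, \relint{\sigma}_k$ and transition values $0 = s_0 < s_1 < \dots < s_k = 1$. For the base case $k=1$, let $\phi_1$ be the characteristic formula of the (common, by Lemma \ref{lemma:semanticsTriangulation}) equivalence class of the points in $\relint{\sigma}_1$, and $\phi_{end}$ that of $\pi_x(1)$. Then $\pi_x$ directly witnesses $\model{X}, x \vDash \slreach(\phi_1, \phi_{end})$, so by $x \lequiv y$ and the two lemmas above there is a simplicial path $\pi_y$ from $y$ with $\pi_y((0,1)) \subseteq \sem{\phi_1}$ and $\pi_y(1) \in \sem{\phi_{end}}$. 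Pointwise equivalence $\pi_x \, \hat{\lequiv} \, \pi_y$ then follows: at $t=0$ by hypothesis, at $t \in (0,1)$ because both $\pi_x(t)$ and $\pi_y(t)$ satisfy $\phi_1$ and hence lie in the same equivalence class, and at $t=1$ because both satisfy $\phi_{end}$.

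For the inductive step, split $\pi_x$ at $s_1$, reparameterize each piece to $[0,1]$, apply the base case to the single-cell first piece to obtain a simplicial path from $y$ to some $y_1 \lequiv \pi_x(s_1)$, and then apply the induction hypothesis from $y_1$ to the second piece (which has a $(k-1)$-cell decomposition). Concatenating the two constructed paths, after undoing the reparameterizations, yields the required $\pi_y$. The main technical point is that reparameterization and concatenation preserve simpliciality; a subtler point is that the transition points of the $\pi_y$ produced in the base case need not coincide with those of $\pi_x$, but the clause $\pi_y((0,1)) \subseteq \sem{\phi_1}$ in the $\slreach$ semantics forces every such intermediate point to lie in the equivalence class determined by $\phi_1$, which is exactly what is needed to conclude pointwise equivalence everywhere.
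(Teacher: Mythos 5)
Your proposal is correct and follows essentially the same route as the paper's proof: induction on the number of cells in the simplicial decomposition, using characteristic formulas of equivalence classes together with Lemma~\ref{lem:piecewise-linear-semantics} and Lemma~\ref{lem:pl-are-simplicial} to extract a simplicial matching path in the base case, then splitting at $s_1$ and concatenating in the inductive step. The technical facts you flag about sub-paths, concatenation and compatibility with the lifted relation are exactly what the paper packages into Lemma~\ref{lem:paths}, and your observation that the new path's transition points need not align with those of $\pi_x$ is handled the same way in both arguments.
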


\begin{thm}\label{thm:bisimilar-are-logically-equivalent}
    Each simplicial bisimulation is included in logical equivalence. 
\end{thm}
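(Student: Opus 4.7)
The plan is to argue by structural induction on the formula $\phi$ that, whenever $\sim$ is a simplicial bisimulation and $x \sim y$, we have $\model{X}, x \vDash \phi \iff \model{X}, y \vDash \phi$. The base case $\phi = \top$ is immediate, and $\phi = p$ for $p \in \AP$ follows directly from clause (\ref{cond:bisim-atoms}) of Definition~\ref{def:simplicial-bisimulation}. The Boolean cases $\neg \phi$ and $\phi \land \psi$ are handled routinely using the inductive hypothesis. The case $\Box \phi$ can be dispensed with via Theorem~\ref{thm:near-derived-from-slreach}, which rewrites it as $\neg \slreach(\neg \phi, \top)$, reducing the only nontrivial step to the reachability case.

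The heart of the argument is the case $\phi = \slreach(\alpha, \beta)$. Assume $\model{X}, x \vDash \slreach(\alpha,\beta)$; I will show $\model{X}, y \vDash \slreach(\alpha,\beta)$, the converse following by symmetry of $\sim$ via clause (\ref{cond:bisim-back}). By Lemma~\ref{lem:piecewise-linear-semantics}, there is a PL-path $\pi_x : [0,1] \toPL P$ with $\pi_x(0) = x$, $\pi_x((0,1)) \subseteq \sem{\alpha}$, and $\pi_x(1) \in \sem{\beta}$. By Lemma~\ref{lem:pl-are-simplicial}, $\pi_x$ is simplicial, so clause (\ref{cond:bisim-forth}) of Definition~\ref{def:simplicial-bisimulation} applies and yields a simplicial (in particular, topological) path $\pi_y$ with $\pi_y(0) = y$ and $\pi_x \hat\sim \pi_y$. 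The latter means $\pi_x(t) \sim \pi_y(t)$ for every $t \in [0,1]$, so applying the inductive hypothesis for $\alpha$ at each $t \in (0,1)$ and for $\beta$ at $t = 1$ shows that $\pi_y((0,1)) \subseteq \sem{\alpha}$ and $\pi_y(1) \in \sem{\beta}$. Hence $\pi_y$ witnesses $\model{X}, y \vDash \slreach(\alpha, \beta)$.

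The step I expect to require the most care is the reachability case, precisely because the bisimulation definition only stipulates a matching property for \emph{simplicial} paths, whereas the semantics of $\slreach$ quantifies over arbitrary topological paths. Bridging this gap is exactly where Lemmas~\ref{lem:piecewise-linear-semantics} and~\ref{lem:pl-are-simplicial} do the heavy lifting: together they guarantee that no generality is lost by restricting the witnessing path to be simplicial, which is the class of paths on which the bisimulation hypothesis is designed to act. Once this reduction is in place, the rest is a bookkeeping exercise of transporting the semantic conditions pointwise along $\hat \sim$ via the inductive hypothesis.
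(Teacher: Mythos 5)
Your proposal is correct and follows essentially the same route as the paper's proof: structural induction, dispatching $\Box$ via Theorem~\ref{thm:near-derived-from-slreach}, and handling $\slreach$ by combining Lemma~\ref{lem:piecewise-linear-semantics} and Lemma~\ref{lem:pl-are-simplicial} to obtain a simplicial witnessing path, then transporting the semantic conditions pointwise along $\hat\sim$ using the inductive hypothesis, with the converse direction via the back condition. No gaps.
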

 
\begin{cor}[of Theorem~\ref{thm:logical-equivalence-is-bisimilarity} and~\ref{thm:bisimilar-are-logically-equivalent}]
    In a polyhedral model, the largest simplicial bisimulation always exists, and it coincides with logical equivalence.
\end{cor}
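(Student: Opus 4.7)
The plan is to derive the corollary as a direct consequence of the two preceding theorems, without any further geometric or combinatorial argument. First I would observe that Theorem~\ref{thm:logical-equivalence-is-bisimilarity} supplies a concrete witness for existence: the relation $\lequiv$ on $P$ is itself a simplicial bisimulation. In particular, the family of simplicial bisimulations on the given polyhedral model $\model{X}$ is non-empty and contains $\lequiv$.

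Next I would invoke Theorem~\ref{thm:bisimilar-are-logically-equivalent} to pin down maximality. That theorem says every simplicial bisimulation $R$ on $\model{X}$ satisfies $R \subseteq {\lequiv}$. Combining this with the previous step, $\lequiv$ is a simplicial bisimulation that contains every simplicial bisimulation on $\model{X}$, so it is the largest element with respect to set inclusion in the family of simplicial bisimulations on $\model{X}$. Hence the largest simplicial bisimulation exists, and it coincides with $\lequiv$ by construction.

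Since both directions are provided verbatim by the two theorems being cited, there is essentially no obstacle to overcome at this stage: the only thing worth checking is that the inclusions point the right way (one theorem gives $\lequiv$ a place \emph{inside} the family of simplicial bisimulations, the other gives it a place \emph{above} every member of that family), which together yield both existence of a largest element and its identification with logical equivalence. No finiteness or well-foundedness argument is required here, precisely because $\lequiv$ is directly exhibited rather than constructed as a union of approximants.
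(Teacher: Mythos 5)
Your proposal is correct and is exactly the argument the paper intends: Theorem~\ref{thm:logical-equivalence-is-bisimilarity} exhibits $\lequiv$ as a simplicial bisimulation, and Theorem~\ref{thm:bisimilar-are-logically-equivalent} shows every simplicial bisimulation is contained in $\lequiv$, so $\lequiv$ is the largest one. The paper leaves this immediate combination implicit, and your write-up fills it in faithfully.
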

\begin{exa}\label{exa:bisimilarity-1d}
    Consider the polyhedral, $1$-dimensional model with cells the points $x = -1$, $y = 0$, $z = 1$, and the open segments $s = (-1,0)$ and $t = (0,1)$. Consider the set of atomic propositions $\{a,b\}$. Let $V(a) = \{y\}$ and $V(b) = z$. According to topo-bisimilarity~\cite{vBB07}, which characterises the modal fragment of our language, all the points in $s \cup t$ are equivalent, as there is no modal formula telling $s$ and $t$ apart. However, if $\slreach$ is added to the picture, let $\phi = \slreach(\lnot a,b)$. The points of $s$ do not satisfy $\phi$, but the points of $t$ do. No point of $s$ is bisimilar to a point of $t$.
\end{exa}    
\begin{exa}\label{exa:bisimilarity-2d}
In Figure~\ref{fig:bisimilarity-running-example}, we propose a simple illustration of the concept of simplicial bisimilarity. Note how the two presented paths pass through a different number of equivalent cells (in a way akin to classical ``stuttering'' forms of bisimilarities for process calculi~\cite{BK08}). 
\end{exa}

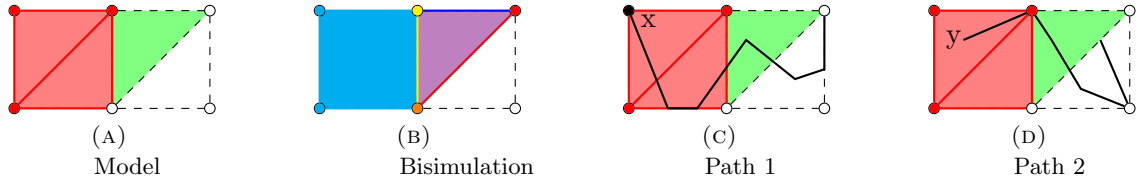
\begin{figure}[t]
    \centering
\subfloat[\parbox{0.5in}{Model}]{\label{subfig:bisim-model}
\begin{tikzpicture}[scale=1.3]%
    \tikzstyle{point}=[circle,draw=black,fill=white,inner sep=0pt,minimum width=4pt,minimum height=4pt]
    \node (p0)[point] at (0,0) {};      
    \node (p1)[point] at (0,1) {};
    \node (p2)[point] at (1,0) {};
    \node (p3)[point] at (1,1) {};
    \node (p4)[point] at (2,0) {}; 
    \node (p5)[point] at (2,1) {};
    
    \draw [dashed] (p0) -- (p2);
    \draw [dashed] (p2) -- (p3);
    \draw [dashed] (p3) -- (p5);
    \draw [dashed] (p2) -- (p5);
    \draw [dashed] (p4) -- (p5);
    \draw [dashed] (p2) -- (p4);

    \draw [red,thick](p0) -- (p1);
    \draw [red,thick](p1) -- (p3);
    \draw [red,thick](p0) -- (p2);
    \draw [red,thick](p2) -- (p3);
    \draw [red,thick](p0) -- (p3);

    \begin{scope}[on background layer]
    \fill [fill=red!50](p0.center) -- (p1.center) -- (p3.center);       
    \fill [fill=red!50](p0.center) -- (p3.center) -- (p2.center);       
    \fill [fill=green!50](p2.center) -- (p3.center) -- (p5.center);     
    \end{scope}
    
    \filldraw [red] (p0) circle (1.25pt);

    \filldraw [red] (p1) circle (1.25pt);
    \filldraw [red] (p3) circle (1.25pt);
     
\end{tikzpicture}   
}\hfill
\subfloat[\parbox{1.0in}{Bisimulation}]{\label{subfig:bisim}
\begin{tikzpicture}[scale=1.3]%
    \tikzstyle{point}=[circle,draw=black,fill=white,inner sep=0pt,minimum width=4pt,minimum height=4pt]
    \node (p0)[point] at (0,0) {};      
    \node (p1)[point] at (0,1) {};
    \node (p2)[point] at (1,0) {};
    \node (p3)[point] at (1,1) {};
    \node (p4)[point] at (2,0) {}; 
    \node (p5)[point] at (2,1) {};
    
    \draw [dashed] (p0) -- (p2);
    \draw [dashed] (p2) -- (p3);
    \draw [dashed] (p3) -- (p5);
    \draw [dashed] (p2) -- (p5);
    \draw [dashed] (p4) -- (p5);
    \draw [dashed] (p2) -- (p4);
    
    \draw [cyan,thick](p0) -- (p1);
    \draw [cyan,thick](p1) -- (p3);
    \draw [cyan,thick](p0) -- (p2);
    \draw [yellow,thick](p2) -- (p3);
    \draw [cyan,thick](p0) -- (p3);
    \draw [blue,thick](p3) -- (p5);
    \draw [red,thick](p2) -- (p5);

    \begin{scope}[on background layer]
    \fill [fill=cyan](p0.center) -- (p1.center) -- (p3.center);     
    \fill [fill=cyan](p0.center) -- (p3.center) -- (p2.center); 
    \fill [fill=violet!50](p2.center) -- (p3.center) -- (p5.center);    
    \end{scope}
    
    \filldraw [cyan] (p0) circle (1.25pt);

    \filldraw [cyan] (p1) circle (1.25pt);
    \filldraw [yellow] (p3) circle (1.25pt);
    \filldraw [orange] (p2) circle (1.25pt);
    \filldraw [red] (p5) circle (1.25pt);

\end{tikzpicture}   
}\hfill
\subfloat[\parbox{1in}{Path 1}]{\label{subfig:bisim-path1}
\begin{tikzpicture}[scale=1.3]%
    \tikzstyle{point}=[circle,draw=black,fill=white,inner sep=0pt,minimum width=4pt,minimum height=4pt]
    \node (p0)[point] at (0,0) {};      
    \node (p1)[point] at (0,1) {};
    \node (p2)[point] at (1,0) {};
    \node (p3)[point] at (1,1) {};
    \node (p4)[point] at (2,0) {}; 
    \node (p5)[point] at (2,1) {};
    
    \draw [dashed] (p0) -- (p2);
    \draw [dashed] (p2) -- (p3);
    \draw [dashed] (p3) -- (p5);
    \draw [dashed] (p2) -- (p5);
    \draw [dashed] (p4) -- (p5);
    \draw [dashed] (p2) -- (p4);
    
    \draw [red,thick](p0) -- (p1);
    \draw [red,thick](p1) -- (p3);
    \draw [red,thick](p0) -- (p2);
    \draw [red,thick](p2) -- (p3);
    \draw [red,thick](p0) -- (p3);

    \begin{scope}[on background layer]
    \fill [fill=red!50](p0.center) -- (p1.center) -- (p3.center);       
    \fill [fill=red!50](p0.center) -- (p3.center) -- (p2.center);   
    \fill [fill=green!50](p2.center) -- (p3.center) -- (p5.center); 
    \end{scope}
    
    \filldraw [red] (p0) circle (1.25pt);

    \filldraw [red] (p1) circle (1.25pt);
    \filldraw [red] (p3) circle (1.25pt);

    \node (x) at (0,1) {};
    \node at (0.2,0.9) {x};
    \filldraw [black] (x) circle (1.25pt);
    \draw [black,thick](p1) -- (0.4,0) -- (0.7,0) -- (1.2,0.7) -- (1.7,0.3) -- (2,0.4) -- (p5);     
\end{tikzpicture}   
}\hfill
\subfloat[\parbox{1in}{Path 2}]{\label{subfig:bisim-path2}
\begin{tikzpicture}[scale=1.3]%
    \tikzstyle{point}=[circle,draw=black,fill=white,inner sep=0pt,minimum width=4pt,minimum height=4pt]
    \node (p0)[point] at (0,0) {};      
    \node (p1)[point] at (0,1) {};
    \node (p2)[point] at (1,0) {};
    \node (p3)[point] at (1,1) {};
    \node (p4)[point] at (2,0) {}; 
    \node (p5)[point] at (2,1) {};
            
    \draw [dashed] (p0) -- (p2);
    \draw [dashed] (p2) -- (p3);
    \draw [dashed] (p3) -- (p5);
    \draw [dashed] (p2) -- (p5);
    \draw [dashed] (p4) -- (p5);
    \draw [dashed] (p2) -- (p4);

    \draw [red,thick](p0) -- (p1);
    \draw [red,thick](p0) -- (p2);
    \draw [red,thick](p2) -- (p3);
    \draw [red,thick](p0) -- (p3);
    \draw [red,thick](p1) -- (p3);
        
    \begin{scope}[on background layer]
    \fill [fill=red!50](p0.center) -- (p1.center) -- (p3.center);       
    \fill [fill=red!50](p0.center) -- (p3.center) -- (p2.center);   
    \fill [fill=green!50](p2.center) -- (p3.center) -- (p5.center); 
    \end{scope}
    
    \filldraw [red] (p0) circle (1.25pt);

    \filldraw [red] (p1) circle (1.25pt);
    \filldraw [red] (p3) circle (1.25pt);

    \node (y) at (0.3,0.7) {};
    \node at (0.2,0.7) {y};
    \draw [black,thick](0.3,0.7) -- (p3) -- (1.2,0.7) -- (1.5,0.2) -- (p4) -- (1.7,0.7);

\end{tikzpicture}   
}
\caption{\label{fig:bisimilarity-running-example}
An example of bisimilarity.~\ref{subfig:bisim-model}) Model with atomic propositions in green and red.~\ref{subfig:bisim}) bisimilarity, encoded via colours (points of the same colour are bisimilar).~\ref{subfig:bisim-path1}) A point $x$ and a simplicial path starting from $x$.~\ref{subfig:bisim-path2}) Another point $y$, bisimilar to $x$, and a simplicial path starting from $y$. The two paths are, in turn, bisimilar; note that these paths are also piecewise linear. The two paths cross a different set of cells, and have a different number of segments.}
\end{figure}

\section{Conclusions and Future Work}
\label{sec:conclusions}

We proposed an interpretation of the spatial logic \SLCS\ in continuous space using models based on polyhedra. \SLCS\ is a spatial logic stemming from topological modal logics enhanced with reachability. We developed a polyhedral semantics for \SLCS\ introducing a finite state relational Kripke model for any given polyhedral model. Such Kripke models represent the face-relation between the simplicial complexes of a polyhedron. We showed that such Kripke models provide a full logical invariant for the polyhedral model with respect to \SLCS{.} Based on these theoretical results we presented a novel spatial model checking algorithm for the verification of  \SLCS\ properties of polyhedral spaces, and in particular for the reachability operator of \SLCS{.} First applications of the prototype polyhedral model checker \PolyLogicA\ on examples of realistic size showed very encouraging results for the feasibility of this novel type of spatial model checking requiring only about 5 seconds for the actual model checking task of a full-size model consisting of circa 1.5 million of simplicial cells.

Future work will span several theoretical and applied directions. Regarding bisimilarity, by looking at the \emph{simplicial} paths that are used in Definition~\ref{def:simplicial-bisimulation}, one may wonder what logical operators (alternative to $\slreach$), and in what classes of models, can be characterised by lifting the restrictions or completely changing the kind of paths that are used therein. Furthermore, bisimilarity hints at \emph{minimization} in order to reduce the complexity of the analysis. The preliminary results presented in~\cite{ciancia2020spatial}, including the tool \emph{MiniLogicA} could be useful in this research direction. Note that the quotient mapping each simplex in $P$ to a point in $\relint{\Ksc}$ in Definition~\ref{def:KrpS} is an \emph{open map}, 
thus it preserves and reflects logical equivalence of the modal fragment of our language; additionally, as shown in Section~\ref{sec:bisim}, it preserves and reflects logical equivalence of the full language, thus simplicial bisimilarity. Not all open maps do so (just consider, e.g., the quotient with respect to classical modal logical equivalence). In future work, we plan to formalise the conditions on an arbitrary open map that make it preserve and reflect simplicial bisimilarity. 
The relationship between spatial logics and temporal logics, and related bisimilarities~\cite{KurtoninaR97} is also of interest, and in particular, comparing path-based spatial notions such as simplicial bisimilarity, to the so-called \emph{stuttering} equivalences, and their associated minimisation algorithms (see e.g.~\cite{BCG88,Gr+17}).

Spatio-temporal model checking in the style of~\cite{Gri16,CGLLM15} is a planned future development, the simplest case being the one where the underlying polyhedron does not change over time, and only the valuation of atomic propositions depends upon the temporal state of a system. More complex forms of dynamic spatial structures where the underlying polyhedron evolves over time are also of interest.

A promising application of \PolyLogicA is fully automated, declarative analysis of 3D \emph{meshes}. 
Clearly, we foresee 3D medical imaging to be a promising landscape for future applications. Furthermore, note that 3D meshes play a central role in several other fields, including architecture and computer-aided design (CAD), geographic information systems (GIS), see e.g., ~\cite{Breunig2020}, or the entertainment industry (consider 3D games or 3D animation movies), in education and in scientific visualisation.

Implementation-wise, \emph{GPU computing} could provide a computational boost to \PolyLogicA. See~\cite{BussiCG21} for a GPU implementation of the parent tool \VoxLogicA.
Finally, a user interface could be useful to explore large datasets, and to better visualise the interpretation of logic formulas, possibly exploiting results in~\cite{BroMO19} for validation.

\paragraph{Acknowledgements}

The authors are grateful to the anonymous reviewers for their constructive feedback and valuable suggestions.

\bibliographystyle{alphaurl}
\bibliography{gmc.bib}

\appendix

\section{Additional Lemmas and Proofs}
\label{sec:appendix}

\begin{prop}\label{prop:rho-and-gamma}
	Consider the operator $\rho$ of~\cite{BCLM19}. We have $\model{ X}, x \vDash \rho\, \psi [\phi] \iff \model{ X}, x \models \psi \lor \GL(\phi,\psi)$, and $\model{X}, x \vDash \GL(\phi,\psi) \iff \model{X}, x \vDash \rho \, (\phi \land \rho \, \psi[\phi])[\phi]$.
\end{prop}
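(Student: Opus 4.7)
The plan is to treat the two equivalences separately, using the original semantics of $\rho$ from~\cite{BCLM19} which interprets $\rho\,\psi[\phi]$ as: there exists a path $\pi$ with $\pi(0)=x$ and some $\ell\in[0,1]$ such that $\pi(\ell) \vDash \psi$ and $\pi((0,\ell))\subseteq \sem{\phi}$. The first equivalence is essentially a case split on the length $\ell$ of a witness path: a witness with $\ell=0$ collapses to $x\vDash\psi$, while a witness with $\ell>0$ can be reparametrised on $[0,1]$ to yield the witness required by $\GL(\phi,\psi)$. Conversely, $\psi$ gives the length-zero (constant) witness for $\rho\,\psi[\phi]$, and any witness for $\GL(\phi,\psi)$ is already a witness for $\rho\,\psi[\phi]$ with $\ell=1$.

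For the second equivalence, my plan is to first rewrite $\rho$ using the identity just proved, so that $\rho\,(\phi\land\rho\,\psi[\phi])[\phi]$ becomes $(\phi\land\rho\,\psi[\phi])\lor \GL(\phi,\,\phi\land\rho\,\psi[\phi])$. For the forward direction, starting from $x\vDash\GL(\phi,\psi)$ with witness $\pi$, I will pick a parameter $t\in(0,1)$ close to $0$; then $\pi(t)\vDash\phi$, and the restriction $\pi|_{[t,1]}$, suitably reparametrised, witnesses $\GL(\phi,\psi)$ and hence $\rho\,\psi[\phi]$ at $\pi(t)$. Thus $\pi(t)\vDash \phi\land\rho\,\psi[\phi]$, and the restriction $\pi|_{[0,t]}$, reparametrised, witnesses $\GL(\phi,\,\phi\land\rho\,\psi[\phi])$ at $x$.

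For the converse of the second equivalence I will split into the two disjuncts. If $x\vDash \phi\land\rho\,\psi[\phi]$, then either $x\vDash\psi$ — in which case the constant path at $x$ witnesses $\GL(\phi,\psi)$ because $x\vDash\phi$ — or $x\vDash\GL(\phi,\psi)$ directly. If instead $x\vDash\GL(\phi,\,\phi\land\rho\,\psi[\phi])$, let $\pi$ be a witness ending at a point $y=\pi(1)$ satisfying $\phi\land\rho\,\psi[\phi]$; unfolding $\rho\,\psi[\phi]$ at $y$, either $y\vDash\psi$ (so $\pi$ already witnesses $\GL(\phi,\psi)$ at $x$), or there is a further path $\pi'$ from $y$ witnessing $\GL(\phi,\psi)$, which we concatenate with $\pi$. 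The concatenation requires care: the midpoint of the concatenated path sits exactly at $y$, and I will verify that $y\vDash\phi$ guarantees that $\phi$ holds on the entire open interior of the concatenated path, so the concatenation is a valid witness for $\GL(\phi,\psi)$ at $x$.

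The main technical obstacle is precisely this concatenation argument, specifically ensuring that the midpoint $y$ of the concatenated path lies in $\sem{\phi}$, which is why the construction of the intermediate formula uses the conjunction $\phi\land\rho\,\psi[\phi]$ rather than just $\rho\,\psi[\phi]$. Everything else is either a reparametrisation of paths or a routine case analysis on the two disjuncts provided by the first equivalence.
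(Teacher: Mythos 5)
Your proposal is correct and follows essentially the same route as the paper's proof: a case split on whether the $\rho$-witness is degenerate for the first equivalence, and for the second, splitting a $\GL$-witness at an interior parameter (forward) and concatenating two witnesses at a junction point (backward) — the paper makes exactly the same moves, including your key observation that the conjunct $\phi$ in $\phi\land\rho\,\psi[\phi]$ is precisely what makes the midpoint of the concatenated path satisfy $\phi$. The only cosmetic differences are that the paper states $\rho$'s semantics via right-open paths $\R_{\geq 0}\to P$ with a witness index $r\in\R_{\geq 0}$ rather than paths on $[0,1]$ (interchangeable by reparametrisation), and that it argues the second equivalence directly from that definition instead of first rewriting both occurrences of $\rho$ as disjunctions via the first equivalence, as you do.
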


\begin{proof}
	We note in passing that the following proof (and the definition of $\rho$ and $\GL$) generalise to arbitrary topological models.
	We first recall the formal definition of $\rho$, which uses right-open paths (that is, total continuous functions having $\R_{\geq 0}$ as a domain).
	Note that in~\cite{BCLM19}, $\rho$ was defined only on discrete spaces. However, following the direction of~\cite{CLLM16}, the definition applies in a natural way to continuous spaces as follows: 
	$\model{X},x \vDash \rho\, \psi\,[\phi]$  whenever there is a right-open path $\pi : \R_{\geq 0} \to P$ and an index $r$ such that $\pi(0) = x$, $\pi(r) \in \sem{\psi}$ and $\pi((0,r)) \subseteq \sem{\phi}$.
	We proceed by proving the four implications in the statement separately.

First, we show that $\m, x\models \rho\,\psi[\phi]$ implies $\m, x\models \psi \lor \slreach(\phi,\psi)$.
If $\m, x\models \rho\,\psi[\phi]$ then there is right-open path $\pi$ and $\ell \in \R_{\geq 0}$ such that
$\pi(0)=x, \pi((0,\ell)) \subseteq \sem{\phi}$ and $\pi(\ell) \in \sem{\psi}$.
We consider two distinct cases:\\
{\bf Case 1}: $\ell=0$. In this case $x \in \sem{\psi}$ and so $\m, x\models \psi \lor \slreach(\phi,\psi)$.\\
{\bf Case 2}: $\ell\neq 0$. In this case, let $\pi'$ with $\pi'(r)=\pi(r\ell)$ for all $r \in [0,1]$.
We have $\pi'(0)=\pi(0)=x$, $\pi'((0,1))=\pi((0,\ell))\subseteq \sem{\phi}$ and 
$\pi'(1)=\pi(\ell)\in \sem{\psi}$. This means that $\m, x\models \slreach(\phi,\psi)$, and so $\m, x\models \psi \lor \slreach(\phi,\psi)$.

We now show that $\m, x\models \psi \lor \slreach(\phi,\psi)$ implies $\m, x\models \rho\,\psi[\phi]$.
Suppose that $\m, x\models \psi \lor \slreach(\phi,\psi)$. We consider two distinct cases:\\
{\bf Case 1}: $\m, x\models \psi$. In this case it trivially holds $\m, x\models \rho\,\psi[\phi]$ (for any right-open path starting at $x$, just consider $\ell=0$).\\
{\bf Case 2}: $\m, x\models \slreach(\phi,\psi)$. In this case there is
a path $\pi$ such that $\pi(0)=x$, $\pi((0,1)) \subseteq \sem{\phi}$ and $\pi(1)\in \sem{\psi}$.
We obtain $\m, x\models \rho\,\psi[\phi]$ by taking $\ell=1$ in the definition of the semantic clause of $\rho$.

We now show that $\m,x\models \slreach(\phi,\psi)$ implies $\m,x\models \rho \,(\phi \land \rho\,\psi[\phi])[\phi]$.
Suppose  $\m,x\models \slreach(\phi,\psi)$, that is, there is a path
$\pi$ such that $\pi(0)=x$, $\pi((0,1)) \subseteq \sem{\phi}$ and $\pi(1)\in \sem{\psi}$.
Since $\pi$ is a total continuous function, there is an $\epsilon \in (0,1)$ such that $\pi(\epsilon)\in \sem{\phi}$. We define two additional paths $\pi_1(t) = \pi(t\epsilon)$ and $\pi_2(t)=\pi((1-\epsilon)t + \epsilon)$.
Notice that $\pi_2(0)=\pi(\epsilon)\in \sem{\phi}, \pi_2((0,1))=\pi((\epsilon,1)) \subseteq \sem{\phi}$
and $\pi_2(1)=\pi(1) \in \sem{\psi}$:
this shows that $\m,\pi(\epsilon) \models \phi \land \rho\,\psi[\phi]$.
Moreover, we have $\pi_1(0)=\pi(0)=x, \pi_1((0,1))=\pi((0,\epsilon)) \subseteq \sem{\phi}$ and
$\pi_1(1) = \pi_2(0) \in \sem{\phi \land \rho\,\psi[\phi]}$: this shows that $\m,x\models \rho \,(\phi \land \rho\,\psi[\phi])[\phi]$.

Finally, we show that $\m,x\models \rho\,(\phi \land \rho\,\psi[\phi])[\phi]$ implies $\m,x\models \slreach(\phi,\psi)$.
Suppose that $\m,x\models \rho\,(\phi \land \rho\,\psi[\phi])[\phi]$, that is, there is a right-open path $\pi_1$ and a value $\ell_1 \in \R_{\geq 0}$ such that $\pi_1(0) = x, \pi_1((0,\ell_1)) \subseteq \sem{\phi}$ and $\pi_1(\ell_1) \in \sem{\phi \land \rho\,\psi[\phi]} = \sem{\phi} \cap \sem{\rho\,\psi[\phi]}$. 
By the last condition, there is a right-open path $\pi_2$ and a value $\ell_2 \in \R_{\geq 0}$ such that $\pi_2(0) = \pi_1(\ell_1)$, $\pi_2((0,\ell_2)) \subseteq \sem{\phi}$ and $\pi_2(\ell_2)\in \sem{\psi}$.
We consider two distinct cases:\\
{\bf Case 1}: $l_1 = l_2 = 0$.
In this case $\pi_1(l_1) = \pi_2(l_2) = x$, and so $x \in \sem{\phi} \cap \sem{\psi}$.
In this case $\model{X},x \vDash \gamma(\phi,\psi)$ holds trivially.\\
{\bf Case 2}: $l_1 >0$ or $l_2 > 0$.
In this case define the path $\pi'$ by imposing $\pi'(t) = \pi_1(2 t l_1)$ and $\pi'(\frac{1}{2}+t) = \pi_2(2 t l_2)$ for $t\in [0,\frac{1}{2}]$ (notice that the path is well-defined since $\pi'(\frac{1}{2}) = \pi_1(l_1) = \pi_2(0)$).
Clearly $\pi'(0) = x$.
Moreover, $\pi'((0,1)) = \pi'((0,\frac{1}{2}]) \cup \pi'((\frac{1}{2},1)) = \pi_1((0,l_1)) \cup \pi_2((0,l_2)) \subseteq \sem{\phi}$ (under the assumption that $l_1 >0$ or $l_2>0$).
And finally $\pi'(1) = \pi_2(l_2) \in \sem{\psi}$.
This shows that $\model{X},x \vDash \GL(\phi,\psi)$, as desired.\end{proof}

\begin{prop}\label{prop:pathToBorder}
	Let $\sigma$ be a non-empty simplex and consider $x \in \relint{\sigma}$ and $y \in \sigma$.
	Then there is a \emph{linear} path (a segment in $\R^d$) $\pi: [0,1] \to \sigma$ with $\pi(0) = x$, $\pi(1) = y$ and $\pi([0,1)) \subseteq \relint{\sigma}$; more precisely, $\pi(t)=ty+(1-t)x$.
\end{prop}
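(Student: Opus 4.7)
The plan is to argue directly using barycentric coordinates, exploiting the characterisations of $\sigma$ and $\relint{\sigma}$ given right after Definition~\ref{def:simplex}. Let $v_0, \dots, v_d$ be the vertices of $\sigma$. By definition, $x \in \relint{\sigma}$ means $x = \sum_{i=0}^d \alpha_i v_i$ with all $\alpha_i > 0$ and $\sum_i \alpha_i = 1$, while $y \in \sigma$ means $y = \sum_{i=0}^d \beta_i v_i$ with $\beta_i \geq 0$ and $\sum_i \beta_i = 1$.

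First I would define $\pi$ by $\pi(t) := ty + (1-t)x$, which is continuous as an affine function of $t$, so it is a topological path in $\R^m$; what remains is to show its image lies in $\sigma$ (respectively $\relint{\sigma}$ when $t < 1$) and to verify the boundary conditions $\pi(0)=x$, $\pi(1)=y$ (immediate from the definition). Expanding the formula, I would write
\[
\pi(t) \;=\; \sum_{i=0}^d \bigl( t\beta_i + (1-t)\alpha_i \bigr)\, v_i,
\]
and then observe that the coefficients sum to $t \sum_i \beta_i + (1-t)\sum_i \alpha_i = t + (1-t) = 1$, and that for every $t \in [0,1]$ each coefficient $t\beta_i + (1-t)\alpha_i$ is a non-negative combination of non-negative numbers, hence non-negative. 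By the characterisation of $\sigma$, this shows $\pi(t) \in \sigma$ for all $t \in [0,1]$.

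For the stronger claim $\pi([0,1)) \subseteq \relint{\sigma}$, I would restrict to $t \in [0,1)$, so $1-t > 0$, and note that $t\beta_i + (1-t)\alpha_i \geq (1-t)\alpha_i > 0$ for every $i$, since $\alpha_i > 0$ by assumption. Thus all barycentric coefficients of $\pi(t)$ are strictly positive, and by the characterisation of $\relint{\sigma}$ we conclude $\pi(t) \in \relint{\sigma}$.

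There is no real obstacle here: the statement is essentially a convexity fact about the relative interior of a simplex, and the computation is a one-line manipulation of barycentric coordinates. The only thing to be mildly careful about is the asymmetric treatment of the endpoints---only $x$ is required to lie in $\relint{\sigma}$, so the strict positivity of the $\alpha_i$'s (and not of the $\beta_i$'s) is what drives the conclusion, which is exactly why the path is open on the right only.
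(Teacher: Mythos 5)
Your proof is correct and follows essentially the same route as the paper's: express $x$ and $y$ in barycentric coordinates, note that all coordinates of $x$ are strictly positive since $x \in \relint{\sigma}$, and bound each coefficient of $\pi(t)$ from below by $(1-t)\alpha_i > 0$ for $t < 1$. The only (harmless) addition is your explicit check that $\pi(t) \in \sigma$ for all $t \in [0,1]$, which the paper leaves implicit.
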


\begin{proof}
	Let $V = \{ v_0,\dots,v_n \}$ be the set of vertices of $\sigma$.
	By definition of $\sigma$, $x$ and $y$ are in the convex hull of $V$, that is, there exist $\lambda_i$ and $\delta_i$ such that $x = \sum_{i=0}^n \lambda_i v_i$ and $y = \sum_{i=0}^n \delta_i v_i$.
	Since $x \in \relint{\sigma}$, every $\lambda_i$ is strictly greater than $0$.
	For $\pi$ as defined in the statement we have $\pi(t) = \sum_{i=0}^n \left(  t\delta_i + (1-t)\lambda_i  \right) v_i$.
	This function is clearly continuous (thus a path), and we have $\pi(0) = x$ and $\pi(1) = y$.
	Moreover, for every $t \in [0,1)$ and every $i \leq n$ we have $t\delta_i + (1-t)\lambda_i \geq (1-t)\lambda_i >0$, and so $\pi( [0,1) ) \subseteq \relint{\sigma}$.
\end{proof}

\begin{lem}\label{lemma:faceInTermsOfPartition}
	Let $\Ksc$ be a simplicial complex and $\sigma, \tau \in \Ksc$.
	Then $\sigma \face \tau$ iff $\relint{\sigma} \,\relint{\face}\, \relint{\tau}$ iff $\relint{\sigma} \subseteq \Cl(\relint{\tau})$ iff $\relint{\sigma} \cap \Cl(\relint{\tau}) \ne \emptyset$.
\end{lem}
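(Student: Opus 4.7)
My plan is to establish the chain via the implications
$\sigma \face \tau \Rightarrow \relint{\sigma}\,\relint{\face}\,\relint{\tau} \Rightarrow \sigma \face \tau$ (trivially by Definition~\ref{def:KrpS}, which builds $\relint{\face}$ directly from $\face$), then $\sigma \face \tau \Rightarrow \relint{\sigma} \subseteq \Cl(\relint{\tau}) \Rightarrow \relint{\sigma} \cap \Cl(\relint{\tau}) \ne \emptyset \Rightarrow \sigma \face \tau$. The first step is immediate, and the step from containment to non-empty intersection is immediate as well, since $\relint{\sigma}$ is non-empty whenever $\sigma$ is (it contains at least the barycentre $b_\sigma$); note that we may assume $\sigma, \tau \ne \emptyset$, otherwise all four conditions fail simultaneously.

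For the implication $\sigma \face \tau \Rightarrow \relint{\sigma} \subseteq \Cl(\relint{\tau})$, I would apply Proposition~\ref{prop:pathToBorder} to the simplex $\tau$. Pick any $x \in \relint{\sigma}$; since $\sigma \face \tau$ we have $\sigma \subseteq \tau$ and so $x \in \tau$. Take any point $z \in \relint{\tau}$ (for instance $b_\tau$). Proposition~\ref{prop:pathToBorder} (with the roles of its $x$ and $y$ played by $z$ and $x$ respectively, inside $\tau$) yields a linear path $\pi \colon [0,1] \to \tau$ with $\pi(0) = z$, $\pi(1) = x$, and $\pi([0,1)) \subseteq \relint{\tau}$. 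Hence $x$ is the limit of a sequence $\pi(1 - 1/n) \in \relint{\tau}$, so $x \in \Cl(\relint{\tau})$.

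The final implication $\relint{\sigma} \cap \Cl(\relint{\tau}) \ne \emptyset \Rightarrow \sigma \face \tau$ is where one has to be slightly careful, but I can reduce it to Lemma~\ref{lemma:partition}. Take $x \in \relint{\sigma} \cap \Cl(\relint{\tau})$. Since $\tau$ is a closed subset of $\R^m$ containing $\relint{\tau}$, we have $\Cl(\relint{\tau}) \subseteq \tau$, so $x \in \tau = \bigcup \{\relint{\rho} \mid \rho \face \tau,\; \rho \ne \emptyset\}$. Therefore there is some non-empty face $\rho$ of $\tau$ with $x \in \relint{\rho}$. But $x$ also lies in $\relint{\sigma}$, and Lemma~\ref{lemma:partition} tells us that the cells $\relint{\rho}$ form a partition of $\poly{\Ksc}$, so the non-empty simplex whose relative interior contains $x$ is unique: $\rho = \sigma$. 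This gives $\sigma \face \tau$, closing the cycle.

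The only non-routine step is the second, requiring the construction of a sequence inside $\relint{\tau}$ converging to an arbitrary point of $\relint{\sigma}$; Proposition~\ref{prop:pathToBorder} handles this cleanly, so I do not anticipate any substantial obstacle. The proof of the last implication also implicitly relies on the fact $\Cl(\relint{\tau}) = \tau$, which follows from $\tau$ being closed and $\relint{\tau}$ being dense in $\tau$ (an immediate consequence of Proposition~\ref{prop:pathToBorder} applied with $y$ ranging over $\tau$), and could be stated as a small auxiliary observation in the proof.
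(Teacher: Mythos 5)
Your proof is correct and takes essentially the same route as the paper's: the same chain of implications, with the crucial step $\relint{\sigma} \cap \Cl(\relint{\tau}) \ne \emptyset \Rightarrow \sigma \face \tau$ handled exactly as in the paper, via $\Cl(\relint{\tau}) = \tau = \bigcup\{\relint{\rho} \mid \rho \face \tau\}$ together with the uniqueness part of Lemma~\ref{lemma:partition}. The only differences are cosmetic: you re-derive $\tau \subseteq \Cl(\relint{\tau})$ from Proposition~\ref{prop:pathToBorder} where the paper simply cites this identity from Section~\ref{sec:background}, and your side remark that for $\sigma = \emptyset$ ``all four conditions fail'' is not quite accurate (the empty simplex is a face of every simplex, so the first condition would hold while the others fail), though this is immaterial since the lemma is only ever applied to cells, which are non-empty by definition.
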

\begin{proof}
	The first equivalence is simply the definition of $\relint{\face}$ (see Definition~\ref{definition:pmpath}).
	As for the other equivalences, firstly we show that $\sigma \face \tau$ implies $\relint{\sigma}\subseteq \Cl(\relint{\tau})$.
	In fact, if $\sigma \face \tau$ we have $\sigma \subseteq \tau$, and so $\relint{\sigma} \subseteq \sigma \subseteq \tau = \Cl(\relint{\tau})$.
	Secondly, note that since cells are nonempty sets we have that $\relint{\sigma}\subseteq \Cl(\relint{\tau})$ implies $\relint{\sigma} \cap \Cl(\relint{\tau}) \ne \emptyset$.
	Finally, we show that $\relint{\sigma} \cap \Cl(\relint{\tau}) \ne \emptyset$ implies $\sigma \face \tau$, concluding the proof.
	Notice that $\Cl(\relint{\tau}) = \tau = \bigcup\{ \relint{\tau}' \,|\, \tau' \face \tau \}$.
	So by Lemma~\ref{lemma:partition} we have that either $\relint{\sigma}\in \{ \relint{\tau}' \,|\, \tau' \face \tau \}$ or $\relint{\sigma} \cap \Cl(\relint{\tau}) = \emptyset$.
	Since the latter is not the case by assumption, we conclude $\sigma \face \tau$.
\end{proof}

\begin{lem}\label{lem:paths}
    Let $\pi_1$ and $\pi_2$ be two paths, with $\pi_1(1) = \pi_2(0)$; let $x,y \in [0,1]$, with $x < y$. Define in the obvious way the \emph{concatenation} $\pi_1;\pi_2$, with $\pi_1;\pi_2(0) = \pi_1(0)$ and $\pi_1;\pi_2(1) = \pi_2(1)$ and the sub-path $\subpath{\pi_1}{[x,y]}$, with $\subpath{\pi_1}{[x,y]}(0) = \pi_1(x)$ and $\subpath{\pi_1}{[x,y]}(1) = \pi_1(y)$. We have that: whenever $\pi_1$ and $\pi_2$ are piecewise-linear (simplicial), also $\pi_1;\pi_2$ is piecewise linear (simplicial); whenever $\pi_1$ is piecewise linear (simplicial), also $\subpath{\pi_1}{[x,y]}$ is piecewise linear (simplicial). Furthermore, for any two (additional) paths $\pi'_1$, $\pi'_2$ with $\pi'_1(1) = \pi'_2(0)$, and relation $R$, if $\pi_1 \hat R \pi'_1$ and $\pi_2 \hat R \pi'_2$, then $\pi_1;\pi_2 \hat R \pi'_1;\pi'_2$.
\end{lem}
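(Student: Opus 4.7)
The plan is to fix the ``obvious'' parametrisations and then chase the definitions through the induced re-parametrisations. Set
$\pi_1;\pi_2(t) := \pi_1(2t)$ for $t \in [0,\tfrac{1}{2}]$ and $\pi_1;\pi_2(t) := \pi_2(2t-1)$ for $t \in [\tfrac{1}{2},1]$; the two pieces agree at $t=\tfrac{1}{2}$ since $\pi_1(1) = \pi_2(0)$, so by the pasting lemma $\pi_1;\pi_2$ is continuous. For the sub-path, set $\subpath{\pi_1}{[x,y]}(t) := \pi_1(x + t(y-x))$, which is continuous as a composition of an affine map with $\pi_1$.

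For the PL claim, recall Definition~\ref{def:piecewise-linear-path}: there exist breakpoints $0 = r_0 < \dots < r_k = 1$ for $\pi_1$ and $0 = s_0 < \dots < s_m = 1$ for $\pi_2$ such that each path is affine between consecutive breakpoints. For the concatenation I would take as combined breakpoints the sequence $\tfrac{r_0}{2}, \dots, \tfrac{r_k}{2} = \tfrac{s_0+1}{2}, \dots, \tfrac{s_m+1}{2}$; on each sub-interval, $\pi_1;\pi_2$ is the composition of an affine reparametrisation of $[0,1]$ with an affine segment of $\pi_1$ or $\pi_2$, hence still affine, and verifying the exact identity in Definition~\ref{def:piecewise-linear-path} is a routine substitution. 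For the sub-path I would take the breakpoints $t_j := (r_{i_j} - x)/(y-x)$ for the indices $i_j$ such that $r_{i_j} \in (x,y)$, together with $0$ and $1$; again each restriction is a composition of affine maps, hence affine.

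For the simplicial claim (Definition~\ref{def:simplicial-path}) the strategy is identical at the level of intervals, but now one must also assign the covering cell to each new sub-interval: for the concatenation, simply inherit the cell of the corresponding sub-interval of $\pi_1$ or $\pi_2$; for the sub-path, inherit the cell of the sub-interval of $\pi_1$ whose image under $t\mapsto x+t(y-x)$ is the new sub-interval. Because the re-parametrisations are strictly monotone affine bijections onto their images, the open-interval condition $\pi((s_{i-1},s_i)) \subseteq \widetilde{\sigma_i}$ transfers verbatim.

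For the relational claim, simply unfold Definition~\ref{def:extend-relation-to-paths}: for $t \in [0,\tfrac{1}{2}]$ one has $\pi_1;\pi_2(t) = \pi_1(2t)$ and $\pi'_1;\pi'_2(t) = \pi'_1(2t)$, and $\pi_1(2t) R \pi'_1(2t)$ by $\pi_1 \hat R \pi'_1$; the case $t \in [\tfrac{1}{2},1]$ is symmetric using $\pi_2 \hat R \pi'_2$. Hence $\pi_1;\pi_2 \hat R \pi'_1;\pi'_2$. The only mild obstacle in the whole proof is bookkeeping: keeping the two families of breakpoints, the two re-parametrisations, and the endpoint-agreement condition $\pi_1(1) = \pi_2(0)$ aligned so that the pasting lemma and the affine-composition arguments go through without index errors. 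Nothing beyond the stated definitions is needed.
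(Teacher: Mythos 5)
Your proof is correct and is precisely the routine verification the paper has in mind: the paper itself omits the proof of Lemma~\ref{lem:paths} as straightforward, and your explicit midpoint-split concatenation, affine reparametrisation of breakpoints, cell inheritance for the simplicial case, and pointwise unfolding of $\hat R$ fill in exactly those details. No gaps.
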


\begin{proof}
    We omit the proof, which is straightforward.
\end{proof}

\noindent
{\bf Lemma~\ref{lem:invariance-wrt-triangulations}.}
\emph{	Let $\model{X} = \tuple{P, \Ksc, V}$ and $\model{X}' = \tuple{P, 		\Ksc', V}$ be two models sharing the same $P$ and $V$.
	For each $x\in P$ and $\phi$ we have:
		$\model{X}, x \vDash \phi  \iff \model{X}', x \vDash \phi$.
}
\begin{proof}
	Just note that $\Ksc$ does not appear in Definition~\ref{def:slcs-semantics}.		
\end{proof}

\noindent
{\bf Lemma~\ref{lemma:semanticsTriangulation}.}
{\em
	Let $\model{X}$ be a polyhedral model and $\Ksc$ a simplicial complex coherent with $\model{X}$.
	Then for each cell $\relint{\sigma} \in \relint{\Ksc}$ and $x,y \in \relint{\sigma}$ we have $x \lequiv y$.
}
\begin{proof}
	We prove the result by induction on the structure of $\phi$.
	Since most of the cases follow easily from the semantic clauses of the logic, we show only the cases for $\phi = \Box\psi$ and $\phi = \slreach(\psi,\chi)$.

	\smallskip
	\noindent
	\textbf{Case $\phi = \Box\psi$:}
	By inductive hypothesis, $\sem{\psi}$ is a union of cells of $\relint{\Ksc}$, and by Lemma~\ref{lemma:partition} also $P \setminus \sem{\psi}$ is a union of cells.
	Since there are only finitely many cells in $\relint{\Ksc}$, we have $\Cl(P \setminus \sem{\psi}) = \bigcup\{  \Cl(\relint{\sigma})  \,|\,  \relint{\sigma} \subseteq P\setminus\sem{\psi}  \}$.
	And since $\Cl(\relint{\sigma}) = \sigma = \bigcup\{ \relint{\tau} \,|\, \tau\face\sigma\}$, it follows that $\Cl(P \setminus \sem{\psi}) = \bigcup\{ \relint{\tau} \,|\, \exists \sigma.\, \tau\face\sigma \,\text{and}\, \relint{\sigma} \subseteq P\setminus\sem{\psi} \}$, that is, $\Cl(P \setminus \sem{\psi})$ is a union of cells.
	To conclude, by Lemma~\ref{lemma:partition} we have that $\sem{\phi} = \interior(\sem{\psi}) = P\setminus \Cl(P \setminus \sem{\psi})$ is again a union of cells.

	\smallskip
	\noindent
	\textbf{Case $\phi = \slreach(\psi,\chi)$:}
	Suppose that $\model{X}, x \vDash \slreach(\phi,\psi)$:
	we aim to show that $\model{X}, y \vDash \slreach(\phi,\psi)$.
	The fact that  $\model{X}, x \vDash \slreach(\phi,\psi)$ means that there exists a path $\pi$ such that $\pi(0) = x$, $\pi( (0,1) ) \subseteq \sem{\phi}$ and $\pi(1) \in \sem{\psi}$.
	By inductive hypothesis $\sem{\phi}$ is a union of cells, and since $\sem{\phi} \cap \pi((0,1))$ is not empty also the set $\Pi = \{ \relint{\tau} | \exists r\in(0,1).  \pi(r)\in \relint{\tau} \subseteq \sem{\phi} \}$ is not empty.
	
	Since $\pi((0,1)) \subseteq \bigcup \Pi$, we have $x \in \Cl( \pi((0,1)) ) \subseteq \bigcup\{ \tau | \exists r\in(0,1).  \pi(r)\in \relint{\tau} \subseteq \sem{\phi} \}$.
	Thus there exists a value $r\in (0,1)$ and a cell $\relint{\tau}$ such that $x \in \tau$ and $\pi(r) \in \relint{\tau}$.
	Since $x \in \relint{\sigma} \cap \Cl(\relint{\tau})$, by Lemma~\ref{lemma:faceInTermsOfPartition} we have $\relint{\sigma} \subseteq \Cl(\relint{\tau}) = \tau$ and consequently $y \in \tau$.
	By Proposition~\ref{prop:pathToBorder} (modulo inverting and reparametrising the path) there exists a path $\pi': [0,r] \to \tau$ such that $\pi'(0) = y$, $\pi'(r) = \pi(r)$ and $\pi'((0,r)) \subseteq \relint{\tau} \subseteq \sem{\phi}$.
	If we extend $\pi'$ by imposing $\pi'(t) = \pi(t)$ for $t \in (r,1]$, we obtain a path such that $\pi'(0) = y$, $\pi'((0,1)) \subseteq \sem{\phi}$ and $\pi'(1) \in \sem{\psi}$.
	In particular $\pi'$ witnesses that $\model{X}, y \vDash \slreach(\phi,\psi)$, as desired.
\end{proof}

\noindent
{\bf Proposition~\ref{equivalence-classes-characteristic-formulas}.}
{\em
Given a polyhedral model $\model{X}$, the relation $\lequiv$ has only finitely many equivalence classes. Furthermore, each equivalence class $C$ has a \emph{characteristic formula} $\phi^C$ such that $\model{X},x \vDash \phi^C \iff x \in C$.
}

\begin{proof}
	Fix $\Ksc$ coherent with $\model{X}$.
	By Lemma~\ref{lemma:semanticsTriangulation}, each equivalence class is a union of distinct cells of $\relint{\Ksc}$.
	But since there are only finitely many cells, there are finitely many equivalence classes too. 
    For two distinct equivalence classes $C$ and $C'$, let $\phi^{C,C'}$ be a formula satisfied by the elements of $C$ but not by the elements in $C'$---such a formula exists by definition of $\lequiv$.
    Then the formula $\phi^C = \bigwedge_{C' \ne C} \phi^{C,C'}$ is satisfied only by elements in $C$, as desired.
\end{proof}

\noindent
{\bf Theorem~\ref{thm:near-derived-from-slreach}.}
{\em
    For each formula $\phi$, we have $\model{X},x \vDash \Box \phi \iff \model{X},x \vDash \lnot \slreach(\lnot \phi,\top)$.
}
\begin{proof}
    Recall the definition of $\Diamond \phi := \lnot \Box \lnot \phi$ and that $\model{X},x \models \Diamond \phi \iff x \in \Cl(\sem{\phi})$ (see e.g.~\cite{vBB07}), where $\Cl$ is the topological closure operator.
	Thus, we prove the equivalent statement $\model{X}, x \vDash \Diamond\phi \iff \model{X}, x \vDash \slreach(\phi,\top)$.
    
    If $\model{X},x \models \slreach(\phi,\top)$, by definition, there is a path $\pi$ with $\pi(0) = x$ and $\pi((0,1)) \subseteq \sem{\phi}$. 
    In particular we have $x = \pi(0) \in \Cl(\,  \pi((0,1))  \,)$ by continuity of paths, and $\Cl(\,  \pi((0,1))  \,) \subseteq \Cl( \sem{\phi} )$ by monotonicity of $\Cl$.
    Therefore $x \in \Cl(\sem{\phi})$, which amounts to $\model{X},x \vDash \Diamond \phi$.
    
    If $\model{X}, x \models \Diamond \phi$, we have $x \in \Cl(\sem{\phi})$.
    Fix $\Ksc$ a simplicial complex coherent with $\model{X}$.
    By Lemma~\ref{lemma:semanticsTriangulation} $\sem{\phi}$ is a union of cells in $\relint{\Ksc}$, and so $\Cl(\sem{\phi}) = \bigcup \{  \Cl(\relint{\sigma})  \,|\,  \relint{\sigma} \subseteq \sem{\phi}  \}$.
    In particular, $x \in \Cl(\relint{\sigma}) = \sigma$ for one of these cells.
    Fix an arbitrary element $y\in \relint{\sigma}$---recall that $\relint{\sigma}$ is nonempty whenever $\sigma$ is nonempty.
    By Proposition~\ref{prop:pathToBorder} (note that the names $x$ and $y$ are inverted in the statement of the proposition) there is a path $\pi$ with $\pi(0) = y$, $\pi(1) = x$, and $\pi([0,1)) \subseteq \widetilde{\sigma} \subseteq \sem{\phi}$.
    The ``converse'' path $\pi'(i) := \pi(1-i)$ witnesses that $\model{X},x \models \slreach(\phi,\top)$.
\end{proof}

\noindent
{\bf Lemma~\ref{lemma:pathToPLPath}.}
\emph{
	Let $\Ksc$ be a simplicial complex and $x,y \in \poly{\Ksc}$.
	Then there exists a path in $\poly{\Ksc}$ from $x$ to $y$ if and only if there is a PL-path in $\poly{\Ksc}$ from $x$ to $y$.
}
\begin{proof}
	The right-to-left direction is trivial, so we focus on the left-to-right direction.
	Fix a path in $\poly{\Ksc}$ starting at $x$ and ending at $y$, and consider the set $\Pi \subseteq \relint{\Ksc}$ of cells traversed by the path.
	Clearly $\bigcup\Pi$ is a connected subset of $\poly{\Ksc}$.
	To obtain the desired result it suffices to show that $\bigcup\Pi$ is also \emph{PL-connected}, that is, for all $x,y \in \bigcup\Pi$ there exists a PL-path from $x$ to $y$.
	We are going to prove this by induction on the cardinality of $\Pi$.

	\smallskip
	\noindent
	\textbf{Base case:} If $\card{\Pi} = 1$, then any two points $x,y \in \bigcup \Pi$ belong to the same cell $\relint{\sigma}$, and since cells are convex there is a linear path (thus piecewise linear) connecting $x$ and $y$.
	
	\smallskip
	\noindent
	\textbf{Inductive step:} Suppose that $\card{\Pi} = n+1$ and that the result holds for sets with lower cardinality.
	Consider two points $x,z \in \bigcup \Pi$ and call $\relint{\sigma}$ the cell containing $x$.
	If $z \in \relint{\sigma}$ too we can reason as in the base case, so we can assume otherwise.
	Define $Z$ to be the connected component of $\bigcup\left( \Pi \setminus \{ \relint{\sigma} \} \right)$ containing $z$.
	$Z$ is itself the union of a set of cells $\Pi'$---since cells are connected.
	As $\card{\Pi'} < \card{\Pi}$, by inductive hypothesis $Z$ is PL-connected.
	Notice that $Y := \relint{\sigma} \cup Z$ is connected, for otherwise $Z$ would be disconnected from $\bigcup\Pi \setminus Z$, against the initial assumption that $\bigcup\Pi$ is connected.

	$\Cl(\relint{\sigma}) = \sigma$ and $\Cl(Z)$ are closed sets whose union covers $Y$.
	If the intersection $Y \cap \sigma \cap \Cl(Z)$ were empty, then $Y \cap \sigma$ and $Y\cap \Cl(Z)$ would disconnect $Y$.
	So there must be a point $y \in Y \cap \sigma \cap \Cl(Z)$.
	By Proposition~\ref{prop:pathToBorder} there exists a linear (thus PL) path $\pi_1$ connecting $x \in \relint{\sigma}$ and $y \in \sigma$, and with $\pi_1([0,1)) \subseteq \relint{\sigma} \subseteq Y $.
	Moreover, since $y \in \Cl(Z) = \Cl(\bigcup\Pi')$, there exists a cell $\relint{\tau} \in \Pi'$ such that $y \in \Cl(\relint{\tau}) = \tau$.
	So, again by Proposition~\ref{prop:pathToBorder} (with $w$ playing the role of $x$ in the statement of the proposition), there exists a linear path $\pi_2$ from $y \in \tau$ to an arbitrary point $w \in \relint{\tau}$, fully contained in $Y$.
	Finally, since $Z$ is PL-connected, there exists a PL-path $\pi_3$ from $w$ to $z$ in $Z \subseteq Y$.
	By concatenating $\pi_1$, $\pi_2$ and $\pi_3$ we obtain a PL-path connecting $x$ and $z$ contained in $Y \subseteq \bigcup \Pi$, as desired. \end{proof}

\noindent
{\bf Lemma~\ref{lem:piecewise-linear-semantics}.}
{\em
We have: $\model{X}, x \vDash \slreach( \phi, \psi )$ if and only if
			there is a PL-path $\pi : [0,1] \toPL P$ such that $\pi(0) = x$ and $\pi((0,1)) \subseteq \sem{\phi}$ and $\pi(1) \in \sem{\psi}$.
}
\begin{proof}
	The right-to-left implication follows trivially from the semantics clauses of the reachability operators, so we focus on the left-to-right implications.
	Fix a simplicial complex $\Ksc$ coherent with $\model{X}$.
	Suppose that there exists a path $\pi: [0,1] \to P$ such that $\pi(0) = x$, $\pi((0,1)) \subseteq \sem{\phi}$ and $\pi(1) \in \sem{\psi}$.
	Consider the set of cells $\Pi := \{  \relint{\sigma} \in \relint{\Ksc}  \,|\,  \pi((0,1)) \cap \relint{\sigma} \ne \emptyset  \}$.
	Clearly $\bigcup \Pi$ is connected, and so by Lemma~\ref{lemma:pathToPLPath} it is also PL-connected (i.e., for every two points in $\bigcup\Pi$ there is a PL-path in $\bigcup\Pi$ connecting them).
	Moreover $x,\pi(1) \in \Cl(\bigcup\Pi)$.
	
	Since $x\in \Cl(\bigcup\Pi)$, there exists a cell $\relint{\sigma} \in \Pi$ such that $x \in \Cl(\relint{\sigma}) = \sigma$.
	So by Proposition~\ref{prop:pathToBorder} there exists a linear path $\pi_1$ from $x$ to a point (arbitrarily chosen) $y\in \relint{\sigma} \subseteq \bigcup\Pi$ such that $\pi_1((0,1)) \subseteq \relint{\sigma}$ (note that the names $x$ and $y$ are inverted in the statement of the proposition).
	By a similar argument, there exists a linear path $\pi_3$ from a point $z \in \bigcup\Pi$ to $\pi(1) \in \sem{\psi}$, such that $\pi_3((0,1)) \subseteq \bigcup\Pi$.
	Finally, since $\bigcup\Pi$ is PL-connected, there exists a PL-path $\pi_2$ from $y$ to $z$ completely contained in $\bigcup\Pi$.
	By concatenating $\pi_1$, $\pi_2$ and $\pi_3$ we obtain a PL-path $\pi'$ such that $\pi'(0) = x$, $\pi'((0,1)) \subseteq \bigcup\Pi \subseteq \sem{\phi}$ and $\pi'(1) \in \sem{\psi}$, as desired.
\end{proof}

\noindent
{\bf Lemma~\ref{lem:pl-are-simplicial}.}
{\em
    Any piecewise linear path is simplicial.
}
\begin{proof}
    As cells are convex sets, the intersection between a cell and a segment is a segment.
    So any segment crosses each cell at most once.
    Observing that there are finitely many segments in a PL-path, one obtains the proof.
\end{proof}

\noindent
{\bf Theorem~\ref{thm:kripke-model-checking}.}
{\em
Let $\model{X} = \tuple{P, \Ksc, V}$ be a polyhedral model and  $x$ a point of $P$.
	Let $\sigma \in \Ksc$ be the unique simplex such that $x \in \relint{\sigma}$.
	For every formula $\phi$ of \SLCS we have $\model{X}, x \vDash \phi
	\iff
	\KM(\model{X}), \relint{\sigma} \vDash \phi$.
}

\begin{proof}
	Notice that by Lemma~\ref{lemma:semanticsTriangulation} the left side of the bi-implication is equivalent to $\relint{\sigma} \subseteq \sem{\phi}$.
	We proceed by induction on the structure of the formula;
	the only non trivial cases are when the formula is of the form $\Box\psi$ and when the formula is of the form $\slreach(\phi,\psi)$.

	\smallskip
	\noindent
	\textbf{Case $\Box\psi$:}
	We have that $\model{X}, x \vDash \Box\psi$ iff $x \notin \Cl(\sem{\neg\psi})$.
	By Lemma~\ref{lemma:semanticsTriangulation}, $\sem{\neg\psi}$ is a finite union of cells, so, by Lemma~\ref{lemma:faceInTermsOfPartition}, the condition is equivalent to $x \notin \bigcup\{ \tau \,|\, \relint{\tau}\subseteq \sem{\neg\psi} \}$;
	 and again, by Lemma~\ref{lemma:faceInTermsOfPartition}, it is equivalent to $\relint{\sigma} \,\relint{\not\face}\, \relint{\tau}$ for every $\relint{\tau} \subseteq \sem{\neg\psi}$.
	Again by Lemma~\ref{lemma:semanticsTriangulation}, for every $\relint{\tau} \in \relint{\Ksc}$ we have that if $\relint{\sigma} \,\relint{\face}\, \relint{\tau}$ then $\relint{\tau} \subseteq \sem{\psi}$, which by the inductive hypothesis amounts to $\KM(\model{X}), \relint{\sigma} \vDash \Box \psi$.
	Note that we have equivalences in both directions of the proof, so the bi-implication is obtained ``for free'' for this part.

	\smallskip
	\noindent
	\textbf{Case $\slreach(\phi,\psi)$:}
	Firstly, suppose that $\KM(\model{X}), \relint{\sigma} \vDash \slreach(\phi,\psi)$.
	Then there exists a $\pm$-path $\pi: \{0,\dots,k\} \toPM \relint{\Ksc}$ such that $\pi(0) = \relint{\sigma}$, $\pi(\{1,\ldots,k-1\}) \subseteq \semKM{\phi}$ and $\pi(k) \in \semKM{\psi}$.

	By inductive hypothesis and Lemma~\ref{lemma:semanticsTriangulation}, $\pi(j) \subseteq \sem{\phi}$ for $j \in \{1,\dots,k-1\}$ and $\pi(k) \subseteq \sem{\psi}$.

	For any two cells $\relint{\sigma} \,\relint{\face}\, \relint{\tau}$, we have $b_{\sigma} \in \relint{\sigma} \subseteq \tau$ (recall that $b_{\sigma}$ indicates the \emph{barycentre} of $\sigma$).
	So by Proposition~\ref{prop:pathToBorder} (where $b_\sigma$ and $b_\tau$ play the role of $y$ and $x$ respectively in the statement of the proposition) there exists a linear path connecting $b_{\sigma} \in \tau$ to $b_\tau \in \relint{\tau}$ and mapping the interval $(0,1)$ to $\relint{\tau}$.
	Using this fact, for every $i = 1,\dots k$ we can find a linear path $\pi'_i$ connecting $b_{\pi(i-1)}$ to $b_{\pi(i)}$ with $\pi'_i((0,1)) \subseteq \pi(i-1)$ or $\pi'_i((0,1)) \subseteq \pi(i)$---depending on whether $\pi(i-1) \,\relint{\face}\, \pi(i)$ or $\pi(i) \,\relint{\face}\, \pi(i-1)$.
	Concatenating these paths, we obtain a path $\pi' := \pi'_1;\dots;\pi'_k$ from $b_{\pi(0)}$ to $b_{\pi(k)} \in \sem{\psi}$ such that $\pi'((0,1)) \subseteq \pi(1) \cup \dots \cup \pi(k-1) \subseteq \sem{\phi}$.
	This shows that $\model{X}, b_{\pi(0)} \vDash \slreach(\phi,\psi)$, and so by Lemma~\ref{lemma:semanticsTriangulation} since $x,{b_{\pi(0)}} \in \relint{\sigma}$ we have $\model{X}, x \vDash \slreach(\phi,\psi)$.

	Conversely, assume that $\model{X}, x \vDash \slreach(\phi,\psi)$, which by Lemma~\ref{lem:piecewise-linear-semantics} amounts to the existence of a PL-path $\pi: [0,1] \toPL P$ such that $\pi(0) = x$, $\pi(1) \in \sem{\psi}$ and $\pi((0,1)) \in \sem{\phi}$.
	$\pi$ is a simplicial path by Lemma~\ref{lem:pl-are-simplicial}, so there exist points $s_0 = 0 < s_1 < \dots < s_l = 1$ such that each $\pi((s_i,s_{i+1}))$ is fully contained in a single cell.
	In particular, we can find cells $\relint{\tau}_1,\dots,\relint{\tau}_l,\relint{\tau}'_0, \dots, \relint{\tau}'_{l}$ such that $\pi((s_{j-1},s_{j})) \subseteq \relint{\tau}_{j}$ for $j = 1, \dots, l$, and such that $\pi(s_{j}) \in \relint{\tau}'_{j}$ for $j = 0, \dots, l$.
	Notice that $\relint{\tau}'_0 = \relint{\sigma}$, $\relint{\tau}'_{l} \subseteq \sem{\psi}$ (since $\pi(1) \in \relint{\tau}'_{l}$) and $\relint{\tau}_i, \relint{\tau}'_j \subseteq \sem{\phi}$ for $i = 1, \dots, l$ and $j = 1,\dots, l-1$ (since they contain points in $\sem{\phi}$).
	
	Observe that $\pi(s_j) \in \relint{\tau}'_{j} \cap \Cl(\relint{\tau}_{j+1})$ for $j = 0, \dots, l-1$, and so by Lemma~\ref{lemma:faceInTermsOfPartition} we have $\relint{\tau}'_{j} \,\relint{\face} \relint{\tau}_{j+1}$.
	With a similar argument, we also have $\relint{\tau}'_{j} \,\relint{\face}\, \relint{\tau}_{j}$ for $j = 1, \dots, l$.
	Rewriting the previous conditions we have $\relint{\tau}'_0 \,\relint{\face}\, \relint{\tau}_1 \,\relint{\facer}\, \relint{\tau}'_1 \,\relint{\face}\, \relint{\tau}_2 \,\relint{\facer}\, \dots \,\relint{\facer}\, \relint{\tau}'_l$, which by definition means that the sequence $\tuple{ \relint{\tau}'_0,\relint{\tau}_1, \relint{\tau}'_1, \relint{\tau}_2,\relint{\tau}'_2, \dots, \relint{\tau}_l,\relint{\tau}'_{l} }$ is a $\pm$-path of $\KM(\model{X})$.
	By previous considerations together with the inductive hypothesis applied to $\phi$ and $\psi$, we also have that $\relint{\tau}'_0 = \relint{\sigma}$, that $\relint{\tau}'_{l} \in \semKM{\psi}$ and that $\relint{\tau}_i, \relint{\tau}'_j \in \semKM{\phi}$ for $i = 1, \dots, l$ and $j = 1,\dots, l-1$.
	Thus we have $\KM(\model{X}), \relint{\sigma} \vDash \slreach(\phi,\psi)$, as desired.
\end{proof}

\noindent
{\bf Theorem~\ref{thm:logical-equivalence-is-bisimilarity}.}
{\em
Logical equivalence is a simplicial bisimulation.
}
\begin{proof}
    In the proof, for $S$ a set of logically equivalent points, we call ``characteristic formula'' of $S$ the characteristic formula of the equivalence class that includes $S$; similarly, we also speak of the ``characteristic formula'' of a point $x$.

    Consider two points $x$ and $y$, with $x \lequiv y$. Let us look at the conditions of Definition~\ref{def:simplicial-bisimulation}. First observe that, since $\phi$ can be an atomic proposition symbol, Condition~\ref{cond:bisim-atoms} holds. 
    We only prove Condition~\ref{cond:bisim-forth}, as the proof of Condition~\ref{cond:bisim-back} follows the same pattern.

	Equivalently, we shall prove, by induction on $k$, the following statement: for each $k \geq 1$, for each pair $x,y$ with $x\lequiv y$, for each path $\pi_x$ with $\pi_x(0) = x$, points $s_0, \ldots, s_k$, and cells $\widetilde{\sigma_1},\ldots,\widetilde{\sigma_k}$ making $\pi_x$ a simplicial path according to Definition~\ref{def:simplicial-path}, there is a simplicial path $\pi_y$ with $\pi_y(0) = y$ and $\pi_x \widehat\lequiv \pi_y$. 

    To ease readability, below, given the data above, we let $\phi_1,\dots,\phi_k$ be the characteristic formulas of the sets $\pi_x((s_0,s_1)), \dots, \pi_x((s_{k-1},s_k))$ respectively.
    Moreover, let $\phi'_0,\dots,\phi'_k$ be the characteristic formulas of the points $\pi_x(s_0),\dots,\pi_x(s_k)$ respectively.
    
    \noindent Next, the proof proceeds by induction on $k$. 

    \noindent For $k = 1$,
    observe that, by Definition~\ref{def:slcs-semantics}, we have $\model{X},x \vDash \slreach(\phi_1,\phi'_1)$. 
    By $x \lequiv y$, we have $\model{X},y \vDash \slreach(\phi_1,\phi'_1)$. 
    By Lemma~\ref{lem:piecewise-linear-semantics}, there is a \emph{piecewise linear} path $\pi_y$ with $\pi_y(0) = y$, $\pi_y((0,1)) \subseteq \sem{ \phi_1 }$, and $\pi_y(1) \in \sem{\phi'_1}$.
    By Lemma~\ref{lem:pl-are-simplicial}, $\pi_y$ is simplicial.
    Note that, since all the $\phi_i$ and $\phi'_i$ are characteristic formulas of equivalence classes, all the points in $\pi_x((0,1)) \cup \pi_y((0,1))$ are logically equivalent, and $\pi_x(1) \lequiv \pi_y(1)$. Therefore, $\pi_x \hat \lequiv \pi_y$.
    
    \noindent For $k > 1$, consider the sub-paths $\pi_x^a = \subpath{\pi_x}{[0,s_1]}$ and $\pi_x^b = \subpath{\pi_x}{[s_1,1]}$.
    By Lemma~\ref{lem:paths}, both sub-paths are simplicial.
    By the previous case, there is a simplicial path $\pi_y^a$ with $\pi_x^a \hat \lequiv \pi_y^a$.
    In particular, we have $\pi_x^a(1) \lequiv \pi_y^a(1)$.
    Noting that $\pi_x^a(1) = \pi_x(s_1) = \pi_x^b(0)$, we now apply the inductive hypothesis to the points $\pi_x^b(0)$, $\pi_y^a(1)$ and the simplicial path $\pi_x^b$, obtaining the simplicial path $\pi_y^b$ with $\pi_y^b(0) = \pi_y^a(1)$ and $\pi_x^b \hat \lequiv \pi_y^b$.
    Let $\pi_y = \pi_y^a;\pi_y^b$. By Lemma~\ref{lem:paths}, $\pi_y$ is simplicial, and we have  $\pi_x = \pi_x^a; \pi_x^b \hat \lequiv \pi_y^a;\pi_y^b = \pi_y$, proving the thesis. 
\end{proof}

\noindent
{\bf Theorem~\ref{thm:bisimilar-are-logically-equivalent}.}
{\em
Each simplicial bisimulation is included in logical equivalence.
}

\begin{proof}
    Given a simplicial bisimulation $\sim$, we need to show that for every formula $\phi$ and all points $x$ and $y$, if $x \sim y$ then $\model{X}, x \models \phi \iff \model{X}, y \models \phi$.
    We use induction on the structure of $\phi$.
    The cases for atomic propositions and Boolean operations are trivial, thus omitted.
    The case for the $\Box$ operator is also omitted, as $\Box$ is derived from $\slreach$ by Theorem~\ref{thm:near-derived-from-slreach}.
    
    \noindent
    Suppose $x \sim y$ and $\model{X},x \vDash \phi = \slreach(\phi_1,\phi_2)$.
    By Lemma~\ref{lem:piecewise-linear-semantics}, there is a piecewise linear path $\pi_x$ with $\pi_x(0) = x$, $\pi_x((0,1)) \subseteq \sem{\phi_1}$, and $\pi_x(1)\in \sem{\phi_2}$. 
    By Lemma~\ref{lem:pl-are-simplicial}, $\pi_x$ is simplicial. 
    By Condition~\ref{cond:bisim-forth} of Definition~\ref{def:simplicial-bisimulation}, and the fact that $x \sim y$, there is a simplicial path $\pi_y$ with $\pi_y(0) = y$, such that $\pi_x \hat \sim \pi_y$. 
    For each $r \in (0,1)$, we have $\pi_x(r) \sim \pi_y(r)$ and $\model{X},\pi_x(r) \models \phi_1$.
    By induction hypothesis, $\model{X},\pi_y(r) \vDash \phi_1$, therefore $\pi_y((0,1)) \subseteq \sem{ \phi_1 }$.
    We also have $\model{X},\pi_x(1) \vDash \phi_2$ and $\pi_x(1) \sim \pi_y(1)$. Thus, by induction hypothesis, $\pi_y(1) \in \sem{\phi_2}$.
    Summing up, we have $\model{X}, y \vDash \slreach(\phi_1,\phi_2)$, concluding one direction of our proof.
    Next, we should prove that conversely, if $\model{X},y \vDash \slreach(\phi_1,\phi_2)$ then $\model{X}, x \vDash \slreach(\phi_1,\phi_2)$. The argument is similar to the other case, using Condition~\ref{cond:bisim-back} of Definition~\ref{def:simplicial-bisimulation} instead of Condition~\ref{cond:bisim-forth}.
\end{proof}

\end{document}